\numberwithin{equation}{section}
\let\@wraptoccontribs\wraptoccontribs
\theoremstyle{plain}
\newtheorem{Thm}[equation]{Theorem}
\newtheorem{Lem}[equation]{Lemma}
\newtheorem{Prop}[equation]{Proposition}
\newtheorem{Conj}[equation]{Conjecture}
\newtheorem*{Thm-non}{Theorem}
\theoremstyle{definition}
\newtheorem{Rem}[equation]{Remark}
\newcommand{\osc}[1]{\mathbf{#1}}
\newcommand{\oa}{\osc{a}}
\newcommand{\ob}{\osc{b}}
\newcommand{\oc}{\osc{c}}
\newcommand{\oad}{\osc{\bar{a}}}
\newcommand{\obd}{\osc{\bar{b}}}
\newcommand{\ocd}{\osc{\bar{c}}}
\newcommand{\vp}{\bar{\mathbf{v}}}
\newcommand{\vm}{\mathbf{v}}
\renewcommand{\wp}{\bar{\mathbf{w}}}
\newcommand{\wm}{\mathbf{w}}
\newcommand{\up}{\bar{\mathbf{u}}}
\newcommand{\um}{\mathbf{u}}
\newcommand{\ap}{\bar{\mathbf{A}}}
\newcommand{\am}{{\mathbf{A}}}
\newcommand{\bp}{\bar{\mathbf{B}}}
\newcommand{\bm}{{\mathbf{B}}}
\newcommand{\kp}{\bar{\mathbf{K}}}
\newcommand{\km}{{\mathbf{K}}}
\newcommand{\qp}{\bar{\mathbf{Q}}}
\newcommand{\qm}{{\mathbf{Q}}}
\newcommand{\kpp}{\kp^\circ}
\newcommand{\kmm}{\km^\circ}
\newcommand{\idb}{\JD }
\newcommand{\id}{{\ID }}
\newcommand{\idg}{{\mathrm{G}} }
\newcommand\iso{\,\vphantom{j^{X^2}}\smash{\overset{\sim}{\vphantom{\rule{0pt}{0.20em}}\smash{\longrightarrow}}}\,}
\DeclareMathOperator{\diag}{diag}
\newcommand*\pFq[6][8]{%
    \begingroup
    \pFqmuskip=#1mu\relax
    \mathcode`\,=\string"8000
    \begingroup\lccode`\~=`\,
    \lowercase{\endgroup\let~}\pFqcomma
    {}_{#2}F_{#3}{\left(\genfrac..{0pt}{}{#4}{#5};#6\right)}%
    \endgroup}
\newcommand{\pFqcomma}{\mskip\pFqmuskip}
\def\be{\begin{eqnarray}}
\def\ee{\end{eqnarray}}
\newcommand{\gr}[1]{{ | {#1} | }}
\newcommand{\sfrac}[2]{{\textstyle\frac{#1}{#2}}}
\newcommand{\Qop}{\mathrm{Q}}
\newcommand{\Pop}{\mathrm{P}}
\newcommand{\ID}{\mathrm{Id}}
\newcommand{\JD}{\mathrm{J}}
\newcommand{\rtt}{\mathrm{rtt}}
\newcommand{\End}{\mathrm{End}}
\newcommand{\gl}{\mathfrak{gl}}
\newcommand{\ssl}{\mathfrak{sl}}
\newcommand{\BZ}{\mathbb{Z}}
\newcommand{\BC}{\mathbb{C}}
\newcommand{\bL}{\bar{L}}
\g@addto@macro\bfseries{\boldmath}
\m@th\displaystyle{##}$}{$\m@th\displaystyle{##}$\hfil}{\lbrace}{.}
\newcommand{\fosp}{\mathfrak{osp}}
\newcommand{\VV}{{\mathsf{V}}}
\newcommand{\sfv}{{\mathsf{v}}}
\newcommand{\Parity}{\Upsilon_V}
\newcommand{\parity}{\Upsilon_{\mathsf{V}}}
\newcommand{\ossc}{\mathrm{osc}}
\begin{document}

\title[Orthosymplectic superoscillator Lax matrices]
{\large{\textbf{Orthosymplectic superoscillator Lax matrices}}}

\author{Rouven Frassek}
\address{R.F.: University of Modena and Reggio Emilia, FIM, Via G.~Campi~213/b, 41125 Modena, Italy}
\email{rouven.frassek@unimore.it}

\author{Alexander Tsymbaliuk}
\address{A.T.: Purdue University, Department of Mathematics, West Lafayette, IN 47907, USA}
\email{sashikts@gmail.com}

\begin{abstract}
We construct Lax matrices of superoscillator type that are solutions of the RTT-relation for the rational
orthosymplectic $R$-matrix, generalizing orthogonal and symplectic oscillator type Lax matrices previously constructed
by the authors in~\cite{f,ft1,fkt}. We further establish factorisation formulas among the presented solutions.
\end{abstract}

\maketitle
\tableofcontents


\section{Introduction}\label{sec:intro}


\subsection{Summary}\label{ssec:summary}
\

The study of supersymmetric solutions to the Yang-Baxter equation goes back to the works of Kulish and Sklyanin in the
early 80's, see e.g.~\cite{KS}, who introduced the $R$-matrix that generates the supersymmetric Yangian of $\gl(n|m)$,
see~\cite{N}. As common for Lie superalgebras, the underlying vector space is equipped with a $\BZ_2$-grading to
incorporate bosonic and fermionic degrees of freedom. Similarly to the purely bosonic case, there exists an evaluation map
  $\mathrm{ev}\colon Y(\gl(n|m))\to U(\gl(n|m))$
from the Yangian of $\gl(n|m)$ into the universal enveloping algebra of the Lie superalgebra $\gl(n|m)$, see~\cite{K,N},
which facilitates the study of the spectrum of supersymmetric spin chains. In particular, the algebraic Bethe ansatz for
a large class of representations in the quantum space and the construction of the corresponding transfer matrices are
well-understood, see~\cite{K}. The same is true for the functional relations ($T$-systems and $QQ$-systems) among such
transfer matrices and $Q$-operators, see \cite{tsuboi1,tsuboi2,KSZ,tsuboi3} as well as \cite{kuniba} for an overview.

The construction of the $Q$-operators has been carried out more recently in \cite{Frasseksusy} employing certain degenerate
Lax matrices of superoscillator type (see also \cite{klt} for a different approach). The Lax matrices for $Q$-operators in
the trigonometric case were obtained in \cite{Bazhanov:2008yc} for $n+m=3$ and in \cite{Tsuboi:2012sz,Tsuboi:2019vvv}
for arbitrary $n,m$.
The degenerate solutions of the Yang-Baxter equation with a rational (resp.\ trigonometric) $R$-matrix arise naturally by
taking certain limits of the evaluation representation of the Yangian (resp.\ quantum affine algebra) in the parabolic Verma
modules of the underlying Lie algebra, realized in terms of superoscillator algebras.
The crucial difference between the solutions that arise from the ordinary Yangian and the degenerate solutions is that
the coefficients of the leading power of the spectral parameter are not of full rank in degenerate case. Therefore,
this class of solutions does not arise through the ordinary Yangian, but is rather related to the so-called (RTT antidominantly)
shifted Yangians, as has been recently realized in~\cite{fpt} (cf.~\cite{cgy} for an interpretation via the $4d$ Chern-Simons theory).
The latter are usually defined in terms of Drinfeld's current realization, see e.g.~\cite[Appendix B]{BFN}, and
the identification with the aforementioned RTT ones goes through the Gauss decomposition of the generating matrix $T(x)$ as
in the ordinary case~\cite{bk}, see~\cite[Theorem 2.54]{fpt}. For the bosonic case of $\mathfrak{gl}(n)$, the degenerate Lax matrices
that are linear in the spectral parameter were constructed in \cite{fp}, while the reconstruction of degenerate Lax matrices
at any order of the spectral parameter  has been achieved in \cite{fpt} using the results of \cite{BFN}. Using
the $S(n)$-invariance of the rational $R$-matrix of $\gl_n$ one can further obtain other degenerate Lax matrices which
do not admit a Gauss decomposition (and thus are no longer directly related to the \emph{shifted Yangians}).
Thus, the transfer matrices are constructed from representations of the ordinary Yangian,
while the $Q$-operators are constructed from representations of the shifted Yangian, cf.~\cite{hz} and references therein.
We note that this approach allows to deduce functional relations among transfer matrices and $Q$-operators directly from
the Yang-Baxter equation using certain factorisation properties of the Lax matrices combined with the BGG-type resolutions
for the underlying Lie algebras, see \cite{BLZ,BHK,fkt}.

The situation changes drastically for the Yangians of the orthosymplectic Lie superalgebras $\fosp(N|2m)$ which unify
the bosonic cases in $BCD$-types, that is, Yangians of orthogonal $\mathfrak{so}_N$ and symplectic $\mathfrak{sp}_{2m}$
Lie algebras. Similarly to $BCD$-type, the evaluation map no longer exists and representations of the orthosymplectic
Lie superalgebra cannot be always lifted to representations of the corresponding Yangian. The $R$-matrix in the defining
vector representation was obtained in \cite{Arnaudon}, generalizing the $BCD$-type $R$-matrix of~\cite{zz}, but other
solutions of the Yang-Baxter relation or the RTT-relation (for the Lax matrix) are scarce, see \cite{fikk,ikk} and
references therein. The algebraic Bethe ansatz has been obtained for spin chains in the defining vector representation
in \cite{GM}, but little is known about other representations as well as the underlying  functional relations.
A glimpse towards the latter appeared in the study~\cite{bcf} of the AdS/CFT-correspondence in relation to the quantum
spectral curve for $AdS_4/CFT_3$, but remain to be confirmed from the first principles.
Further results for more general Lie superalgebras were recently obtained in \cite{Tsuboi:2023}.
The full understanding of the $QQ$-system may yield new methods of solving Bethe equations similar to the method
developed in \cite{MV} for $\gl(n|m)$.

In this paper, we enlarge the class of representations of the orthosymplectic (shifted) Yangians by introducing several
Lax matrices of superoscillator type, which can be used to construct transfer matrices and $Q$-operators. We anticipate
the BGG-type functional relations among those, generalizing our recent $BCD$-type results of~\cite{fkt}.

Finally, let us note that there has been an increased mathematical interest in the theory of quantum supergroups.
In the context of orthosymplectic Yangians of $\fosp(N|2m)$ specifically, their Drinfeld realizations were established
recently in~\cite{mr} and~\cite{m} for the cases $N=1$ and $N>3$ with the standard parity, respectively.
In the sequel paper~\cite{ft2}, we present uniform Drinfeld realizations of orthosymplectic Yangians for any $N,m$
and, most importantly, any parity sequence.


\subsection{Outline}\label{ssec:outline}
\

The structure of the present paper is the following:

\medskip
\noindent
$\bullet$
In Section~\ref{sec:Atype}, we recall the key results of~\cite{Frasseksusy} on $\gl(n|m)$-type Lax matrices
that serve as motivation and prototype for our new constructions in the orthosymplectic type.

\medskip
\noindent
$\bullet$
In Section~\ref{sec:setup-and-invariance}, we set the notation for the orthosymplectic $R$-matrix and the corresponding
Lax matrices, as well as discuss the invariance of this $R$-matrix that is needed for our latter results.

\medskip
\noindent
$\bullet$
In Section~\ref{sec:linear-Lax}, we construct some linear superoscillator Lax matrices of orthosymplectic type.
First, we construct a degenerate linear Lax matrix in Theorem~\ref{thm:degenerate-linear-Lax}. Fusing two of those,
we then construct a non-degenerate linear Lax matrix in Proposition~\ref{prop:lin}, whose normalized limits recover
back the degenerate Lax matrices, see Remark~\ref{rem:normalized-limit-osp}. In the special cases $m=0$ or $n=0$,
we recover the corresponding orthogonal and symplectic Lax matrices of~\cite{f,ft1,fkt}, respectively.

\medskip
\noindent
$\bullet$
In Section~\ref{sec:quadratic-Lax}, we investigate some quadratic orthosymplectic Lax matrices of superoscillator type.
First, fusing two degenerate linear Lax matrices from Section~\ref{sec:linear-Lax}, we construct a Lax matrix of size
$(N+2m)\times(N+2m)$ for even $N$ in Theorem~\ref{prop:quadL}. We call this matrix degenerate quadratic Lax as its diagonal
is $\sim(x^2,x,\ldots,x,1)$ with respect to the spectral parameter $x$. A further degeneration of this matrix, depending
only in $N+2m-2$ pairs of superoscillators, is obtained in Proposition~\ref{prop:quadratic-degenerate-specialized}.
A similar formula provides an orthosymplectic Lax matrix for odd $N$, see Conjecture~\ref{conj:odd-N}. Finally, fusing
two degenerate quadratic Lax matrices we derive explicit non-degenerate quadratic orthosymplectic Lax matrices in
Proposition~\ref{prop:nondeg-quadratic-small} and Theorem~\ref{prop:nondeg-quadratic-big},
whose normalized limits recover back the degenerate quadratic Lax matrices, see Remark~\ref{rem:normalized-limit-osp-quadratic}.

\medskip
\noindent
$\bullet$
In Appendix~\ref{sec:twists}, we present explicit formulas~(\ref{eq:T-twist-even},~\ref{eq:T-twist-odd})
and~(\ref{eq:Q-twist-1},~\ref{eq:Q-twist-2},~\ref{eq:Q-twist-3}) for the twists needed to define both
the transfer matrices and the $Q$-operators, as mentioned in the Introduction above.

\medskip
\noindent
While the constructions of Sections~\ref{sec:linear-Lax}--\ref{sec:quadratic-Lax} are presented for the specific parity
sequences~\eqref{eq:our-osp-grading} or~\eqref{eq:our-osp-grading-odd}, similar Lax matrices exist for other parity sequences as well,
according to Remarks~\ref{rem:different parities osp},~\ref{rem:linear-Lax-all-parities},~\ref{rem:quadratic-Lax-all-parities}.


\subsection{Acknowledgement}\label{ssec:acknowl}
\

R.F.\ thanks the organizers of the \textit{10th Bologna Workshop on Conformal Field Theory and Integrable Models}
where the results of this note were presented (the slides are available at~\cite{ftalk}).

The work of R.F.\ was partially supported in part by the INFN grant Gauge and String Theory (GAST),
by the PRIN project CUP-E53D23002220006 (Next Generation EU), the FAR UNIMORE project CUP-E93C23002040005,
and by the INdAM–GNFM project codice CUP-E53C22001930001.
The work of A.T.\ was partially supported by NSF Grants DMS-$2037602$ and DMS-$2302661$.

We are grateful to the anonymous referees for useful suggestions that improved the exposition.


\section{General linear Lax matrices}\label{sec:Atype}

The first results on superoscillator type Lax matrices for $Q$-operators and their factorisation formulas were presented
in \cite{Frasseksusy} for the rational $R$-matrices of $\gl(n|m)$-type. The Lax matrices for the trigonometric case were
obtained in \cite{Bazhanov:2008yc} for $n+m=3$ and in \cite{Tsuboi:2012sz,Tsuboi:2019vvv} for arbitrary $n,m$.
In this section, we briefly recall the results of \cite{Frasseksusy} that are relevant to the rest of the paper.

Fix $n,m\geq 0$ and consider a superspace $\VV=\VV_{\bar{0}}\oplus \VV_{\bar{1}}$ with a $\BC$-basis
$\sfv_1,\ldots,\sfv_{n+m}$ such that each $\sfv_i$ is either \emph{even} (that is, $\sfv_i\in \VV_{\bar{0}}$) or
\emph{odd} (that is, $\sfv_i\in \VV_{\bar{1}}$) and $\dim(\VV_{\bar{0}})=n, \dim(\VV_{\bar{1}})=m$. For
$1\leq i\leq n+m$, we define $|i|:=|\sfv_i|\in \BZ_2$. We define the \emph{parity sequence} associated to $\VV$ via
\begin{equation}\label{eq:parity-sequence}
  \parity:=\big(|\sfv_1|,\ldots,|\sfv_{n+m}|\big)\in \big\{\bar{0},\bar{1}\big\}^{n+m} \,.
\end{equation}
For a superalgebra $A$ and homogeneous elements $a,a'\in A$, their \emph{supercommutator} is defined as
\begin{equation}\label{eq:super commutator}
  [a,a']=aa'-(-1)^{|a| |a'|}\, a'a \,,
\end{equation}
where $|a|$ denotes the $\BZ_2$-grading of $a$. Given two superspaces $A=A_{\bar{0}}\oplus A_{\bar{1}}$
and $B=B_{\bar{0}}\oplus B_{\bar{1}}$, their tensor product $A\otimes B$ is also a superspace with
  $(A\otimes B)_{\bar{0}}=A_{\bar{0}}\otimes B_{\bar{0}}\oplus A_{\bar{1}}\otimes B_{\bar{1}}$
and
  $(A\otimes B)_{\bar{1}}=A_{\bar{0}}\otimes B_{\bar{1}}\oplus A_{\bar{1}}\otimes B_{\bar{0}}$.
Furthermore, if $A$ and $B$ are superalgebras, then $A\otimes B$ is also a superalgebra,
called the \emph{graded tensor product} of $A$ and $B$, with the multiplication defined by
\begin{equation}\label{eq:graded tensor product}
  (a\otimes b)(a'\otimes b')=(-1)^{|b| |a'|}\, (aa')\otimes (bb')
\end{equation}
for any homogeneous elements $a\in A_{|a|}, a'\in A_{|a'|}, b\in B_{|b|}, b'\in B_{|b'|}$.

Let $\Pop\colon \VV\otimes \VV\to \VV\otimes \VV$ be the permutation operator defined by
\begin{equation}\label{eq:P}
  \Pop \, = \sum_{i,j=1}^{n+m} (-1)^{\gr{j}}\, e_{ij}\otimes e_{ji} \,,
\end{equation}
whose action is explicitly given by:
\begin{equation}\label{eq:Pop-explicit}
  \Pop(\sfv_j\otimes \sfv_i)=(-1)^{\gr{i} \gr{j}}\, \sfv_i\otimes \sfv_j \,.
\end{equation}
Consider the corresponding \emph{rational $R$-matrix} (of general linear type, or super $A$-type for short):
\begin{equation}\label{eq:Atype-Rmatrix}
  R(x)=\mathsf{R}^\VV(x)=x\ID+\Pop \,,
\end{equation}
which satisfies the famous \emph{Yang-Baxter equation} (with a spectral parameter):
\begin{equation}\label{eq:YBE}
  R_{12}(x)R_{13}(x+y)R_{23}(y)=R_{23}(y)R_{13}(x+y)R_{12}(x) \,.
\end{equation}
For any superalgebra $A$, an even matrix $L(x)=L^\VV(x)=(L_{ij}(x))_{i,j=1}^{n+m}\in \End\,\VV\otimes A[[x,x^{-1}]]$
is called an ($A$-valued) \emph{Lax matrix} if it satisfies the corresponding \emph{RTT-relation} with $R(x)$
of~\eqref{eq:Atype-Rmatrix}
\begin{equation}\label{eq:rttA}
  R_{12}(x-y)L_1(x)L_2(y)=L_2(y)L_1(x)R_{12}(x-y) \,,
\end{equation}
viewed as an equality in $\End\,\VV \otimes \End\,\VV \otimes A[[x,y,x^{-1},y^{-1}]]$. Coefficient-wise,
the equation~\eqref{eq:rttA} is equivalent to (see~\cite{gowthesis}) the well-known system of equations
for all $1\leq i,j,k,\ell \leq n+m$:
\begin{equation}\label{eq:yangianalgA}
  [L_{ij}(x),L_{k\ell}(y)]=
  \frac{(-1)^{\gr{i}\gr{j}+\gr{i}\gr{k}+\gr{j}\gr{k}}}{x-y} \Big(L_{kj}(y)L_{i\ell}(x)-L_{kj}(x)L_{i\ell}(y)\Big) \,.
\end{equation}

\begin{Rem}\label{rem:sign convention}
(a) Here, we identify the matrix $(L_{ij}(x))_{i,j=1}^{n+m}$ with
  $\sum_{i,j=1}^{n+m} (-1)^{|i| |j|+|j|}\, e_{ij}\otimes L_{ij}(x)$.
Evoking~\eqref{eq:graded tensor product}, the extra sign $(-1)^{|i| |j|+|j|}$ ensures that the product
of matrices is calculated as usual. The above ``even'' condition means that $\BZ_2$-grading of all
coefficients of $L_{ij}(x)$ is $|i|+|j|$.

\medskip
\noindent
(b) The data of a Lax matrix $L^\VV(x)=(L_{ij}(x))_{i,j=1}^{n+m}$ with $L_{ij}(x)\in \delta_{ij}+x^{-1}A[[x^{-1}]]$
for all $i,j$ is equivalent to an algebra homomorphism $Y^\rtt(\gl(\VV))\to A$ from an RTT super Yangian of $\gl(\VV)$.

\medskip
\noindent
(c) Unless we want to emphasize the dependence on $\parity$, the superscript $\VV$ will be ignored.
\end{Rem}

\begin{Rem}\label{rem:different parities}
Let $\tilde{\VV}$ be another superspace with a $\BC$-basis $\tilde{\sfv}_1,\ldots,\tilde{\sfv}_{n+m}$
such that each $\tilde{\sfv}_i$ is even or odd and
  $\dim(\VV_{\bar{0}})=\dim(\tilde{\VV}_{\bar{0}}), \dim(\VV_{\bar{1}})=\dim(\tilde{\VV}_{\bar{1}})$.
Pick a permutation $\sigma\in S(n+m)$ such that $\sfv_i\in \VV$ and $\tilde{\sfv}_{\sigma(i)}\in \tilde{\VV}$
have the same $\BZ_2$-grading for all $1\leq i\leq n+m$, and define a superspace isomorphism
$\idb_\sigma\colon \VV \iso \tilde{\VV}$ via $\sfv_i\mapsto \tilde{\sfv}_{\sigma(i)}$.
The corresponding $R$-matrices~\eqref{eq:Atype-Rmatrix} are related via
\begin{equation}\label{eq:Rmatrix-intertwiner}
  \mathsf{R}^{\tilde{\VV}}(x)=(\idb_\sigma\otimes \idb_\sigma)\, \mathsf{R}^\VV(x)\, (\idb_\sigma\otimes \idb_\sigma)^{-1} \,.
\end{equation}
As a result, if $L^\VV(x)=(L_{ij}(x))_{i,j=1}^{n+m}$ is a solution of~\eqref{eq:rttA}, then
$L^{\tilde{\VV}}(x):=\idb_\sigma L^\VV(x) \idb_\sigma^{-1}$ is a solution of~\eqref{eq:rttA}
for $\tilde{\VV}$ used instead of $\VV$. In other words, having constructed some Lax matrices, a natural
$S(n+m)$-symmetry allows for analogous Lax matrices for all parity sequences~\eqref{eq:parity-sequence}.
\end{Rem}

Among other Lax matrices, the following family $\{L^\VV_a(x)\}_{a=0}^{n+m}$ was constructed in
\cite{Frasseksusy}\footnote{It is obtained from the \emph{linear canonical} $\mathbf{L}$-operator of
\cite[(2.20)]{Frasseksusy} for the trivial representation of the additional generators of $\gl(p|q)$ in
\emph{loc.cit.}, i.e.\ setting $E_{ab}\mapsto 0$, and an appropriate shift of the spectral parameter $x$.\label{fn1}}:
\begin{equation}\label{eq:one matrices1}
  L_a(x) = L^\VV_a(x) = \,
  \left(\begin{BMAT}[5pt]{c:c}{c:c}
    x\ID_{a}-\kp\km & \kp \\
    -\km & \ID_{n+m-a}
   \end{BMAT}\right)
\end{equation}
with
\begin{equation}\label{eq:kpA}
  \kp =
  \left(\begin{array}{cccc}
    \bar\xi_{1,a+1} & \bar\xi_{1,a+2} & \cdots & \bar \xi_{1,n+m} \\
    \bar\xi_{2,a+1} & \bar\xi_{2,a+2} & \cdots & \bar \xi_{2,n+m} \\
    \vdots & \vdots & \ddots& \vdots \\
    \bar\xi_{a,a+1} & \bar\xi_{a,a+2} & \cdots & \bar \xi_{a,n+m}
  \end{array}\right)
\end{equation}
and
\begin{equation}\label{eq:kmA}
  \km =
  \left(\begin{array}{cccc}
     (-1)^\gr{1} \xi_{a+1,1} & (-1)^\gr{2} \xi_{a+1,2} & \cdots & (-1)^\gr{a} \xi_{a+1,a} \\
     (-1)^\gr{1} \xi_{a+2,1} & (-1)^\gr{2} \xi_{a+2,2} & \cdots & (-1)^\gr{a} \xi_{a+2,a} \\
     \vdots & \vdots & \ddots & \vdots \\
     (-1)^\gr{1} \xi_{n+m,1} & (-1)^\gr{2} \xi_{n+m,2} & \cdots & (-1)^\gr{a} \xi_{n+m,a}
  \end{array}\right) .
\end{equation}
Here, $(\xi_{ij},\bar\xi_{k\ell})$ are superoscillators with $\BZ_2$-grading $\gr{\xi_{ij}}=\gr{i}+\gr{j}$
and $\gr{\bar\xi_{k\ell}}=\gr{k}+\gr{\ell}$ that obey the following commutation relations:
\begin{equation}
  [\xi_{ij},\bar\xi_{k\ell}] = \delta_{i\ell}\delta_{jk} \,, \qquad
  [\xi_{ij},\xi_{\ell k}] = 0 \,, \qquad
  [\bar\xi_{k\ell},\bar\xi_{ji}] = 0
\end{equation}
for any $1\leq j,k\leq a<i,\ell \leq n+m$, with the supercommutator $[-,-]$ defined in~\eqref{eq:super commutator}.

It was further noticed in \cite{Frasseksusy} that for any $0\leq a\leq n+m$, one has a total of $\binom{n+m}{a}$
Lax matrices analogous to~\eqref{eq:one matrices1}, see \cite[(2.12)]{Frasseksusy} onwards, corresponding to the
number of choices to distribute $a$ spectral parameters $x$ on the main diagonal of the Lax matrix of size $n+m$.
Evoking the invariance of the $R$-matrix~\eqref{eq:Atype-Rmatrix} under the symmetric group $S(n+m)$, see~\eqref{eq:Rmatrix-intertwiner},
these Lax matrices can be obtained by permuting rows and columns of the one in \eqref{eq:one matrices1}, see
Remark~\ref{rem:different parities}. The following family of such Lax matrices will be relevant to us in the following:
\begin{equation}\label{eq:two matricesb}
  \bL_a(y) = \bL^\VV_a(y) = \,
  \left(\begin{BMAT}[5pt]{c:c}{c:c}
    \ID_a & \kp \\
    \km & y\ID_{n+m-a} + \km\kp
  \end{BMAT}\right) \,.
\end{equation}

Let us first explain how these Lax matrices are related to those in \eqref{eq:one matrices1}.
To this end, let $\bar{\VV}$ be the superspace with a basis $\{\bar{\sfv}_i\}_{i=1}^{n+m}$
whose parity sequence~\eqref{eq:parity-sequence} is opposite to that of $\VV$, that is
\begin{equation}\label{eq:vvb}
  |\bar{\sfv}_i|=|\sfv_{n+m+1-i}| \qquad \forall\, 1\leq i\leq n+m \,.
\end{equation}
The corresponding rational $R$-matrices are related via~\eqref{eq:Rmatrix-intertwiner} with $\sigma(i)=n+m+1-i$,
so that
\begin{equation}\label{eq:idb-matrix}
  \idb_\sigma = \idb_{n+m} =
  \left(\begin{BMAT}(e)[0.8pt]{cccc}{cccc}
    0 & \cdots & 0 & 1 \\
    \vdots & \iddots & \iddots & 0 \\
    0 & \iddots & \iddots & \vdots \\
    1 & 0 & \cdots & 0
  \end{BMAT}\right) \,.
\end{equation}
Then, as noted in Remark~\ref{rem:different parities}, the matrix $\hat{L}^{\tilde{\VV}}_{n+m-a}(z)$
obtained from \eqref{eq:two matricesb} through
\begin{equation}\label{eq:hat-L}
  \hat{L}^{\bar{\VV}}_{n+m-a}(x) = \idb_{n+m} \bL^\VV_{a}(x) \idb_{n+m}^{-1}
\end{equation}
is Lax and has opposite $\BZ_2$-grading to that of $\bar{L}^\VV_{a}(x)$, i.e.\ the underlying vector superspaces
have opposite parity sequences. Using the notation $\bar{a}=n+m-a$ and $i'=n+m+1-i$, we find that
\begin{equation}
  \hat{L}^{\bar{\VV}}_{\bar a}(x) = \,
  \left(\begin{BMAT}[5pt]{c:c}{c:c}
    x\ID_{\bar a}+\qm\qp & \qm \\
    \qp & \ID_{n+m-\bar a}
  \end{BMAT}\right)
\end{equation}
with
\begin{equation}
  \qp =
  \left(\begin{array}{cccc}
    \bar\xi_{(\bar a+1)',1'} & \bar\xi_{(\bar a+1)',2'} & \cdots & \bar \xi_{(\bar a+1)',{\bar a}'} \\
    \bar\xi_{(\bar a+2)',1'} & \bar\xi_{(\bar a+2)',2'} & \cdots & \bar \xi_{(\bar a+2)',{\bar a}'} \\
    \vdots & \vdots & \ddots & \vdots \\
    \bar\xi_{(n+m)',1'} & \bar\xi_{(n+m)',2'} & \cdots & \bar \xi_{(n+m)',{\bar a}'}
  \end{array}\right)
\end{equation}
and
\begin{equation}
  \qm =
  \left(\begin{array}{cccc}
    (-1)^\gr{(\bar a+1)'}\xi_{1',(\bar a+1)'} & (-1)^\gr{(\bar a+2)'} \xi_{1',(\bar a+2)'} & \cdots &(-1)^\gr{(n+m)'} \xi_{1',(n+m)'} \\
    (-1)^\gr{(\bar a+1)'} \xi_{2',(\bar a+1)'} & (-1)^\gr{(\bar a+2)'} \xi_{2',(\bar a+2)'}  & \cdots & (-1)^\gr{(n+m)'} \xi_{2',(n+m)'} \\
    \vdots & \vdots & \ddots & \vdots \\
    (-1)^\gr{(\bar a+1)'} \xi_{{\bar a}',(\bar a +1)'} & (-1)^\gr{(\bar a+2)'} \xi_{{\bar a}',(\bar a+2)'}  & \cdots &(-1)^\gr{(n+m)'} \xi_{{\bar a}',(n+m)'}
  \end{array}\right) .
\end{equation}
Applying further the particle-hole transformation
\begin{equation}\label{eq:p.h.-needed}
  \bar \xi_{i'j'} \mapsto - (-1)^\gr{j'}\xi_{ij} \,,\qquad
  \xi_{j'i'} \mapsto (-1)^{\gr{i'}}\bar\xi_{ji}
  \qquad \forall \, 1\leq j\leq \bar{a}<i\leq n+m \,,
\end{equation}
we obtain the Lax matrix of \eqref{eq:one matrices1} defined on the vector space $\bar{\VV}$ with the opposite grading, i.e.
\begin{equation}\label{eq:one matrices3}
  L^{\bar{\VV}}_{\bar a}(x) = \hat{L}^{\bar{\VV}}_a(x)|_{p.h.} \,.
\end{equation}
It thus follows that $L^\VV_{a}(x)$ of~\eqref{eq:one matrices1} and $\bL^{\VV}_{a}(x)$ of~\eqref{eq:two matricesb}
satisfy the same RTT-relation \eqref{eq:rttA}.

\begin{Rem}
The Lax matrix \eqref{eq:two matricesb} is obtained through a particle-hole transformation~\eqref{eq:p.h.-needed}
from that of~\cite[(3.4)]{Frasseksusy} for $I=\{1,\ldots,a\}, J=\{a+1,\ldots,n+m\}$, the trivial representation
of the additional copy of $\gl(p|q)$, and an appropriate shift of the spectral parameter, cf.~Footnote~\ref{fn1}.
\end{Rem}

Let us now consider two copies of mutually supercommuting superoscillators
  $\big\{\big(\xi^{[r]}_{ij},\bar\xi^{[r]}_{ji}\big)\big\}_{1\leq j\leq a}^{a<i\leq n+m}$,
where the superscript $r=1,2$ indicates whether they appear in $L^{\VV,[1]}_a(x)$ or $\bL^{\VV,[2]}_a(y)$,
respectively. The subsequent factorisation was considered in~\cite[\S3.1]{Frasseksusy}:
\begin{equation}\label{eq:linear factorization1}
  L_a^{\VV,[1]}(x)\bL_a^{\VV,[2]}(y) = \,
    \left(\begin{BMAT}[5pt]{c:c}{c:c}
      x\ID_{a}-\kp'_1\km'_1 & \left((y-x)\ID_a+\kp'_1\km'_1\right)\kp'_1 \\
      - \km'_1 & y\ID_{n+m-a}+\km_1'\kp'_1
    \end{BMAT}\right)\,
    \left(\begin{BMAT}[5pt]{c:c}{c:c}
      \ID_{a} & \kp'_2\\
      0 & \ID_{n+m-a}
    \end{BMAT}\right)
\end{equation}
where
\begin{equation}\label{eq:tra121}
\begin{split}
  & \km'_1=\km_1-\km_2 \,,\qquad \kp'_1=\kp_1 \,, \\
  & \kp'_2=\kp_2+\kp_1 \,,\qquad \km'_2=\km_2 \,,
\end{split}
\end{equation}
and the subscript $r=1,2$ denotes the corresponding family of oscillators.
As noted in \cite[\S3.1]{Frasseksusy}, the generators~\eqref{eq:tra121} are related to those
in~\eqref{eq:one matrices1}--\eqref{eq:two matricesb} through a similarity~transformation:
\begin{equation}\label{eq:Atype-similarity}
  \km_r' = \mathbf{S}_a \km_r \mathbf{S}_a^{-1} \,,\qquad
  \kp_r' = \mathbf{S}_a \kp_r \mathbf{S}_a^{-1} \qquad (r=1,2)
\end{equation}
with
\begin{equation}\label{eq:S-Atype}
  \mathbf{S}_a = \exp\left[\sum_{1\leq i\leq a}^{a<j\leq n+m} \bar\xi_{ij}^{[1]} \xi_{ji}^{[2]}\right] \,.
\end{equation}
We note that all the summands in the exponent above are bosonic\footnote{Elements of a superalgebra are called
``bosonic'' or ``fermionic'' if their $\BZ_2$-degree is $\bar{0}$ or $\bar{1}$, respectively.} and pairwise supercommute.

\begin{Rem}
It immediately follows from~\eqref{eq:Atype-similarity}, but can be also directly checked from~\eqref{eq:tra121},
that the entries of the matrices $\km_r',\kp_r'$ encode mutually supercommuting pairs of superoscillators.
\end{Rem}

It follows that for any $x_1,x_2\in \BC$, the matrix
\begin{equation}\label{eq:ALax-non-degenerate}
  \mathcal{L}_{x_1,x_2}(x) = \mathcal{L}^\VV_{x_1,x_2}(x) =
  \left(\begin{BMAT}[5pt]{c:c}{c:c}
    (x+x_1)\ID_{a}-\kp_1\km_1 & \left((x_2-x_1)\ID_{a}+\kp_1\km_1\right)\kp_1 \\
     -\km_1 & (x+x_2)\ID_{n+m-a}+\km_1\kp_1
  \end{BMAT}\right)
\end{equation}
is a solution to the RTT-relation \eqref{eq:rttA}, hence, is Lax. Moreover, it arises through the fusion
\begin{equation}\label{eq:linear factorizationLax2}
  L_a^{\VV,[1]}(x+x_1)\bL_a^{\VV,[2]}(x+x_2)=
  \mathbf{S}_a \, \mathcal{L}^\VV_{x_1,x_2}(x) \,
  \left(\begin{BMAT}[5pt]{c:c}{c:c}
    \ID_{a} & \kp_2\\
    0 & \ID_{n+m-a}
  \end{BMAT}\right)
  \mathbf{S}_a^{-1}
\end{equation}
with the similarity transformation $\mathbf{S}_a$ of~\eqref{eq:S-Atype}.

\begin{Rem}\label{rem:Atype-normalized-limits}
Similarly to~\cite[\S8]{fkt}, we note that we can vice versa obtain the degenerate linear matrices $L_a^{\VV}(x)$
and $\bL_a^{\VV}(x)$ of~(\ref{eq:one matrices1},~\ref{eq:two matricesb}) from the non-degenerate linear Lax matrix
$\mathcal{L}^\VV_{x_1,x_2}(x)$ of~\eqref{eq:ALax-non-degenerate} via the
\emph{renormalized limit} procedures (which clearly preserve the property of being Lax):
\begin{equation}\label{eq:A renormalized limit}
\begin{split}
  & L^\VV_{a}(x) =
    \lim_{t\to \infty}\ \Big\{\mathcal{L}^\VV_{0,t}(x)\cdot \mathrm{diag}
    \Big(\underbrace{1,\ldots,1}_{a};\underbrace{\sfrac{1}{t},\ldots,\sfrac{1}{t}}_{n+m-a}\Big)\Big\} \,, \\
  & \bL_a^{\VV}(x) = \lim_{t\to \infty}\ \Big\{\mathrm{diag}
    \Big(\underbrace{\sfrac{1}{t},\ldots,\sfrac{1}{t}}_{a};\underbrace{1,\ldots,1}_{n+m-a}\Big)
    \cdot \mathcal{L}^\VV_{t,0}(x)\Big\}\Big|_{\bar\xi_{ij} \mapsto -\bar\xi_{ij} \,,\, \xi_{ij} \mapsto -\xi_{ij}}  \,.
\end{split}
\end{equation}
\end{Rem}


\section{Orthosymplectic Lax matrices}\label{sec:setup-and-invariance}

In this section, we set the notation for the orthosymplectic $R$-matrix and the corresponding Lax matrices,
as well as discuss the invariance of this $R$-matrix that will be instrumental later on.


\subsection{Orthosymplectic setup}
\

Fix $N,m\geq 0$, and consider the set $\mathbb{I}:=\{1,2,\ldots,N+2m\}$ equipped with an involution $'$:
\begin{equation}\label{eq:index-inv}
  i':=N+2m+1-i \,.
\end{equation}
Let $V$ be a superspace with a $\BZ_2$-homogeneous basis $v_1,\ldots,v_{N+2m}$ such that
\begin{equation}\label{eq:even+odd-dim}
  \dim(V_{\bar{0}})=N \,,\qquad \dim(V_{\bar{1}})=2m \,,
\end{equation}
and the grading is $\BZ_2$-symmetric in the following sense:
\begin{equation}\label{eq:grading-symmetry}
  |v_i|=|v_{i'}| \qquad \forall \, 1\leq i\leq N+2m \,.
\end{equation}
For the major part of our constructions (except for Subsections~\ref{ssec:odd case}--\ref{ss:facq}), we shall
assume that $N$ is even: $N=2n$. In this case, we pick the following specific $\BZ_2$-grading of $V$:
\begin{equation}\label{eq:our-osp-grading}
  \gr{i}:=|v_i|=
  \begin{cases}
    \bar{0} & \quad \text{for} \ 1\leq i\leq n \\
    \bar{1} & \quad \text{for} \ n+1\leq i\leq n+2m \\
    \bar{0} & \quad \text{for} \ n+2m+1\leq i\leq 2n+2m
   \end{cases}
\end{equation}
which corresponds to the following parity sequence, cf.~\eqref{eq:parity-sequence}:
\begin{equation}\label{eq:our-parity}
  \Parity =
  \Big(
    \underbrace{\bar{0},\ldots,\bar{0}}_{n}, \underbrace{\bar{1},\ldots,\bar{1}}_{2m},\underbrace{\bar{0},\ldots,\bar{0}}_{n}
  \Big) \,.
\end{equation}

Similarly to~\eqref{eq:P}, we consider the permutation operator $\Pop\colon V\otimes V \to V \otimes V$ defined by
\begin{equation}\label{eq:Pop}
  \Pop = \sum_{i,j=1}^{N+2m}(-1)^{\gr{j}}\, e_{ij}\otimes e_{ji} \,.
\end{equation}
We also consider the operator $\Qop \colon V\otimes V\to V\otimes V$ defined by
\begin{equation}\label{eq:Qop}
  \Qop = \sum_{i,j=1}^{N+2m} (-1)^{\gr{i}\gr{j}} \theta_i\theta_j \,  e_{ij}\otimes e_{i'j'} \,.
\end{equation}
Here, the sequence $\theta=\theta_V=(\theta_1,\ldots,\theta_{N+2m})$ of $\pm 1$'s is determined uniquely by the conditions
\begin{equation}\label{eq:grrel}
  \theta_{i'}=(-1)^\gr{i}\theta_i \,,\qquad \theta_{\leq (n+m)}=1 \,,
\end{equation}
so that
\begin{equation}\label{eq:theta}
  \theta=\theta_V=\Big(\underbrace{1,\ldots,1}_{n+m},\underbrace{-1,\ldots,-1}_{m},\underbrace{1,\ldots,1}_{n}\Big)
\end{equation}
for the specific $\BZ_2$-grading~\eqref{eq:our-osp-grading}. Explicitly, the action of $\Qop$ is given by:
\begin{equation}\label{eq:Qop-explicit}
  \Qop(v_a\otimes v_b)=
  \begin{cases}
    0 & \mbox{if } b\ne a' \\
    \sum_{i=1}^{N+2m} \theta_i \, v_i\otimes v_{i'} & \mbox{if } b=a' \,,\, a>n+m \\
    (-1)^{|a|}\sum_{i=1}^{N+2m} \theta_i\, v_i\otimes v_{i'} & \mbox{if } b=a' \,,\, a\leq n+m
  \end{cases} \,.
\end{equation}
We also introduce a constant $\kappa$ via:
\begin{equation}\label{eq:kappa}
  \kappa=\frac{N}{2}-m-1=n-m-1 \,.
\end{equation}
Consider the \emph{rational} $R$-matrix (a super-version of the one considered in~\cite{zz}):
\begin{equation}\label{eq:R-osp}
  R(x)=R^V(x)=x(x+\kappa)\ID + (x+\kappa)\Pop - x\Qop \,,
\end{equation}
which satisfies the Yang-Baxter equation with a spectral parameter~\eqref{eq:YBE} according to~\cite{Arnaudon}.

For any superalgebra $A$, an even matrix $L(x)=L^V(x)=(L_{ij}(x))_{i,j=1}^{N+2m}\in \End\,V\otimes A[[x,x^{-1}]]$
will be called an (orthosymplectic) \emph{Lax matrix} if it satisfies the RTT-relation~\eqref{eq:rttA} with $R(x)$
of~\eqref{eq:R-osp}. Coefficient-wise, this is equivalent to the following system of relations (see~\cite{Arnaudon}):
\begin{equation}\label{eq:yangianalg}
\begin{split}
  [L_{ij}(x),L_{k\ell}(y)]
  &=\frac{(-1)^{\gr{i}\gr{j}+\gr{i}\gr{k}+\gr{j}\gr{k}}}{x-y}
       \Big(L_{kj}(y)L_{i\ell}(x)-L_{kj}(x)L_{i\ell}(y)\Big) \, + \\
  & \frac{1}{x-y+\kappa} \left(\delta_{ki'}\sum_{p=1}^{N+2m} L_{pj}(x)L_{p'\ell}(y)
    (-1)^{\gr{i}+\gr{i}\gr{j}+\gr{j}\gr{p}}\theta_{i}\theta_{p} \, - \right.\\
  & \left.\qquad\qquad\quad \delta_{\ell j'}\sum_{p=1}^{N+2m} L_{kp'}(y)L_{ip}(x)
     (-1)^{\gr{p}+\gr{j}+\gr{i}\gr{k}+\gr{i}\gr{p}+\gr{j}\gr{k}}\theta_{p}\theta_{j}\right)\,.
\end{split}
\end{equation}

\begin{Rem}
The data of a Lax matrix $(L_{ij}(x))_{i,j=1}^{N+2m}$ with $L_{ij}(x)\in \delta_{ij}+x^{-1}A[[x^{-1}]]$ is
equivalent to an algebra homomorphism $X^\rtt(\fosp(V))\to A$ from an RTT extended orthosymplectic Yangian.
\end{Rem}

\begin{Rem}\label{rem:CD-vs-osp}
We recover orthogonal and symplectic types as special cases of the above setup:
\begin{itemize}

\item
For $m=0$, we have $\theta=(\underbrace{1,\ldots,1}_{2n})$ and $\gr{i}=\bar{0}$ for all $1\leq i\leq 2n$,
so that $R(x)$ of~\eqref{eq:R-osp} coincides with the $D_{n}$-type rational $R$-matrix of~\cite[(1.19)]{fkt}.

\item
For $n=0$, we have $\theta=(\underbrace{1,\ldots,1}_{m},\underbrace{-1,\ldots,-1}_{m})$ and $\gr{i}=\bar{1}$ for
all $1\leq i\leq 2m$, so that our $R(-x)$ of~\eqref{eq:R-osp} coincides with the $C_{m}$-type rational $R$-matrix
of~\cite[(1.19)]{fkt}.

\end{itemize}
\end{Rem}

\begin{Rem}\label{rem:different parities osp}
Let $\tilde{V}$ be another superspace with a $\BC$-basis $\tilde{v}_1,\ldots,\tilde{v}_{N+2m}$
satisfying~(\ref{eq:even+odd-dim},~\ref{eq:grading-symmetry}). Pick a permutation $\sigma\in S(n+m)$
such that $v_i\in V$ and $\tilde{v}_{\sigma(i)}\in \tilde{V}$ have the same $\BZ_2$-grading for all
$1\leq i\leq n+m$, and extend it to $\sigma\in S(N+2m)$ via $\sigma(i')=\sigma(i)'$. Define a
superspace isomorphism $\tilde{\idb}_\sigma\colon V \iso \tilde{V}$ via $v_i\mapsto \tilde{v}_{\sigma(i)}$.
The corresponding $R$-matrices~\eqref{eq:R-osp} are related via
\begin{equation}\label{eq:Rmatrix-intertwiner-osp}
  R^{\tilde{V}}(x) =
  (\tilde{\idb}_\sigma\otimes \tilde{\idb}_\sigma)\, R^V(x)\, (\tilde{\idb}_\sigma\otimes \tilde{\idb}_\sigma)^{-1} \,.
\end{equation}
As a result, if $L^V(x)=(L_{ij}(x))_{i,j=1}^{N+2m}$ is a Lax matrix corresponding to the fixed parity
$\Upsilon_V$, then $L^{\tilde{V}}(x):=\tilde{\idb}_\sigma L^V(x) \tilde{\idb}_\sigma^{-1}$ is a Lax matrix
corresponding to the parity $\Upsilon_{\tilde{V}}$, cf.~Remark~\ref{rem:different parities}.
\end{Rem}


\subsection{Symmetries of the orthosymplectic $R$-matrix}
\

In this section, we establish the invariance of the $R$-matrix~\eqref{eq:R-osp} under certain \emph{graded permutation}
matrices that will be used in the later constructions. The proofs are based on direct computations of the commutators of
the corresponding operators with $\Pop,\Qop$ of~(\ref{eq:Pop},~\ref{eq:Qop}). To this end, let us recall the explicit formulas
for the action of $\Pop$ and $\Qop$, cf.~(\ref{eq:Pop-explicit},~\ref{eq:Qop-explicit}):
\begin{equation}
\begin{split}
  & \Pop \colon v_i\otimes v_j\mapsto (-1)^{\gr{i}\gr{j}} \, v_j\otimes v_i \,, \\
  & \Qop\colon v_{i}\otimes v_{i'}\mapsto (-1)^{\gr{i}}\theta_{i} \sum_{j=1}^{N+2m} \theta_j\, v_j\otimes v_{j'}
    \,, \quad v_{i}\otimes v_{\jmath\ne i'}\mapsto 0 \,.
\end{split}
\end{equation}
Henceforth, $\JD_r$ will denote the $r\times r$ matrix with ``1'' on the antidiagonal, cf.~\eqref{eq:idb-matrix}.

\begin{Lem}\label{lem:jjt}
The $R$-matrix \eqref{eq:R-osp} commutes with the tensor product of two matrices
\begin{equation}\label{eq:jd-theta}
  \JD_\theta=
  \left(\begin{BMAT}[5pt]{c:c}{c:c}
    0 & -\JD_{n} \\
    \JD_{n+2m } & 0
  \end{BMAT} \right)
\end{equation}
that is, $[R(x),\JD_\theta\otimes \JD_\theta]=0$.
\end{Lem}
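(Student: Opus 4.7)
Since $R(x)=x(x+\ka)\ID+(x+\ka)\Pop-x\Qop$ is a polynomial in $x$ whose coefficients are $\BC$-linear combinations of $\ID$, $\Pop$ and $\Qop$, it suffices to verify separately that $\JD_\theta\otimes \JD_\theta$ commutes with each of these three operators. Commutation with $\ID$ is trivial, so the plan reduces to handling $\Pop$ and $\Qop$. Note first that $\JD_\theta$ preserves the $\BZ_2$-grading of $V$ (it sends $v_j$ to a multiple of $v_{j'}$, and $|v_{j'}|=|v_j|$ by~\eqref{eq:grading-symmetry}), so $\JD_\theta$ is an even operator and no Koszul signs arise when interpreting $\JD_\theta\otimes \JD_\theta$ on the graded tensor product.

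The first step is to read off the action of $\JD_\theta$ in the basis $\{v_j\}$: from the block form~\eqref{eq:jd-theta} one gets $\JD_\theta v_j = \epsilon_j\, v_{j'}$ with
\[
  \epsilon_j =
  \begin{cases} +1 & \text{if } 1\leq j\leq n+2m, \\ -1 & \text{if } n+2m< j\leq 2n+2m. \end{cases}
\]
Checking the four possibilities using the parity~\eqref{eq:our-osp-grading}, one obtains the key identity
\[
  \epsilon_j\,\epsilon_{j'} = -(-1)^{|j|} \qquad \forall\, 1\leq j\leq N+2m.
\]
Armed with this, the commutation with $\Pop$ is immediate: applying $\Pop$ after $\JD_\theta\otimes \JD_\theta$ on $v_i\otimes v_j$ produces $\epsilon_i\epsilon_j(-1)^{|i'||j'|}\, v_{j'}\otimes v_{i'}$, and applying them in the opposite order produces $(-1)^{|i||j|}\epsilon_j\epsilon_i\, v_{j'}\otimes v_{i'}$; equality follows from $|i'|=|i|$, $|j'|=|j|$.

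The main step is the $\Qop$ computation, which is where the sign bookkeeping is heaviest. Evaluating both sides on $v_i\otimes v_j$, one sees that either composition vanishes unless $j=i'$, so it remains to match the two sides when $j=i'$. On the one hand,
\[
  \Qop\bigl((\JD_\theta\otimes \JD_\theta)(v_i\otimes v_{i'})\bigr)
  = \epsilon_i\epsilon_{i'}\,(-1)^{|i'|}\theta_{i'}\sum_{k} \theta_k\, v_k\otimes v_{k'};
\]
using $\epsilon_i\epsilon_{i'}=-(-1)^{|i|}$ and $\theta_{i'}=(-1)^{|i|}\theta_i$ together with $|i'|=|i|$, this simplifies to $-(-1)^{|i|}\theta_i\sum_k \theta_k\, v_k\otimes v_{k'}$. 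On the other hand,
\[
  (\JD_\theta\otimes \JD_\theta)\,\Qop(v_i\otimes v_{i'})
  = (-1)^{|i|}\theta_i\sum_k \theta_k\, \epsilon_k\epsilon_{k'}\, v_{k'}\otimes v_k
  = -(-1)^{|i|}\theta_i\sum_k (-1)^{|k|}\theta_k\, v_{k'}\otimes v_k,
\]
and the change of variable $k\leftrightarrow k'$, combined with $(-1)^{|k|}\theta_k=\theta_{k'}$, returns exactly $-(-1)^{|i|}\theta_i\sum_k \theta_k\, v_k\otimes v_{k'}$. The two sides agree, so $[\Qop,\JD_\theta\otimes \JD_\theta]=0$ and the lemma follows.

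The only real obstacle is the sign accounting in the last step; all of it is controlled once one records the two clean identities $\epsilon_j\epsilon_{j'}=-(-1)^{|j|}$ and $\theta_{j'}=(-1)^{|j|}\theta_j$, which is why I would state them explicitly before the computation with $\Qop$.
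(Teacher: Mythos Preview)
Your proof is correct and follows essentially the same approach as the paper: both reduce to checking that $\JD_\theta\otimes\JD_\theta$ commutes with $\Pop$ and $\Qop$ separately, and both hinge on the identity $\epsilon_j\epsilon_{j'}=-(-1)^{|j|}$ (which the paper writes as $(-1)^{\delta_{i>n+2m}+\delta_{i'>n+2m}}=-(-1)^{|i|}$) together with $\theta_{j'}=(-1)^{|j|}\theta_j$. Your introduction of the symbol $\epsilon_j$ in place of $(-1)^{\delta_{j>n+2m}}$ makes the bookkeeping slightly more readable, but the argument is otherwise the same.
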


\begin{proof}
The tensor product of the matrices $\JD_\theta$ acts explicitly via
\begin{equation}
  \JD_\theta\otimes \JD_\theta\colon
  v_i\otimes v_j \mapsto (-1)^{\delta_{i>n+2m}+\delta_{j>n+2m}}\, v_{i'}\otimes v_{j'} \,.
\end{equation}
We thus have:
\begin{align}
  \Pop\left(\idb_\theta\otimes \idb_\theta\right) & \colon v_i\otimes v_j \mapsto
    (-1)^{\gr{i'}\gr{j'}+\delta_{i>n+2m}+\delta_{j>n+2m}}\, v_{j'}\otimes v_{i'} \,,  \\
  \left(\idb_\theta\otimes \idb_\theta\right)\Pop & \colon v_i\otimes v_j \mapsto
    (-1)^{\gr{i}\gr{j}+\delta_{j>n+2m}+\delta_{i>n+2m}}\, v_{j'}\otimes v_{i'} \,,
\end{align}
so that $[\Pop,\idb_\theta\otimes \idb_\theta]=0$ since $\gr{\iota}=\gr{\iota'}$ for all $\iota$.

To check the invariance of $\Qop$, we first note that
  $$\Qop\left(\idb_\theta\otimes \idb_\theta\right)(v_i\otimes v_\jmath)=0=
    \left(\idb_\theta\otimes \idb_\theta\right)\Qop\, (v_i\otimes v_\jmath)
    \qquad \mathrm{for} \quad \jmath\ne i' \,.$$
It thus remains to compare the images of $v_i\otimes v_{i'}$ under both
$\Qop\left(\idb_\theta\otimes \idb_\theta\right)$ and $\left(\idb_\theta\otimes \idb_\theta\right)\Qop$:
\begin{align}
  \Qop\left(\idb_\theta\otimes \idb_\theta\right) &\colon v_i\otimes v_{i'}\mapsto
    (-1)^{\gr{i'}+\delta_{i>n+2m}+\delta_{i'>n+2m}}\theta_{i'}
    \sum_{j=1}^{N+2m}\theta_j \, v_{j}\otimes v_{j'} \,,
  \label{eq:infJ1} \\
  \left(\idb_\theta\otimes \idb_\theta\right)\Qop &\colon v_i\otimes v_{i'}\mapsto
    (-1)^{\gr{i}}\theta_{i}
    \sum_{j=1}^{N+2m}(-1)^{\delta_{j>n+2m}+\delta_{j'>n+2m}}\theta_j \, v_{j'}\otimes v_{j} \,.
  \label{eq:infJ2}
\end{align}
The right-hand sides of~(\ref{eq:infJ1},~\ref{eq:infJ2}) coincide due to
the first equality in~\eqref{eq:grrel} and the identity
  $$(-1)^{\delta_{i>n+2m}+\delta_{i'>n+2m}}=(-1)^{\delta_{i>n+2m}+\delta_{i\leq n}}=-(-1)^\gr{i}$$
that follows immediately from~\eqref{eq:our-osp-grading}.
This verifies $[\Qop,\idb_\theta\otimes \idb_\theta]=0$.

As $\idb_\theta\otimes\idb_\theta$ commutes with both $\Pop$ and $\Qop$, so it does with the $R$-matrix \eqref{eq:R-osp}.
\end{proof}

\begin{Lem}\label{lem:tj}
The $R$-matrix \eqref{eq:R-osp} commutes with the tensor product of two matrices
\begin{equation}\label{eq:tilde-idb-matrix}
  \tilde \idb=
  \left(\begin{BMAT}[5pt]{c|c|c}{c|c|c}
    0 & 0 & 1 \\
    0 & \id_{N+2m-2} & 0\\
    1 & 0 & 0
  \end{BMAT} \right)
\end{equation}
that is, $[R(x),\tilde\JD\otimes \tilde\JD]=0$.
\end{Lem}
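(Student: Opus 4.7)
The plan is to mirror the proof of Lemma~\ref{lem:jjt}. Since $R(x)$ of~\eqref{eq:R-osp} is a linear combination of $\ID$, $\Pop$, and $\Qop$, it is enough to show separately that $\tilde{\JD}\otimes\tilde{\JD}$ commutes with each of $\Pop$ and $\Qop$. The key observation is that $\tilde{\JD}$ is simply the permutation matrix of the transposition $\sigma=(1,1')$ on $\mathbb{I}$ (recall $1'=N+2m$), swapping $v_1\leftrightarrow v_{1'}$ and fixing the remaining basis vectors. This transposition preserves all the combinatorial data that define $\Pop$ and $\Qop$: it commutes with the involution $i\mapsto i'$ (trivially, since it maps the pair $\{1,1'\}$ to itself), it preserves the grading $|{\cdot}|$ by virtue of~\eqref{eq:grading-symmetry}, and it preserves the sign sequence $\theta$ because $\theta_1=1$ by~\eqref{eq:grrel} and $\theta_{1'}=(-1)^{|1|}\theta_1=1$ by the grading~\eqref{eq:our-osp-grading} in which $|1|=\bar{0}$.

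For the commutator with $\Pop$, I would apply both compositions to an arbitrary $v_i\otimes v_j$: one obtains
$$\Pop(\tilde{\JD}\otimes\tilde{\JD})(v_i\otimes v_j)=(-1)^{|\sigma(i)||\sigma(j)|}v_{\sigma(j)}\otimes v_{\sigma(i)}=(-1)^{|i||j|}v_{\sigma(j)}\otimes v_{\sigma(i)}=(\tilde{\JD}\otimes\tilde{\JD})\Pop(v_i\otimes v_j),$$
using $|\sigma(i)|=|i|$ for all $i$. For the commutator with $\Qop$, I would split into two cases according to the piecewise description~\eqref{eq:Qop-explicit}. On vectors $v_a\otimes v_b$ with $b\ne a'$ both sides vanish, since $\sigma(b)\ne\sigma(a)'=\sigma(a')$. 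On vectors $v_a\otimes v_{a'}$, reindexing the sum via $k=\sigma(i)$ together with the invariance of $|i|$, $\theta_i$, and the involution $i\mapsto i'$ under $\sigma$ yields
$$\Qop(\tilde{\JD}\otimes\tilde{\JD})(v_a\otimes v_{a'})=\theta_{\sigma(a)}\sum_{k}(-1)^{|k||a|}\theta_k\,v_k\otimes v_{k'}=(\tilde{\JD}\otimes\tilde{\JD})\Qop(v_a\otimes v_{a'}),$$
which matches since $\theta_{\sigma(a)}=\theta_a$.

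The only potential pitfall is the $\Qop$ computation when $a\in\{1,1'\}$, where one must verify that $\theta_{\sigma(a)}=\theta_a$; this is precisely where the convention $\theta_{\le n+m}=1$ combined with $|1|=\bar{0}$ in the chosen parity sequence~\eqref{eq:our-osp-grading} enters. In contrast to Lemma~\ref{lem:jjt}, the matrix $\tilde{\JD}$ is signless and permutes only one pair of indices sharing the same parity and the same $\theta$-value, so no delicate cancellation between matrix signs and the $\theta_i$'s is needed and the calculation reduces to bookkeeping.
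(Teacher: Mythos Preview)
Your approach is correct and essentially identical to the paper's: reduce to commutativity with $\Pop$ and $\Qop$ separately, using that the transposition $\sigma=(1,1')$ preserves the grading $|\cdot|$, the signs $\theta$, and the involution~$'$. One small correction: the displayed intermediate formula for $\Qop(v_{\sigma(a)}\otimes v_{\sigma(a')})$ is off --- by~\eqref{eq:Qop-explicit} it should read $(-1)^{|\sigma(a)|}\theta_{\sigma(a)}\sum_k\theta_k\,v_k\otimes v_{k'}$, with the scalar $(-1)^{|a|}$ outside rather than $(-1)^{|k||a|}$ inside the sum (you omitted the graded-tensor-product sign when applying $e_{ij}\otimes e_{i'j'}$). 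This slip does not affect the argument, since the invariances $|\sigma(a)|=|a|$, $\theta_{\sigma(a)}=\theta_a$, and $\sigma(i')=\sigma(i)'$ that you identified are exactly what is needed for both sides to agree.
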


\begin{proof}
The tensor product of the matrices $\tilde\JD$ acts explicitly via
\begin{equation}
  \tilde\JD \otimes \tilde\JD \colon v_i\otimes v_j \mapsto v_{\tilde i}\otimes v_{\tilde j}
    \qquad  \text{with} \quad
  \tilde i=
  \begin{cases}
    i' & \mathrm{if} \ i=1,1' \\
    i  & \mathrm{if} \ i=2,\ldots, 2'
  \end{cases} \,.
\end{equation}
We thus have:
\begin{align}
  \Pop\big(\tilde\idb\otimes\tilde \idb\big) & \colon v_i\otimes v_j \mapsto
    (-1)^{\gr{\tilde i}\gr{\tilde j}} \, v_{\tilde j}\otimes v_{\tilde i} \,,  \\
  \big(\tilde\idb\otimes\tilde \idb\big)\Pop & \colon v_i\otimes v_j\mapsto
    (-1)^{\gr{i}\gr{j}} \, v_{\tilde j}\otimes v_{\tilde i} \,,
\end{align}
so that $[\Pop,\tilde\idb\otimes\tilde \idb]=0$ since $\gr{\tilde{\iota}}=\gr{\iota}$ for all $\iota$.

To check the invariance of $\Qop$, we first note that
  $$\Qop\big(\tilde\idb\otimes\tilde \idb\big)(v_i\otimes v_\jmath)=0=
    \big(\tilde\idb\otimes\tilde \idb\big)\Qop\, (v_i\otimes v_\jmath)
    \qquad \mathrm{for} \quad \jmath\ne i' \,.$$
It thus remains to compare the images of $v_i\otimes v_{i'}$ under both
$\Qop\big(\tilde\idb\otimes\tilde \idb\big)$ and $\big(\tilde\idb\otimes\tilde \idb\big)\Qop$:
\begin{align}
  \Qop \big(\tilde\idb\otimes\tilde \idb\big) & \colon v_i\otimes v_{i'}\mapsto
    (-1)^{\gr{\tilde i}}\theta_{\tilde i} \sum_{j=1}^{N+2m}\theta_j \, v_{j}\otimes v_{j'} \,, \\
  \big(\tilde\idb\otimes\tilde \idb\big)\Qop &\colon v_i\otimes v_{i'}\mapsto
    (-1)^{\gr{i}}\theta_{i}\sum_{j=1}^{N+2m}\theta_j\, v_{\tilde j}\otimes v_{\tilde j'} \,,
\end{align}
and the two images coincides as $\theta_{\tilde i}=\theta_{i}$ (which uses that $|v_1|=\bar{0}$).

As $\tilde \idb \otimes \tilde \idb$ commutes with both $\Pop$ and $\Qop$, so it does with the $R$-matrix \eqref{eq:R-osp}.
\end{proof}

\begin{Lem}\label{lem:hjt}
The $R$-matrix \eqref{eq:R-osp} commutes with the tensor product of two matrices
\begin{equation}\label{eq:hatJd-idg}
  \hat \JD_\theta=
  \left(\begin{BMAT}[5pt]{c|c:c|c}{c|c:c|c}
    1 & 0 & 0 & 0 \\
    0 & 0 & \idg_{n-1,m} & 0 \\
    0 & \JD_{n+m-1} & 0 & 0 \\
    0 & 0 & 0 & -1
  \end{BMAT} \right)
    \quad \text{with} \quad
  \idg_{n-1,m} =
  \left(\begin{BMAT}[5pt]{c:c}{c:c}
    0 & -\JD_{n-1} \\
    \JD_m & 0
  \end{BMAT}\right)
\end{equation}
that is, $[R(x),\hat \JD_\theta \otimes \hat \JD_\theta]=0$.
\end{Lem}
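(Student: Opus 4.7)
The plan is to follow the template established in Lemmas~\ref{lem:jjt} and~\ref{lem:tj}: identify the action of $\hat\JD_\theta$ on the basis of $V$ as a signed permutation, then verify that $\hat\JD_\theta\otimes\hat\JD_\theta$ commutes with $\Pop$ and $\Qop$ separately; since $R(x)$ of~\eqref{eq:R-osp} is a polynomial combination of $\ID$, $\Pop$ and $\Qop$, this will yield the claim.

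Unpacking the block structure of~\eqref{eq:hatJd-idg}, one sees that $\hat\JD_\theta v_i=\varepsilon_i v_{\tau(i)}$, where $\tau$ is the involution that fixes the extremal indices $1$ and $1'$ and swaps $i\leftrightarrow i'$ for all $2\leq i\leq 2n+2m-1$, and where the signs $\varepsilon_i\in\{\pm 1\}$ are: $\varepsilon_1=1$, $\varepsilon_{1'}=-1$, $\varepsilon_j=1$ for $2\leq j\leq n+m$, and $\varepsilon_j=-(-1)^{|v_j|}$ for $n+m+1\leq j\leq 2n+2m-1$ (the non-trivial signs on the right half encoding the $-\JD_{n-1}$ sub-block inside $\idg_{n-1,m}$). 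Crucially, $\tau$ preserves the $\BZ_2$-grading, so $|v_{\tau(i)}|=|v_i|$ for all $i$. Commutation with $\Pop$ is then immediate, exactly as in the first half of Lemma~\ref{lem:tj}: both $\Pop(\hat\JD_\theta\otimes\hat\JD_\theta)$ and $(\hat\JD_\theta\otimes\hat\JD_\theta)\Pop$ send $v_i\otimes v_j$ to $\varepsilon_i\varepsilon_j\,(-1)^{|v_i||v_j|}\,v_{\tau(j)}\otimes v_{\tau(i)}$.

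For $\Qop$, since $\tau(i)'=\tau(i')$, both sides annihilate $v_i\otimes v_\jmath$ unless $\jmath=i'$. Applying $\Qop$ and $\hat\JD_\theta\otimes\hat\JD_\theta$ to $v_i\otimes v_{i'}$ and comparing the coefficients of $v_k\otimes v_{k'}$ reduces the commutation to the single sign identity
\begin{equation*}
  \varepsilon_i\,\varepsilon_{i'}\,\theta_{\tau(i)} \,=\, -\theta_i \qquad \forall\, 1\leq i\leq 2n+2m \,.
\end{equation*}
This is the main computation and the only delicate point: I would verify it case-by-case for $i=1$, $i=1'$, $2\leq i\leq n+m$, and $n+m+1\leq i\leq 2n+2m-1$, using~\eqref{eq:grrel}--\eqref{eq:theta} together with the values of $\varepsilon_i$ listed above; in every case the ratio evaluates to $-1$. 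The factor $-1$ in the $(1',1')$-block of $\hat\JD_\theta$ is precisely what is required to match the sign produced by $\Qop$ at the extremal index $i=1$ (without it, the identity would fail at $i=1,1'$ with the opposite sign).

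The main obstacle is the sign bookkeeping across these four cases, especially the interplay between the pattern of $\theta_i$ (which flips between the ranges $i\leq n+m$ and $i>n+m$) and the sign flip encoded by the $-\JD_{n-1}$ sub-block of $\idg_{n-1,m}$. Once the identity above is established, $[R(x),\hat\JD_\theta\otimes\hat\JD_\theta]=0$ follows as in Lemmas~\ref{lem:jjt}--\ref{lem:tj}.
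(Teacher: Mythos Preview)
Your direct verification is correct: the action of $\hat\JD_\theta$ on basis vectors is exactly as you describe, and the sign identity $\varepsilon_i\varepsilon_{i'}\theta_{\tau(i)}=-\theta_i$ holds in all four cases. (Strictly speaking, for the commutation with $\Qop$ you only need this ratio to be a constant independent of $i$; that the constant equals $-1$ is true but not essential.)

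The paper, however, takes a much shorter route. It simply observes the factorisation
\[
  \hat\JD_\theta \;=\; \tilde\JD \cdot \JD_\theta \,,
\]
which one checks in a line from the explicit actions on basis vectors (your own description of $\hat\JD_\theta v_i$ matches the composite of ``apply $\JD_\theta$, then swap $v_1\leftrightarrow v_{1'}$''). Since Lemmas~\ref{lem:jjt} and~\ref{lem:tj} already give $[R(x),\JD_\theta\otimes\JD_\theta]=0$ and $[R(x),\tilde\JD\otimes\tilde\JD]=0$, the commutation for the product is immediate. Your approach is self-contained and makes the sign mechanism visible --- in particular, your remark about the role of the $-1$ in the $(1',1')$-corner is exactly right --- but it essentially redoes the work of both previous lemmas at once. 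The paper's factorisation avoids any new case analysis.
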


\begin{proof}
As $\hat \idb_\theta=\tilde \idb\cdot \idb_\theta$, the result follows immediately from the previous two lemmas.
\end{proof}

\begin{Lem}\label{lem:jt}
For $N=0$, the $R$-matrix \eqref{eq:R-osp} commutes with the tensor product of two matrices
\begin{equation}\label{eq:ID-theta matrix}
  \ID_\theta=
  \left(\begin{BMAT}[5pt]{c:c}{c:c}
    \id_m & 0 \\
    0 & -\id_m
  \end{BMAT} \right)
\end{equation}
that is, $[R(x),\id_\theta \otimes \id_\theta]=0$.
\end{Lem}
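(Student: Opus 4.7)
The plan is to reduce the claim to checking that $\ID_\theta\otimes\ID_\theta$ commutes separately with $\Pop$ and $\Qop$, since $R(x)=x(x+\kappa)\ID+(x+\kappa)\Pop-x\Qop$. This is exactly the strategy used in Lemmas~\ref{lem:jjt}--\ref{lem:hjt}, and in fact the present case is simpler because the matrix $\ID_\theta$ is diagonal.

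First, I would observe that when $N=0$ the sequence~\eqref{eq:theta} becomes $\theta=(\underbrace{1,\ldots,1}_m,\underbrace{-1,\ldots,-1}_m)$, so $\ID_\theta$ is exactly the diagonal matrix $\ID_\theta\,v_i=\theta_i\,v_i$, and hence $(\ID_\theta\otimes\ID_\theta)(v_i\otimes v_j)=\theta_i\theta_j\,v_i\otimes v_j$. Commutation with $\Pop$ is then immediate on basis vectors, since $\Pop$ interchanges the factors while $\theta_i\theta_j$ is symmetric in $i,j$; the grading sign $(-1)^{\gr{i}\gr{j}}$ plays no role here.

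The substantive step is the commutation with $\Qop$. On tensors $v_i\otimes v_\jmath$ with $\jmath\neq i'$, both compositions vanish and there is nothing to check. On $v_i\otimes v_{i'}$ I would compare
\[
\Qop\,(\ID_\theta\otimes\ID_\theta)(v_i\otimes v_{i'})=\theta_i\theta_{i'}\cdot(-1)^{\gr{i}}\theta_i\sum_{j=1}^{2m}\theta_j\,v_j\otimes v_{j'}
\]
with
\[
(\ID_\theta\otimes\ID_\theta)\,\Qop(v_i\otimes v_{i'})=(-1)^{\gr{i}}\theta_i\sum_{j=1}^{2m}\theta_j\theta_{j'}\,\theta_j\,v_j\otimes v_{j'}.
\]
Matching the two reduces to the single identity $\theta_j\theta_{j'}=\theta_i\theta_{i'}$ for all $i,j$, i.e.\ $\theta_\iota\theta_{\iota'}$ must be a constant independent of $\iota$. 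By the first relation of~\eqref{eq:grrel}, $\theta_{\iota'}=(-1)^{\gr{\iota}}\theta_\iota$, so $\theta_\iota\theta_{\iota'}=(-1)^{\gr{\iota}}$. Here the hypothesis $N=0$ is essential: it forces $\gr{\iota}=\bar 1$ uniformly, so $\theta_\iota\theta_{\iota'}=-1$ for every $\iota$, and the two displayed sums coincide.

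There is no real obstacle; the only place where care is required is verifying that the proposed identity $\theta_\iota\theta_{\iota'}=\text{const}$ genuinely fails once even indices are present (which is why the lemma is formulated only for $N=0$), and making sure no extra grading signs from the graded tensor product convention~\eqref{eq:graded tensor product} have been overlooked when writing $\ID_\theta\otimes\ID_\theta$ as a matrix. Once this verification is done, $[\Pop,\ID_\theta\otimes\ID_\theta]=0=[\Qop,\ID_\theta\otimes\ID_\theta]$ yield $[R(x),\ID_\theta\otimes\ID_\theta]=0$.
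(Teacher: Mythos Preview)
Your proposal is correct and follows essentially the same approach as the paper: reduce to checking that $\ID_\theta\otimes\ID_\theta$ commutes with $\Pop$ and $\Qop$ separately, handle $\Pop$ by symmetry of $\theta_i\theta_j$, and for $\Qop$ compare the two compositions on $v_i\otimes v_{i'}$, concluding via $\theta_{\iota'}=-\theta_\iota$ (equivalently, your constancy of $\theta_\iota\theta_{\iota'}=-1$) for $N=0$. The paper writes the two images of $v_i\otimes v_{i'}$ with one fewer $\theta$-factor (using $\theta_i^2=1$ implicitly), but the computations are identical.
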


\begin{proof}
For $N=0$, the $\theta$ of~\eqref{eq:theta} are given by $\theta_i=(-1)^{\delta_{i>m}}$, so that
\begin{equation}
  \id_\theta \otimes \id_\theta\colon v_i\otimes v_j\mapsto \theta_i\theta_j \, v_i\otimes v_j \,.
\end{equation}
The commutativity $[\Pop,\id_\theta\otimes \id_\theta]=0$ follows immediately from
\begin{align}
  \Pop\left( \id_\theta \otimes \id_\theta\right) & \colon v_i\otimes v_j\mapsto
    (-1)^{\gr{i}\gr{j}} \theta_i\theta_j \, v_j\otimes v_i \,, \\
  \left( \id_\theta \otimes \id_\theta\right)\Pop & \colon v_i\otimes v_j\mapsto
    (-1)^{\gr{i}\gr{j}} \theta_i\theta_j \, v_j\otimes v_i \,.
\end{align}
Similarly, for the operator $\Qop$ we have:
\begin{align}
  \Qop\left( \id_\theta \otimes \id_\theta\right) & \colon v_i\otimes v_{i'} \mapsto
    (-1)^{\gr{i}} \theta_{i'}\sum_{j=1}^{2m} \theta_j \, v_j\otimes v_{j'} \,, \\
  \left( \id_\theta \otimes \id_\theta\right)\Qop & \colon v_i\otimes v_{i'}\mapsto
    (-1)^{\gr{i}} \theta_i\sum_{j=1}^{2m} \theta_{j'} \, v_j\otimes v_{j'} \,,
\end{align}
and the two images coincide as $\theta_\iota=-\theta_{\iota'}$ for all $\iota$.
This implies $[\Qop,\id_\theta\otimes \id_\theta]=0$.

As $\id_\theta \otimes \id_\theta$ commutes with both $\Pop$ and $\Qop$, so it does with the $R$-matrix \eqref{eq:R-osp}.
\end{proof}

\begin{Rem}
(a) More generally, the $R$-matrix \eqref{eq:R-osp} commutes with
$\sum_{i=1}^{n+m} (a_ie_{ii}+b_ie_{ii'}+c_ie_{i'i}+d_ie_{i'i'})$ with
either $a_i,d_i\in \{\pm 1\}$ and $b_i=c_i=0$ (for which we set $\gamma_i:=a_id_i$) or
$b_i,c_i\in \{\pm 1\}$ and $a_i=d_i=0$ (for which we set $\gamma_i:=(-1)^{|i|}b_ic_i$),
and such that $\gamma_i$ are the same for all $1\leq i\leq n+m$.

\medskip
\noindent
(b) According to Remark~\ref{rem:different parities osp}, using some other permutation matrices ($\tilde{\idb}_\sigma$
from \emph{loc.cit.}) will rather produce orthosymplectic Lax matrices for other $\BZ_2$-gradings of $V$, see
Remarks~\ref{rem:linear-Lax-all-parities} and~\ref{rem:quadratic-Lax-all-parities}.
\end{Rem}


\section{Linear orthosymplectic Lax matrices}\label{sec:linear-Lax}

In this section, we construct some linear orthosymplectic Lax matrices of superoscillator type.


\subsection{Degenerate linear orthosymplectic Lax matrices}
\

In this subsection, we construct a degenerate linear orthosymplectic Lax matrix
for the parity sequence~\eqref{eq:our-parity}. To this end, let us consider first bosonic pairs of superoscillators:
\begin{equation}\label{eq:aoscs}
  \Big\{(\oa_{ij},\oad_{ji}) \,\Big|\, n+2m+1\leq i\leq 2n+2m-1 \,,\ 1\leq j\leq n-1 \,,\ i+j\leq 2n+2m \Big\} \,,
\end{equation}
\begin{equation}\label{eq:boscs}
  \Big\{(\ob_{ij},\obd_{ji}) \,\Big|\, n+m+1\leq i\leq n+2m \,,\ n+1\leq j\leq n+m \,,\ i+j\leq 2n+2m+1 \Big\} \,,
\end{equation}
with $|\oa_{ij}|=|\oad_{ji}|=|i|+|j|=\bar{0}$, $|\ob_{ij}|=|\obd_{ji}|=|i|+|j|=\bar{0}$, and the only nonzero supercommutators
\begin{equation}
  [\oa_{ij},\oad_{ji}]=\oa_{ij}\oad_{ji}-\oad_{ji}\oa_{ij}=1 \,, \qquad
  [\ob_{ij},\obd_{ji}]=\ob_{ij}\obd_{ji}-\obd_{ji}\ob_{ij}=1 \,.
\end{equation}
In addition, we also consider fermionic pairs of superoscillators
\begin{equation}\label{eq:coscs}
  \Big\{(\oc_{ij},\ocd_{ji}) \,\Big|\,  n+m+1\leq i\leq n+2m \,,\ 1\leq j\leq n \Big\}
\end{equation}
with $|\oc_{ij}|=|\ocd_{ji}|=|i|+|j|=\bar{1}$ and the only nonzero supercommutators
\begin{equation}
  [\oc_{ij},\ocd_{ji}]=\oc_{ij}\ocd_{ji}+\ocd_{ji}\oc_{ij}=1 \,.
\end{equation}
We encode the above bosonic and fermionic generators by the corresponding three pairs of matrices:
\begin{equation}\label{eq:ApAmDp}
  \ap=
  \left(\begin{array}{cccc}
    \oad_{1,n+2m+1} & \cdots & \oad_{1,2n+2m-1} & 0 \\
    \vdots & \iddots & 0 & -\oad_{1,2n+2m-1}\\
    \oad_{n-1,n+2m+1} & 0 & \iddots & \vdots \\
    0 & -\oad_{n-1,n+2m+1} & \cdots & -\oad_{1,n+2m+1}
   \end{array}\right) \,,
\end{equation}
\begin{equation}\label{eq:ApAmDm}
  \am=
  \left(\begin{array}{cccc}
    \oa_{n+2m+1,1} & \cdots & \oa_{n+2m+1,n-1} & 0 \\
    \vdots & \iddots & 0 & -\oa_{n+2m+1,n-1} \\
    \oa_{2n+2m-1,1} & 0 & \iddots & \vdots \\
    0 & -\oa_{2n+2m-1,1} & \cdots & -\oa_{n+2m+1,1}
  \end{array}\right)
\end{equation}
that are (skew-symmetric along the antidiagonal) $n\times n$ matrices encoding~\eqref{eq:aoscs},
\begin{equation}\label{eq:ApAmCp}
  \bp=
  \left(\begin{array}{cccc}
    \obd_{n+1,n+m+1} & \cdots & \obd_{n+1,n+2m-1} & 2\obd_{n+1,n+2m} \\
    \vdots & \iddots & 2\obd_{n+2,n+2m-1} & \obd_{n+1,n+2m-1} \\
    \obd_{n+m-1,n+m+1} & 2\obd_{n+m-1,n+m+2} & \iddots & \vdots \\
    2\obd_{n+m,n+m+1} & \obd_{n+m-1,n+m+1} & \cdots & \obd_{n+1,n+m+1}
  \end{array}\right) \,,
\end{equation}
\begin{equation}\label{eq:ApAmCm}
  \bm=
  \left(\begin{array}{cccc}
    \ob_{n+m+1,n+1} & \cdots & \ob_{n+m+1,n+m-1} & \ob_{n+m+1,n+m} \\
    \vdots & \iddots & \ob_{n+m+2,n+m-1} & \ob_{n+m+1,n+m-1} \\
    \ob_{n+2m-1,n+1} & \ob_{n+2m-1,n+2} & \iddots & \vdots \\
    \ob_{n+2m,n+1} & \ob_{n+2m-1,n+1} & \cdots & \ob_{n+m+1,n+1}
  \end{array}\right)
\end{equation}
that are (symmetric along the antidiagonal) $m\times m$ matrices encoding~\eqref{eq:boscs},
\begin{equation}\label{eq:ApAmSUSY}
  \bar{\mathbf{C}}=
  \left(\begin{array}{ccc}
    \ocd_{1,n+m+1} & \cdots & \ocd_{1,n+2m}  \\
    \vdots & \ddots & \vdots \\
    \ocd_{n,n+m+1}  & \cdots & \ocd_{n,n+2m} \\
  \end{array}\right)
    \,, \qquad
  {\mathbf{C}}=
  \left(\begin{array}{ccc}
    \oc_{n+m+1,1} & \cdots & \oc_{n+m+1,n}  \\
    \vdots & \ddots & \vdots \\
    \oc_{n+2m,1}  & \cdots & \oc_{n+2m,n} \\
  \end{array}\right)
\end{equation}
that are $n\times m$ and $m\times n$ matrices encoding~\eqref{eq:coscs}.

Now we are ready to present the main result of this subsection:

\begin{Thm}\label{thm:degenerate-linear-Lax}
The following is a solution to the RTT-relation \eqref{eq:rttA} with the $R$-matrix~\eqref{eq:R-osp}:
\begin{equation}\label{eq:linlaxdeg}
  L(x)=
  \left(\begin{BMAT}[5pt]{c:c}{c:c}
    x\ID_{n+m}-\kp\km & \kp \\
    -\km & \ID _{n+m}
  \end{BMAT} \right)
\end{equation}
with
\begin{equation}\label{eq:Koscillators}
  \kp=
  \left(\begin{BMAT}[5pt]{c:c}{c:c}
    \bar{\mathbf{C}} & \ap\\
    \bp & -\JD_m \bar{\mathbf{C}}^t \JD_n
  \end{BMAT} \right)
    \qquad \mathrm{and} \qquad
  \km=
  \left(\begin{BMAT}[5pt]{c:c}{c:c}
    {\mathbf{C}} & -\bm \\
    \am & \JD_n {\mathbf{C}}^t \JD_m
  \end{BMAT} \right) \,.
\end{equation}
Here, $t$ denotes the standard (bosonic) transpose, $\JD_n$ is the $n\times n$ matrix with ``1'' on the antidiagonal,
and the matrices $\ap,\am,\bp,\bm,\bar{\mathbf{C}},\mathbf{C}$ are as in~\eqref{eq:ApAmDp}--\eqref{eq:ApAmSUSY}.
\end{Thm}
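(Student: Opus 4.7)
The plan is to verify that $L(x)$ of~\eqref{eq:linlaxdeg} satisfies the RTT-relation~\eqref{eq:rttA} with the orthosymplectic $R$-matrix~\eqref{eq:R-osp}, which by~\eqref{eq:yangianalg} reduces to a system of commutator identities for the matrix entries $L_{ij}(x)$. Decomposing $R(x)=(x+\kappa)(x\,\ID+\Pop)-x\Qop$, the verification splits naturally into a ``$\gl$-type'' ($\Pop$-)part and a ``$\Qop$-part''.

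For the $\gl$-type part, I observe that $L(x)$ has the same $2\times 2$ block structure as the $\gl$-type degenerate Lax matrix $L_a$ of~\eqref{eq:one matrices1}, taken on an ambient superspace of dimension $2(n+m)$ with $a=n+m$, but with heavily constrained entries: the bosonic off-diagonal blocks $\ap,\am$ are skew-antidiagonal, $\bp,\bm$ are symmetric-antidiagonal, and the diagonal fermionic blocks of each of $\kp,\km$ are related via transposition through $\JD_n,\JD_m$. I would check that, despite these folding identifications, the commutator system~\eqref{eq:yangianalgA} for the entries of $L(x)$ still follows from the canonical supercommutation relations of the reduced oscillators~(\ref{eq:aoscs},~\ref{eq:boscs},~\ref{eq:coscs}), since each fold-reduced entry continues to satisfy a canonical commutation relation on its own orbit. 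For the $\Qop$-part, the remaining content of~\eqref{eq:yangianalg} are the two $\theta_p$-weighted sums over $p=1,\ldots,N+2m$ which are activated when $k=i'$ or $\ell=j'$. The folding structure of $\kp,\km$ should be designed precisely to make these sums telescope: Lemmas~\ref{lem:jjt}--\ref{lem:jt} show that $R(x)$ commutes with the very folding matrices that encode the constraints on $\kp,\km$, and this symmetry underlies the expected cancellations.

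The main obstacle will be the $\Qop$-part: carrying out the explicit cancellation of the $\theta_p$-sums requires careful bookkeeping of the sign conventions $(-1)^{\gr{\cdot}}$ and $\theta_\cdot$ from~\eqref{eq:theta}, especially at the interface between bosonic and fermionic contributions. The factor of $2$ on the antidiagonal entries of $\bp$ in~\eqref{eq:ApAmCp} (absent in $\ap$, which has $0$ on its antidiagonal) reflects the distinction between symmetric ($C$-type) and skew ($D$-type) folding inherited from the bosonic cases of~\cite{fkt}; getting this right in the mixed orthosymplectic setting is the delicate point. I would organize the verification of~\eqref{eq:yangianalg} by splitting into cases according to which of the four blocks of $L(x)$ the indices $(i,j,k,\ell)$ fall into, with the cases $k=i'$ or $\ell=j'$ (which trigger the $\Qop$-contributions) being the most intricate, particularly when the indices straddle the boundary between the top and bottom halves of $V$ so that the skew-antidiagonal $\ap$, the symmetric-antidiagonal $\bp$, and the fermionic $\bar{\mathbf{C}},\mathbf{C}$ cross-terms all contribute simultaneously.
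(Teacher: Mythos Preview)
Your overall plan—direct verification of~\eqref{eq:yangianalg} by splitting into block cases—is the paper's approach as well, but two points in your outline need correction.

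First, Lemmas~\ref{lem:jjt}--\ref{lem:jt} play no role in this proof. They establish invariance of $R(x)$ under certain permutation-type matrices and are invoked only \emph{later}, to conjugate $L(x)$ into the companion Lax matrices $\bar L_\theta$, $\hat L_\theta$ used in the factorisation formulas. The actual engine behind the $\Qop$-part is Lemma~\ref{lem:quadraticterm}: the two $\theta_p$-weighted sums in~\eqref{eq:yangianalg} are computed in closed form and shown to collapse to simple $2\times 2$ block matrices in $\kp,\km$ (equations~\eqref{eq:quadsimp1}--\eqref{eq:quadsimp2}). This reduction, proved from the antidiagonal symmetries~\eqref{eq:ksym} and the identity~\eqref{eq:L12relation}, is what makes the case analysis tractable; without it the ``telescoping'' you anticipate remains a hope rather than a computation.

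Second, your clean separation into a ``$\gl$-type part'' satisfying~\eqref{eq:yangianalgA} and a residual ``$\Qop$-part'' does not quite hold. The folded matrix $L(x)$ does \emph{not} satisfy the $\gl$-type RTT on its own: the identifications inside $\kp,\km$ create extra commutators (beyond what~\eqref{eq:yangianalgA} predicts) exactly when $k=i'$ or $\ell=j'$, and these extras are what the $\Qop$-term absorbs. So the two contributions must be verified jointly in those cases. The paper organises this as follows: use the exchange symmetry $(i,j,x)\leftrightarrow(k,\ell,y)$ of both sides (itself a nontrivial check requiring Lemma~\ref{lem:quadraticterm}) to cut the sixteen block-cases to ten; dispatch the four ``diagonal'' cases via the single oscillator commutator~\eqref{eq:comK}; and reduce the remaining six off-diagonal cases to short identities such as~\eqref{eq:yangianalg33b} and~\eqref{eq:kkcom2}, all consequences of~\eqref{eq:comK} together with Lemma~\ref{lem:quadraticterm}.
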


\begin{proof}
The proof is straightforward and is analogous to that for $C$- and $D$-types, directly verifying the commutation relations
using Lemma~\ref{lem:quadraticterm} below. In the first step, we insert the relations \eqref{eq:quadsimp1} and
\eqref{eq:quadsimp2} into \eqref{eq:yangianalg} and split the resulting equations into $4\times 4$ block structure according
to the block structure of the Lax matrices \eqref{eq:linlaxdeg}. The supercommutator on the left-hand side of
\eqref{eq:yangianalg} indicates that the right-hand side is symmetric under the combined exchange of $i \leftrightarrow k$,
$j \leftrightarrow \ell$ and $x\leftrightarrow y$ when multiplied by the factor $-(-1)^{(\gr{i}+\gr{j})(\gr{k}+\gr{\ell})}$.
This symmetry also holds for both terms individually on the right-hand side. For the first term, proportional to $(x-y)^{-1}$,
the symmetry follows from the relation
\begin{equation}
  L_{kj}(y)L_{i\ell}(x) - L_{i\ell}(x) L_{kj}(y) = L_{kj}(x)L_{i\ell}(y)-L_{i\ell}(y) L_{kj}(x) \,.
\end{equation}
For the second term, proportional to $(x-y+\kappa)^{-1}$, the symmetry is less obvious. It follows by noting that
\begin{equation}
\begin{split}
  \frac{1}{x-y+\kappa} \sum_{p}L_{pj}(x)L_{p'\ell}(y)(-1)^{\gr{j}\gr{p}}\theta_{p}
  & + \frac{(-1)^{\gr{\ell}\gr{j}}}{y-x+\kappa} \sum_{p}L_{p\ell}(y)L_{p'j}(x)(-1)^{\gr{\ell}\gr{p}}\theta_{p} \, = \\
  & \delta_{j\ell'}\theta_{j'}\frac{\kappa(x+y+\kappa)}{(x-y+\kappa)(y-x+\kappa)}
\end{split}
\end{equation}
and
\begin{equation}
\begin{split}
  \frac{1}{x-y+\kappa} \sum_{p}L_{kp'}(y)L_{ip}(x)(-1)^{\gr{i}\gr{p}}\theta_{p'}
  & + \frac{(-1)^{\gr{i}\gr{k}}}{y-x+\kappa} \sum_{p}L_{ip'}(x)L_{kp}(y)(-1)^{\gr{k}\gr{p}}\theta_{p'} \, = \\
  & \delta_{i'k}\theta_{i}\frac{\kappa(x+y+\kappa)}{(x-y+\kappa)(y-x+\kappa)} \,.
\end{split}
\end{equation}
These equations are verified using Lemma~\ref{lem:quadraticterm}. Thus, the $4\times4=16$ equations arising from
the block structure are reduced to $10$ equations. In the following, we verify the remaining commutation relations.

We start with the ``diagonal terms'' where the indices are chosen such that the commutator on the left-hand side is among
the elements of the Lax matrices within the same block. In this case, the second term on the right-hand side always
vanishes because of the Kronecker deltas. Further, unless we are considering the upper left block the first term
on the right-hand side has to vanish as the entries of the Lax matrix will not depend on the spectral parameter.
The case with $1\leq i,j,k,\ell\leq n+m$ follows from the following equality:
\begin{equation}
  \left[(\kp\km)_{ij},(\kp\km)_{k\ell}\right] =
  (-1)^{\gr{i}\gr{j}+\gr{i}\gr{k}+\gr{j}\gr{k}} \left(\delta_{kj}(\kp\km)_{i\ell}-\delta_{i\ell}(\kp\km)_{kj}\right) \,.
\end{equation}
This equality follows in turn from
\begin{equation}\label{eq:comK}
\begin{split}
  \km_{n+m-i+1,j}\kp_{k,n+m-\ell+1} - (-1)^{(\gr{i}+\gr{j})(\gr{k}+\gr{\ell})}
  & \kp_{k,n+m-\ell+1}\km_{n+m-i+1,j} \, = \\
  & (-1)^{\gr{j}} \delta_{i\ell}\delta_{jk} - (-1)^{\gr{i}\gr{j}+\gr{j}} \delta_{ik}\delta_{j\ell}
\end{split}
\end{equation}
for $1\leq i,j,k,\ell\leq n+m$, or equivalently (written in terms of the Lax matrix)
\begin{equation}\label{eq:comL}
  \left[L(x)_{i'j},L(y)_{k\ell'}\right] =
  (-1)^{\gr{i}\gr{j}+\gr{j}} \delta_{ik}\delta_{j\ell} - (-1)^{\gr{j}} \delta_{i\ell}\delta_{jk} \,.
\end{equation}
This leaves us with 6 equations to verify. The case
\begin{equation}\label{eq:yangianalg33}
\begin{split}
  [L_{ij}(x),L_{k\ell'}(y)]
  & = \frac{1}{x-y}(-1)^{\gr{i}\gr{j}+\gr{i}\gr{k}+\gr{j}\gr{k}} \Big(L_{kj}(y)L_{i\ell'}(x)-L_{kj}(x)L_{i\ell'}(y)\Big) \, - \\
  & \quad \frac{1}{x-y+\kappa} \theta_{j}(-1)^{\gr{j}+\gr{i}\gr{k}+\gr{j}\gr{k}}
    \delta_{\ell j}\sum_{p}L_{kp'}(y)L_{ip}(x) (-1)^{\gr{p}+\gr{i}\gr{p}}\theta_{p}
\end{split}
\end{equation}
with $1\leq i,j,k,\ell\leq n+m$ is reduced to proving
\begin{equation}\label{eq:yangianalg33b}
\begin{split}
  [(\kp\km)_{ij},\kp_{k,n+m-\ell+1}]
  & = (-1)^{\gr{i}\gr{j}+\gr{i}\gr{k}+\gr{j}\gr{k}} \delta_{kj}\, \kp_{i,n+m-\ell+1} +
      (-1)^{(\gr{i}+\gr{j})\gr{k}} \delta_{\ell j'}\, \kp_{k,n+m-i+1} \,.
\end{split}
\end{equation}
Similarly, the case
\begin{equation}\label{eq:yangianalg44}
\begin{split}
  [L_{ij}(x),L_{k'\ell}(y)]
  & = \frac{1}{x-y}(-1)^{\gr{i}\gr{j}+\gr{i}\gr{k}+\gr{j}\gr{k}} \Big(L_{k'j}(y)L_{i\ell}(x)-L_{k'j}(x)L_{i\ell}(y)\Big) \, + \\
  & \quad \frac{1}{x-y+\kappa} (-1)^{\gr{i}+\gr{i}\gr{j}} \theta_{i}\delta_{ki}
    \sum_{p}L_{pj}(x)L_{p'\ell}(y)(-1)^{\gr{j}\gr{p}}\theta_{p}
\end{split}
\end{equation}
with $1\leq i,j,k,\ell\leq n+m$ is reduced to proving
\begin{equation} \label{eq:kkcom2}
\begin{split}
  [(\kp\km)_{ij},\km_{n+m-k+1,\ell}]
  & = -(-1)^{\gr{i}\gr{j}+\gr{i}\gr{k}+\gr{j}\gr{k}} \delta_{i\ell}\, \km_{n+m-k+1,j} -
       (-1)^{\gr{j}+\gr{i}+\gr{i}\gr{j}} \delta_{ki}\, \km_{n+m-j+1,\ell} \,.
\end{split}
\end{equation}
Both equations \eqref{eq:yangianalg33b} and \eqref{eq:kkcom2} can be derived directly from \eqref{eq:comK}.

The case with $i,j',k',\ell$ for $1\leq i,j,k,\ell\leq n+m$ is verified using \eqref{eq:comL} and
Lemma~\ref{lem:quadraticterm}, while the remaining three equations follow directly from Lemma~\ref{lem:quadraticterm}.

This completes our proof of the theorem.
\end{proof}

\begin{Lem}\label{lem:quadraticterm}
 For the matrix $L(x)$ of~\eqref{eq:linlaxdeg}, we have the following matrix equalities:
\begin{equation}\label{eq:quadsimp1}
\begin{split}
  \theta_{j'}\sum_{p} (-1)^{\gr{j}\gr{p}}\theta_{p}\, L_{pj}(x)L_{p'\ell}(y)
  &=\left(\begin{BMAT}[5pt]{c:c}{c:c}
      -(x-y+\kappa)\JD_{n+m}\km & (x+\kappa)\JD_{n+m} \\
      y\JD_{n+m} & 0
    \end{BMAT} \right)_{j,\ell}
\end{split}
\end{equation}
and
\begin{equation}\label{eq:quadsimp2}
  \theta_{i}\sum_{p} (-1)^{\gr{p}+\gr{i}\gr{p}}\theta_{p}\, L_{kp'}(y)L_{ip}(x) =
  \left(\begin{BMAT}[5pt]{c:c}{c:c}
    (x-y+\kappa)\kp\JD_{n+m} & y\JD_{n+m} \\
    (x+\kappa)\JD_{n+m} & 0
  \end{BMAT} \right)_{k,i} \,.
\end{equation}
\end{Lem}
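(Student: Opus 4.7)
The two claimed identities are matrix equalities with the free indices $(j,\ell)$ (resp.\ $(k,i)$) ranging over $\{1,\ldots,2n+2m\}^2$. My plan is a direct block-matrix computation that exploits the $2\times 2$ block structure of $L(x)$ in~\eqref{eq:linlaxdeg}. First I partition the index range at $n+m$, which displays the right-hand sides of~\eqref{eq:quadsimp1} and~\eqref{eq:quadsimp2} as $2\times 2$ block matrices and reduces each of the two lemmas to four sub-identities, indexed by which block $j,\ell$ (resp.\ $k,i$) lie in.

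For each sub-identity I split the sum over $p$ at $n+m$. When $p\le n+m$ one has $p'>n+m$, so by~\eqref{eq:linlaxdeg} the factor $L_{pj}(x)$ comes from the top row of $L$ and $L_{p'\ell}(y)$ from the bottom; for $p>n+m$ the roles are reversed. Substituting the block entries of $L$ turns each sub-identity into an explicit quadratic expression in the entries of $\kp,\km$ (coming from the $x\ID-\kp\km$, $\kp$ and $-\km$ blocks) together with terms linear in $x$ and $y$ (from the $x\ID$ and $\ID$ blocks). Matching separately the coefficients of $x$, of $y$ and of the spectral-parameter-independent piece on the two sides yields three purely combinatorial identities among $\kp,\km$ per case. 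These are verified using the characteristic antidiagonal symmetries built into~\eqref{eq:Koscillators}: $\ap$ is skew along its antidiagonal, $\bp$ is symmetric along its antidiagonal, the south-east block of $\kp$ is $-\JD_m\bar{\mathbf{C}}^t\JD_n$, and analogously for $\km$. Together with the relation $\theta_{p'}=(-1)^{|p|}\theta_p$ from~\eqref{eq:grrel} and the region-dependent values of $\theta_p$ and $|p|$, the ``$\kp\km$ times $\kp$'' and ``$\kp\km$ times $\km$'' cross-terms cancel pairwise between the contributions at $p$ and $p'$, leaving exactly the right-hand sides of~\eqref{eq:quadsimp1} and~\eqref{eq:quadsimp2}; the linear-in-$x$ and linear-in-$y$ pieces combine, using $\kappa=n-m-1$, to reproduce the $\pm(x-y+\kappa)$, $(x+\kappa)$ and $y$ prefactors of the $\JD_{n+m}$ blocks.

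The main obstacle is the careful sign bookkeeping, since three independent sources of signs are at play---the fermion signs $(-1)^{|j||p|}$ (and $(-1)^{|p|+|i||p|}$), the $\theta$-signs, and the parities $|p|$---and all three change as $p$ crosses the boundaries $n,\, n+m,\, n+2m$ of the four regions determined by~\eqref{eq:our-osp-grading} and~\eqref{eq:theta}. Organizing the computation region by region so that the $p$- and $p'$-contributions are paired up appropriately makes the cancellations and the target block shape manifest; otherwise the intermediate expressions look deceptively complicated. Once this bookkeeping is performed, each of the eight block equations reduces to an elementary linear-algebra statement directly implied by the defining shape~\eqref{eq:Koscillators} of $\kp,\km$.
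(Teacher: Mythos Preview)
Your outline follows the same route as the paper: verify~\eqref{eq:quadsimp1} and~\eqref{eq:quadsimp2} block by block after partitioning the indices at $n+m$, using the antidiagonal (anti)symmetries of $\kp,\km$ encoded in~\eqref{eq:Koscillators} together with~\eqref{eq:grrel}. The paper packages these symmetries as the single relation
\[
  \kp_{j,n+m-i+1}=-(-1)^{|i||j|}\kp_{i,n+m-j+1}\qquad(1\le i,j\le n+m)
\]
and its $\km$-analogue, which is exactly what your ``pairing $p$ with $p'$'' amounts to.

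There is, however, one imprecision in your sketch that hides the only genuinely nontrivial step. You write that the $\kappa$ in the prefactors $\pm(x-y+\kappa)$ and $(x+\kappa)$ is produced by the linear-in-$x$ and linear-in-$y$ pieces. It is not: those pieces supply only the bare $x$ and $y$. The constant $\kappa$ comes from the spectral-parameter-independent \emph{quadratic} oscillator terms. Concretely, in the upper-right block of~\eqref{eq:quadsimp1} the relevant residue is
\[
  -\sum_{p=1}^{n+m}(-1)^{|j||p|+|j|+|p|}\,\km_{n+m-p+1,j}\,\kp_{p,n+m-\ell+1}
  \;-\;(-1)^{|j||\ell|+|\ell|}(\kp\km)_{\ell j}\;=\;\kappa\,\delta_{j\ell}\,,
\]
i.e.\ a normal-ordering identity: rewriting $\km\kp$ as $\kp\km$ plus super\-commutators and summing over $p$ produces the integer $n-m-1=\kappa$ from the oscillator relations~(4.1)--(4.6). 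The upper-left block then reduces to this same identity after one more application of the antidiagonal symmetry. So the ``cubic cancellations'' you describe are indeed mechanical, but the appearance of $\kappa$ is a commutator count in the quadratic piece, not a byproduct of the $x,y$-linear terms; make sure this step is made explicit when you write the proof out.
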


\begin{proof}
Let us verify \eqref{eq:quadsimp1}. For the lower right block, we use that
\begin{equation}
  (-1)^{\gr{i}\gr{j}}\theta_{i'} L_{i'j}(x) + (-1)^\gr{j}\theta_{j}L_{j'i}(x)=0
  \qquad \text{for} \quad n+m+1\leq i,j\leq 2n+2m \,,
\end{equation}
which can be shown using the equality
\begin{equation}\label{eq:ksym}
  \kp_{j,n+m-i+1} = -(-1)^{\gr{i}\gr{j}} \kp_{i,n+m-j+1}
  \qquad \text{for} \quad 1\leq i,j\leq n+m \,.
\end{equation}
For the upper right block, we use that
\begin{equation}
  (-1)^{\gr{j}\gr{\ell}+\gr{j}}\theta_{j}\theta_{\ell}\, L_{\ell j}(x) +
  \sum_{p=1}^{n+m} \theta_{j} (-1)^{\gr{j}\gr{p}+\gr{j}+\gr{p}}\theta_{p}\, L_{p'j}(x)L_{p\ell'}(y)
  = (x+\kappa)\delta_{j\ell}
\end{equation}
for any $1\leq j,\ell\leq n+m$. The proof of this follows from the equality
\begin{equation}\label{eq:L12relation}
  -\sum_{p=1}^{n+m} (-1)^{\gr{j}\gr{p}+\gr{j}+\gr{p}}\, \km_{n+m-p+1,j}\kp_{p,n+m-\ell+1} -
  (-1)^{\gr{j}\gr{\ell}+\gr{\ell}} (\kp\km)_{\ell j} = \kappa\,\delta_{j\ell}
\end{equation}
for any $1\leq j,\ell\leq n+m$. For the lower left block, we use the relation
\begin{equation}
  L_{j\ell}(y) + \theta_{j}\sum_{p=1}^{n+m} (-1)^{\gr{j}\gr{p}}\theta_{p}\, L_{pj'}(x)L_{p'\ell}(y)=y\delta_{j\ell}
  \qquad \text{for} \quad 1\leq j,\ell\leq n+m
\end{equation}
based on
\begin{equation}
  (\kp\km)_{j\ell} + \sum_{p=1}^{n+m}(-1)^{\gr{j}\gr{p}}\, \kp_{p,n+m-j+1}\km_{n+m-p+1,\ell}=0
  \qquad \text{for} \quad 1\leq j,\ell\leq n+m \,,
\end{equation}
which follows from \eqref{eq:ksym}. Finally, for the upper left block, we note that
\begin{equation}
  \sum_{p=1}^{n+m} (-1)^{\gr{j}\gr{p}+\gr{j}} L_{pj}(x)L_{p'\ell}(y) +
  \sum_{p=1}^{n+m} (-1)^{\gr{j}\gr{p}+\gr{j}+\gr{p}} L_{p'j}(x)L_{p\ell}(y) =
  -(x-y+\kappa)\km_{n+m-j+1,\ell}
\end{equation}
for any $1\leq j,\ell\leq n+m$. This relation is equivalent to
\begin{multline}\label{eq:L11relation}
  \sum_{p=1}^{n+m} (-1)^{\gr{j}\gr{p}+\gr{j}} (\kp\km)_{pj}\km_{n+m-p+1,\ell} +
  \sum_{p=1}^{n+m} (-1)^{\gr{j}\gr{p}+\gr{j}+\gr{p}}\, \km_{n+m-p+1,j}(\kp\km)_{p\ell} = \\
  -\kappa\, \km_{n+m-j+1,\ell} \,.
\end{multline}
Due to the symmetry
\begin{equation}
  (-1)^{\gr{j}\gr{\ell}+\gr{j}+\gr{\ell}}\, \km_{n+m-\ell+1,j} = -\km_{n+m-j+1,\ell}
  \qquad \text{for} \quad 1\leq j,\ell\leq n+m \,,
\end{equation}
the relation \eqref{eq:L11relation} can be easily verified using \eqref{eq:L12relation}.

The proof of~\eqref{eq:quadsimp2} is completely analogous; we leave details to the interested reader.
\end{proof}

\begin{Rem}\label{rem:BCD-comparison-1}
In particular, we recover orthogonal and symplectic degenerate linear Lax~matrices:
\begin{itemize}

\item
For $m=0$, we recover precisely the $D_n$-type Lax matrix of~\cite[(2.231)]{ft1}, constructed first in~\cite[(4.3)]{f}.

\item
For $n=0$, we get
\begin{equation}
  L(-x)=
  \left(\begin{BMAT}[5pt]{c:c}{c:c}
    -x\ID_{m}+\bp\bm & \bp \\
    \bm & \ID_{m}
  \end{BMAT} \right) =
  -L_{FKT}(x)
  \left(\begin{BMAT}[5pt]{c:c}{c:c}
    \ID_{m} & 0 \\
    0 & -\ID_{m}
  \end{BMAT} \right) \,,
\end{equation}
where $L_{FKT}(x)$ is the $C_m$-type Lax matrix of~\cite[(8.55)]{fkt}, discovered first in~\cite[(3.50)]{ft1}.
We note that the constant matrix above is $\ID_\theta$ of~\eqref{eq:ID-theta matrix} and hence it can be dropped out
of the RTT-relation due to the invariance $[R(x),\ID_\theta \otimes \ID_\theta]=0$ established in Lemma~\ref{lem:jt}.

\end{itemize}
\end{Rem}


\subsection{Non-degenerate linear Lax matrix through the fusion of two degenerate}\label{sec:Facllm}
\

Similarly to the $\gl(n|m)$-case of Section~\ref{sec:Atype}, one can fuse two degenerate orthosymplectic Lax
matrices to obtain a non-degenerate linear one. The main result of this subsection is Proposition~\ref{prop:lin}.

Evoking the Lax matrix $L(x)$ of~\eqref{eq:linlaxdeg}, we define
\begin{equation}\label{eq:conjL}
  \bL_\theta(x)=\JD_\theta L(x) \JD_\theta^{-1}
\end{equation}
with $\JD_\theta$ as in~\eqref{eq:jd-theta}:
\begin{equation}
  \JD_\theta=
  \left(\begin{BMAT}[5pt]{c:c}{c:c}
    0 & -\JD_{n} \\
    \JD_{n+2m } & 0
  \end{BMAT} \right)
    \,, \qquad
  \JD_\theta^{-1}=
  \left(\begin{BMAT}[5pt]{c:c}{c:c}
    0 & \JD_{n+2m} \\
    -\JD_{n} & 0
  \end{BMAT} \right) \,.
\end{equation}
Due to the invariance $[R(x),\JD_\theta\otimes \JD_\theta]=0$ of the orthosymplectic $R$-matrix~\eqref{eq:R-osp},
established in Lemma~\ref{lem:jjt}, the matrix $\bL_\theta(x)$ in \eqref{eq:conjL} is a solution to the same
RTT-relation \eqref{eq:rttA}, hence, is a Lax matrix.

\begin{Rem}
In contrast to~\eqref{eq:hat-L}, $\BZ_2$-gradings of the Lax matrices $\bL_\theta(x)$ and $L(x)$ coincide.
\end{Rem}

Explicitly, we have:
\begin{equation}\label{eq:barmatrix}
  \bL_\theta(y)=\,
  \left(\begin{BMAT}[5pt]{c:c}{c:c}
    \ID_{n+m} & \km^\theta \\
    \kp^\theta & y\ID_{n+m} + \kp^\theta\km^\theta
  \end{BMAT}\right)
\end{equation}
with
\begin{equation}\label{eq:linear-Lax-D}
  \kp^\theta=
  \left(\begin{BMAT}[5pt]{c:c}{c:c}
    \bar{\mathbf{C}}^t & \JD_m \bp \JD_m \\
    -\JD_n \ap \JD_n & \JD_n \bar{\mathbf{C}} \JD_m
  \end{BMAT} \right)
    \qquad \mathrm{and} \qquad
  \km^\theta=
  \left(\begin{BMAT}[5pt]{c:c}{c:c}
    {\mathbf{C}}^t & \JD_n \am \JD_n \\
    \JD_m \bm \JD_m & -\JD_m {\mathbf{C}} \JD_n
  \end{BMAT} \right) \,.
\end{equation}
We further apply the fermionic particle-hole transformation
\begin{equation}\label{eq:phc}
  \oc_{ij} \mapsto  \ocd_{ji} \,,\quad \ocd_{ji} \mapsto  \oc_{ij}
  \qquad \mathrm{for} \quad 1\leq j\leq n \,,\ n+m+1\leq i\leq n+2m \,,
\end{equation}
as well as the bosonic particle-hole transformation
\begin{equation}\label{eq:pha}
  \oa_{ij} \mapsto -\oad_{ji} \,,\quad \oad_{ji} \mapsto \oa_{ij}
  \quad \mathrm{for} \quad 1\leq j\leq n-1 \,,\ n+2m+1\leq i\leq 2n+2m-j \,,
\end{equation}
\begin{equation}\label{eq:phb}
  \ob_{ij} \mapsto (1+\delta_{ij'}) \obd_{ji} \,,\quad \obd_{ji} \mapsto -\frac{1}{1+\delta_{ij'}} \ob_{ij}
  \quad \mathrm{for} \quad n+1\leq j\leq n+m < i\leq 2n+2m+1-j \,.
\end{equation}
This yields the following Lax matrix:
\begin{equation}\label{eq:ospLax-degen-2}
  \bL(y) = \bL_\theta(y)|_{p.h.} =
  \left(\begin{BMAT}[5pt]{c:c}{c:c}
    \ID_{n+m} & \kp \\
    \km & y\ID_{n+m} + \km\kp
  \end{BMAT}\right)
\end{equation}
with $\kp,\km$ precisely as in~\eqref{eq:Koscillators}.

\begin{Rem}\label{rem:BCD-comparison-2}
In particular, we recover orthogonal and symplectic degenerate linear Lax~matrices:
\begin{itemize}

\item
For $m=0$, we recover precisely the $D_n$-type Lax matrix $L_{(-,\ldots,-)}(x)$ of~\cite[(8.84)]{fkt}
(after swapping indices in the generators, i.e.\ $\oa_{ij}\mapsto \oa_{ji}$, $\oad_{ji}\mapsto \oad_{ij}$).

\item
For $n=0$, we get
\begin{equation}
  \bL(-x)=
  \left(\begin{BMAT}[5pt]{c:c}{c:c}
    \ID_m & \bp \\
    -\bm & -x\ID_{m} - \bm\bp
  \end{BMAT} \right) =
  \left(\begin{BMAT}[5pt]{c:c}{c:c}
    \ID_{m} & 0 \\
    0 & -\ID_{m}
  \end{BMAT} \right) L^-_{FKT}(x) \,,
\end{equation}
where $L^-_{FKT}(x)$ is the $C_m$-type Lax matrix equivalent to $L_{(-,\ldots,-)}(x)$ of~\cite[(8.56)]{fkt}
  (after swapping indices and redistributing factor 2, i.e.\
   $\ob_{ij}\mapsto (1+\delta_{ij'})\ob_{ji}$, $\obd_{ji}\mapsto \frac{1}{1+\delta_{ij'}}\obd_{ij}$).
We note that the constant matrix above is $\ID_\theta$ of~\eqref{eq:ID-theta matrix} and hence it can be dropped out
of the RTT-relation due to the invariance $[R(x),\ID_\theta \otimes \ID_\theta]=0$ established in Lemma~\ref{lem:jt}.

\end{itemize}
\end{Rem}

Let us now consider two copies of mutually supercommuting superoscillators labelled by the extra superscript $i=1,2$,
which will be now encoded by the corresponding matrices $\ap_i,\am_i,\bp_i,\bm_i,\bar{\mathbf{C}}_i,\mathbf{C}_i$,
hence also, the subscript in $\kp_i,\km_i$. We consider the following two orthosymplectic Lax matrices:
\begin{equation}\label{eq:twomatrices}
  L^{[1]}(x) = \,
  \left(\begin{BMAT}[5pt]{c:c}{c:c}
      x\ID_{n+m}-\kp_1\km_1 & \kp_1 \\
      -\km_1 & \ID_{n+m}
  \end{BMAT}\right)
    \,, \qquad
  \bL^{[2]}(y)=\,
  \left(\begin{BMAT}[5pt]{c:c}{c:c}
    \ID_{n+m} & \kp_2 \\
    \km_2 & y\ID_{n+m} + \km_2\kp_2
  \end{BMAT}\right)
\end{equation}
of~(\ref{eq:linlaxdeg},~\ref{eq:ospLax-degen-2}) with
\begin{equation}\label{eq:linear-Lax-D2}
  \kp_i=
  \left(\begin{BMAT}[5pt]{c:c}{c:c}
    \bar{\mathbf{C}}_i & \ap_i \\
    \bp_i & -\JD_m \bar{\mathbf{C}}_i^t \JD_n
  \end{BMAT} \right)
    \qquad \mathrm{and} \qquad
  \km_i=
  \left(\begin{BMAT}[5pt]{c:c}{c:c}
    {\mathbf{C}}_i & -\bm_i \\
    \am_i & \JD_n {\mathbf{C}}^t_i \JD_m
  \end{BMAT} \right) \,.
\end{equation}

The Lax matrices in \eqref{eq:twomatrices} obey the following factorisation formula,
cf.~(\ref{eq:linear factorization1},~\ref{eq:tra121}):
\begin{equation}\label{eq:linear factorization}
  L^{[1]}(x)\bL^{[2]}(y) = \,
  \left(\begin{BMAT}[5pt]{c:c}{c:c}
     x\ID_{n+m}-\kp_1'\km_1' & \left((y-x )\ID_{n+m}+\kp_1'\km_1'\right)\kp_1' \\
     -\km_1' & y\ID_{n+m}+\km_1'\kp_1'
  \end{BMAT}\right)
    \, . \,
  \left(\begin{BMAT}[5pt]{c:c}{c:c}
    \ID_{n+m} & \kp'_2 \\
    0 & \ID_{n+m}
  \end{BMAT}\right)
\end{equation}
where we set
\begin{equation}\label{eq:tra12}
\begin{split}
  & \km'_1=\km_1-\km_2 \,, \qquad \kp'_1=\kp_1 \,, \\
  & \kp'_2=\kp_2+\kp_1 \,, \qquad \km'_2=\km_2 \,.
\end{split}
\end{equation}
Moreover, similarly to~(\ref{eq:Atype-similarity},~\ref{eq:S-Atype}), the transformation \eqref{eq:tra12}
can be expressed through a similarity transformation in the superoscillator space:
\begin{equation}
  \km_i' = \mathbf{S} \km_i \mathbf{S}^{-1} \,,\qquad
  \kp_i'= \mathbf{S}\kp_i \mathbf{S}^{-1} \qquad (i=1,2)
\end{equation}
with
\begin{equation}\label{eq:Ksim}
  \mathbf{S} =
  \exp \left[
    \sum_{i,j}\oad_{ji}^{[1]}\oa_{ij}^{[2]} + \sum_{i,j}\obd_{ji}^{[1]}\ob_{ij}^{[2]} + \sum_{i,j} \ocd_{ji}^{[1]}\oc_{ij}^{[2]}
  \right] \,,
\end{equation}
where the indices $i,j$ take all possible values as given in~\eqref{eq:phc}--\eqref{eq:phb}.
We note that all the summands in the exponent above are bosonic and pairwise supercommute.

We thus obtain the key result of this subsection:

\begin{Prop}\label{prop:lin}
(a) For any $x_1,x_2\in\mathbb{C}$, the matrix
\begin{equation}\label{eq:linlaxnorm}
  \mathcal{L}_{x_1,x_2}(x)=
  \left(\begin{BMAT}[5pt]{c:c}{c:c}
    (x+x_1)\ID_{n+m}-\kp_1\km_1 & \left((x_2-x_1)\ID_{n+m}+\kp_1\km_1\right)\kp_1 \\
     -\km_1 & (x+x_2)\ID_{n+m}+\km_1\kp_1
  \end{BMAT}\right)
\end{equation}
is a solution to the RTT-relation \eqref{eq:rttA} with the $R$-matrix~\eqref{eq:R-osp}, hence, is a Lax matrix.

\medskip
\noindent
(b) The Lax matrix from (a) is a fusion of two degenerate Lax matrices in \eqref{eq:twomatrices} through:
\begin{equation}\label{eq:linear factorizationLax1}
  L^{[1]}(x+x_1)\bL^{[2]}(x+x_2)=
  \mathbf{S} \, \mathcal{L}_{x_1,x_2}(x) \,
  \left(\begin{BMAT}[5pt]{c:c}{c:c}
    \ID_{n+m} & \kp_2\\
    0 & \ID_{n+m}
  \end{BMAT}\right)
  \mathbf{S}^{-1}
\end{equation}
with the similarity transformation $\mathbf{S}$ of \eqref{eq:Ksim}.
\end{Prop}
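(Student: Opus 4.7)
The plan is to establish part (b) first by direct matrix multiplication, then deduce part (a) from (b) via fusion and a vacuum specialization.

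For (b), the key computation is the explicit block multiplication of $L^{[1]}(x+x_1)$ and $\bL^{[2]}(x+x_2)$ in the form \eqref{eq:twomatrices}, followed by reorganizing the resulting $2\times 2$ block matrix using the primed generators $\km_1'=\km_1-\km_2$, $\kp_1'=\kp_1$, $\kp_2'=\kp_2+\kp_1$, $\km_2'=\km_2$ of \eqref{eq:tra12}. A block-by-block inspection should show that the product factors as $\mathcal{L}'\cdot U'$, where $\mathcal{L}'$ has the form \eqref{eq:linlaxnorm} with $\kp_1,\km_1$ replaced by $\kp_1',\km_1'$, and $U'$ is the block upper-triangular matrix with $\kp_2'$ in the top-right corner. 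The final step is to check that $\mathbf{S}$ of \eqref{eq:Ksim} implements \eqref{eq:tra12} via conjugation: $\mathbf{S}\,\kp_i\,\mathbf{S}^{-1}=\kp_i'$ and $\mathbf{S}\,\km_i\,\mathbf{S}^{-1}=\km_i'$ for $i=1,2$. Since each summand in the exponent of $\mathbf{S}$ is bosonic and the summands mutually supercommute, the Baker--Campbell--Hausdorff series reduces to first-order commutators with each generator, which by the standard oscillator commutation relations produce exactly the shifts recorded in \eqref{eq:tra12}. Recognizing $\mathcal{L}'=\mathbf{S}\,\mathcal{L}_{x_1,x_2}(x)\,\mathbf{S}^{-1}$ and $U'=\mathbf{S}\,U\,\mathbf{S}^{-1}$ then rewrites the factorization as $L^{[1]}(x+x_1)\bL^{[2]}(x+x_2)=\mathbf{S}\,\mathcal{L}_{x_1,x_2}(x)\,U\,\mathbf{S}^{-1}$, which is \eqref{eq:linear factorizationLax1}.

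For (a), the LHS of \eqref{eq:linear factorizationLax1} is Lax: $L^{[1]}$ is Lax by Theorem~\ref{thm:degenerate-linear-Lax}, $\bL^{[2]}$ is Lax by its construction in Subsection~\ref{sec:Facllm}, and their entries involve mutually supercommuting oscillator families, so by the standard fusion argument the product satisfies the RTT-relation \eqref{eq:rttA}. Since $\mathbf{S}$ lies entirely in the oscillator algebra and acts trivially in the auxiliary $V\otimes V$, conjugation by $\mathbf{S}$ preserves RTT, so $\mathcal{L}_{x_1,x_2}(x)\cdot U$ is Lax as well. To extract the RTT-relation for $\mathcal{L}_{x_1,x_2}(x)$ alone, one evaluates the RTT-identity for $\mathcal{L}_{x_1,x_2}(x)\cdot U$ on the vacuum of the Fock representation of the family-$2$ oscillators: the entries of $\mathcal{L}_{x_1,x_2}(x)$ commute past $U$ (disjoint oscillator families), while $\langle 0|U_1U_2|0\rangle=\ID_{V\otimes V}$ because every nonzero entry of $U-\ID$ belongs to $\kp_2$ and is thus a pure family-$2$ creation operator with vanishing vacuum expectation. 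The resulting identity is exactly the RTT-relation for $\mathcal{L}_{x_1,x_2}(x)$.

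The main technical obstacle will be the bookkeeping in part (b). The orthosymplectic block structure of $\kp$ and $\km$, together with the antidiagonal reflections $\JD_n,\JD_m$ and the parity signs from \eqref{eq:our-parity}, couples the bosonic matrices $\ap,\am,\bp,\bm$ with the fermionic matrices $\bar{\mathbf{C}},\mathbf{C}$ across all four blocks. Verifying that the upper-right block of the product equals $((x_2-x_1)\ID_{n+m}+\kp_1'\km_1')\kp_1'$ exactly (rather than only modulo commutator corrections) relies on the supercommutativity of the two oscillator families together with the symmetry $\kp_{j,n+m-i+1}=-(-1)^{|i||j|}\kp_{i,n+m-j+1}$ recorded in \eqref{eq:ksym} and already exploited in the proof of Theorem~\ref{thm:degenerate-linear-Lax}. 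Once this matrix identity is nailed down, the similarity-transformation step and the vacuum-specialization step are essentially formal.
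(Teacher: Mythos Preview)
Your plan matches the paper's approach: the factorisation~\eqref{eq:linear factorization} together with the similarity transformation~\eqref{eq:Ksim} is exactly how the paper arrives at the proposition, and your vacuum--specialisation argument for extracting the RTT--relation of $\mathcal{L}_{x_1,x_2}(x)$ from that of $\mathcal{L}_{x_1,x_2}(x)\cdot U$ is a legitimate way to fill in the step the paper leaves implicit.

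One correction to your ``main technical obstacle'' paragraph: the block identity~\eqref{eq:linear factorization} is a \emph{purely formal} matrix equality, valid for arbitrary $(n+m)\times(n+m)$ matrices $\kp_1,\km_1,\kp_2,\km_2$ over any associative algebra. It uses neither the supercommutativity of the two oscillator families nor the antidiagonal symmetry~\eqref{eq:ksym}; a direct four--block expansion (e.g.\ the upper--right block reads $x\kp_2+y\kp_1-\kp_1(\km_1-\km_2)\kp_2$ on both sides) checks this without any commutator manipulations. The orthosymplectic structure of $\kp,\km$ enters only indirectly, through the fact that $L^{[1]}$ and $\bL^{[2]}$ are Lax and through the form of~$\mathbf{S}$, not through the factorisation itself. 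So the bookkeeping you anticipate is much lighter than you suggest.
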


\begin{Rem}\label{rem:BCD-comparison-3}
Similarly to Remarks~\ref{rem:BCD-comparison-1} and~\ref{rem:BCD-comparison-2}, we note that the orthosymplectic Lax matrix
$\mathcal{L}_{x_1,x_2}(x)$ of~\eqref{eq:linlaxnorm} generalizes the authors' previous work for orthogonal and symplectic types:
\begin{itemize}

\item
For $m=0$, we recover precisely the $D_n$-type Lax matrix of~\cite[(6.2)]{fkt} when setting $x_1=t$ and $x_2=-t-n+1$.

\item
For $n=0$, setting $x_1=-t$ and $x_2=t+m+1$, we recover the $C_m$-type Lax matrix $\mathcal{L}_{FKT}(-x)$ of~\cite[(5.3)]{fkt}:
\begin{equation}
\begin{split}
  \mathcal{L}_{x_1=-t,x_2=t+m+1}(x) =
  \left(\begin{BMAT}[5pt]{c:c}{c:c}
    (x-t)\ID_{m} + \bp\bm & \bp\left((2t+m+1)\ID_{m}-\bm\bp\right) \\
    \bm & (x+t+m+1)\ID_{m}-\bm\bp
  \end{BMAT}\right) =
  -\mathcal{L}_{FKT}(-x) \,.
\end{split}
\end{equation}

\end{itemize}
\end{Rem}

\begin{Rem}\label{rem:normalized-limit-osp}
Similarly to Remark~\ref{rem:Atype-normalized-limits} and generalizing~\cite[\S8]{fkt}, we can vice versa obtain
the matrices $L(x)$ and $\bL(x)$ of~(\ref{eq:linlaxdeg},~\ref{eq:ospLax-degen-2}) from the non-degenerate linear
Lax matrix $\mathcal{L}_{x_1,x_2}(x)$ of \eqref{eq:linlaxnorm} via the \emph{renormalized limit} procedures
(which clearly preserve the property of being Lax):
\begin{equation}\label{eq:osp renormalized limit}
\begin{split}
  & L(x) =
    \lim_{t\to \infty}\ \Big\{\mathcal{L}_{0,t}(x)\cdot \mathrm{diag}
    \Big(\underbrace{1,\ldots,1}_{n+m};\underbrace{\sfrac{1}{t},\ldots,\sfrac{1}{t}}_{n+m}\Big)\Big\} \,, \\
  & \bL(x) = \lim_{t\to \infty}\ \Big\{\mathrm{diag}
    \Big(\underbrace{\sfrac{1}{t},\ldots,\sfrac{1}{t}}_{n+m};\underbrace{1,\ldots,1}_{n+m}\Big)
    \cdot \mathcal{L}_{t,0}(x)\Big\}\Big|_{\kp \mapsto -\kp \,,\, \km \mapsto -\km}  \,.
\end{split}
\end{equation}
\end{Rem}

\begin{Rem}\label{rem:linear-Lax-all-parities}
According to Remark~\ref{rem:different parities osp}, the Lax matrix $L(x)$ from Theorem~\ref{thm:degenerate-linear-Lax}
and the Lax matrix $\mathcal{L}_{x_1,x_2}(x)$ from Proposition~\ref{prop:lin}(a) give rise to the corresponding degenerate
and non-degenerate linear Lax matrices for all $\BZ_2$-gradings of $V$.
\end{Rem}


\section{Quadratic orthosymplectic Lax matrices}\label{sec:quadratic-Lax}

In this section, we investigate some quadratic superoscillator orthosymplectic Lax matrices.


\subsection{From linear to quadratic Lax matrices}\label{sec:fl2q}
\

In this subsection, we consider a factorisation formula different from the one presented in Subsection~\ref{sec:Facllm}
that allows to derive a degenerate quadratic Lax matrix, see Theorem~\ref{prop:quadL}.

Consider the Lax matrix of \eqref{eq:linlaxdeg} written in the block form with blocks on the diagonal
of size $1\times 1$, $(n+m-1)\times (n+m-1)$, $(n+m-1)\times (n+m-1)$, and $1\times 1$, respectively:
\begin{equation}\label{eq:lindec}
  L(x)=
  \left(\begin{BMAT}[5pt]{c|c:c|c}{c|c:c|c}
    x-\vp\vm & -\vp\kmm & \vp & 0 \\
    -\kpp\vm & x\id_{n+m-1} - \kpp\kmm + \idb_{n+m-1}\vp^t\vm^t\idg^{-1}_{n-1,m} & \kpp & -\idb_{n+m-1}\vp^t \\
    -\vm & -\kmm & \id_{n+m-1} & 0 \\
    0 & -\vm^t\idg^{-1}_{n-1,m} & 0 & 1
  \end{BMAT} \right) \,.
\end{equation}
Here, we expressed $\kp,\km$ of~\eqref{eq:Koscillators} via
\begin{equation}\label{eq:kpkm}
  \kp =
  \left(\begin{BMAT}[5pt]{c|c}{c|c}
    \vp & 0 \\
    \kpp & -\idb_{n+m-1} \vp^t
  \end{BMAT}\right)
    \,,\qquad
  \km =
  \left(\begin{BMAT}[5pt]{c|c}{c|c}
    \vm & \kmm \\
    0 & \vm^t\idg^{-1}_{n-1,m}
  \end{BMAT}\right)
\end{equation}
with
\begin{equation}\label{eq:gmat}
 \idg_{n-1,m} =
 \left(\begin{BMAT}[5pt]{c:c}{c:c}
   0 & -\JD_{n-1} \\
   \JD_m & 0
 \end{BMAT}\right)
\end{equation}
as in~\eqref{eq:hatJd-idg}, and used the $(n+m-1)$-dimensional vectors
\begin{equation}\label{eq:vp}
  \vp=
  \left(\begin{array}{cccccc}
    \ocd_{1,n+m+1} & \cdots & \ocd_{1,n+2m} & \oad_{1,n+2m+1} & \cdots & \oad_{1,2n+2m-1}
  \end{array} \right)\,,
\end{equation}
\begin{equation}\label{eq:vm}
  \vm=
  \left(\begin{array}{cccccc}
    \oc_{n+m+1,1} & \ldots & \oc_{n+2m,1} & \oa_{n+2m+1,1} & \cdots & \oa_{2n+2m-1,1}
  \end{array}\right)^t \,.
\end{equation}
Finally, the submatrices $\km^\circ,\kp^\circ$ of $\km,\kp$ appearing in \eqref{eq:kpkm} read as follows:
\begin{equation}\label{eq:Koscillators2}
  \kp^\circ=
  \left(\begin{BMAT}[5pt]{c:c}{c:c}
    \bar{\mathbf{C}}^\circ & \ap^\circ \\
    \bp & -\JD_m (\bar{\mathbf{C}}^\circ)^t \JD_{n-1}
  \end{BMAT} \right)
    \qquad \mathrm{and} \qquad
  \km^\circ=
  \left(\begin{BMAT}[5pt]{c:c}{c:c}
    {\mathbf{C}}^\circ & -\bm \\
    \am^\circ & \JD_{n-1} ({\mathbf{C}}^\circ)^t \JD_m
  \end{BMAT} \right) \,,
\end{equation}
where the block matrices $\ap^\circ,\am^\circ$ and $\bar{\mathbf{C}}^\circ,{\mathbf{C}}^\circ$
encode the bosonic and fermionic superoscillators via
\begin{equation}\label{eq:ApAmD}
  \ap^\circ=
  \left(\begin{array}{cccc}
    \oad_{2,n+2m+1} & \cdots & \oad_{2,2n+2m-2} & 0 \\
    \vdots & \iddots & 0 & - \oad_{2,2n+2m-2} \\
    \oad_{n-1,n+2m+1} & 0 & \iddots & \vdots \\
    0 & -\oad_{n-1,n+2m+1} & \cdots & -\oad_{2,n+2m+1}
  \end{array}\right) \,,
\end{equation}
\begin{equation}
  \am^\circ=
  \left(\begin{array}{cccc}
    \oa_{n+2m+1,2} & \cdots & \oa_{n+2m+1,n-1} & 0 \\
    \vdots & \iddots & 0 & -\oa_{n+2m+1,n-1} \\
    \oa_{2n+2m-2,2} & 0 & \iddots & \vdots \\
    0 & -\oa_{2n+2m-2,2} & \cdots & - \oa_{n+2m+1,2}
  \end{array}\right)\,,
\end{equation}
\begin{equation}\label{eq:ApAmSUSY2}
  \bar{\mathbf{C}}^\circ=
  \left(\begin{array}{ccc}
    \ocd_{2,n+m+1} & \cdots & \ocd_{2,n+2m}  \\
    \vdots & \ddots & \vdots \\
    \ocd_{n,n+m+1} & \cdots & \ocd_{n,n+2m}
  \end{array}\right)
    \,, \qquad
  {\mathbf{C}}^\circ=
  \left(\begin{array}{ccc}
    \oc_{n+m+1,2} & \cdots & \oc_{n+m+1,n} \\
    \vdots & \ddots & \vdots \\
    \oc_{n+2m,2} & \cdots & \oc_{n+2m,n}
  \end{array}\right) \,,
\end{equation}
while the block matrices $\bp,\bm$ encoding the bosonic superoscillators
are precisely as in (\ref{eq:ApAmCp},~\ref{eq:ApAmCm}).

To construct the degenerate quadratic Lax matrix, we define
\begin{equation}\label{eq:conjL2}
  \hat L_\theta(x) = \hat \JD_\theta L(x) \hat\JD_\theta^{-1}
\end{equation}
with $\hat \JD_\theta$ as in~\eqref{eq:hatJd-idg}:
\begin{equation}
  \hat \JD_\theta=
  \left(\begin{BMAT}[5pt]{c|c:c|c}{c|c:c|c}
    1 & 0 & 0 & 0 \\
    0 & 0 & \idg_{n-1,m} & 0 \\
    0 & \JD_{n+m-1} & 0 & 0 \\
    0 & 0 & 0 & -1
  \end{BMAT} \right)
    \,,\qquad
  \hat \JD_\theta^{-1}=
  \left(\begin{BMAT}[5pt]{c|c:c|c}{c|c:c|c}
    1 & 0 & 0 & 0 \\
    0 & 0 & \JD_{n+m-1} & 0 \\
    0 & \idg_{n-1,m}^t & 0 & 0 \\
    0 & 0 & 0 & -1
  \end{BMAT} \right)
\end{equation}
as $\idg^t_{n-1,m}=\idg^{-1}_{n-1,m}$. The matrix $\hat L_\theta(x)$ in \eqref{eq:conjL2} is again a solution to the
RTT-relation~\eqref{eq:rttA} with the $R$-matrix~\eqref{eq:R-osp}, hence, is a Lax matrix. This follows from the invariance
$[R(x),\hat \JD_\theta \otimes \hat \JD_\theta]=0$ of the orthosymplectic $R$-matrix~\eqref{eq:R-osp}, established in Lemma~\ref{lem:hjt}.

We rename the fermionic superoscillators
\begin{equation}\label{eq:rename-1}
  \oc_{i1}\mapsto \oc_{i'1} \,,\quad  \ocd_{1i}\mapsto \ocd_{1i'} \qquad \mathrm{for} \quad n+m+1\leq i\leq n+2m
\end{equation}
and also rename the bosonic superoscillators
\begin{equation}
  \oa_{i1} \mapsto -\oa_{i'1} \,,\qquad  \oad_{1i}\mapsto -\oad_{1i'} \qquad \mathrm{for} \quad  n+2m+1\leq i\leq 2n+2m-1 \,.
\end{equation}
We further apply the following particle-hole transformation of the remaining superoscillators:
\begin{equation}
  \oc_{ij} \mapsto \ocd_{ji} \,,\qquad  \ocd_{ji}\to \oc_{ij}
  \qquad \mathrm{for} \quad  2\leq j\leq n \,,\ n+m+1\leq i\leq n+2m \,,
\end{equation}
\begin{equation}
  \oa_{ij}\mapsto -\oad_{ji} \,,\qquad   \oad_{ji}\to \oa_{ij}
  \qquad \mathrm{for} \quad  2\leq j\leq n-1 \,,\ n+2m+1\leq i\leq 2n+2m-j \,,
\end{equation}
\begin{equation}\label{eq:rename-5}
  \ob_{ij} \mapsto (1+\delta_{ij'}) \obd_{ji} \,,\qquad  \obd_{ji} \mapsto\frac{-1}{1+\delta_{ij'}} \ob_{ij}
  \quad \mathrm{for} \quad  n+1\leq j\leq n+m<i\leq 2n+2m+1-j \,.
\end{equation}
Applying the above operations~\eqref{eq:rename-1}--\eqref{eq:rename-5} to~\eqref{eq:conjL2} yields the following Lax matrix:
\begin{equation}\label{eq:LaxZwei}
  \hat L(y) = \hat L_\theta(y)|_{p.h.} =
  \left(\begin{BMAT}[5pt]{c|c:c|c}{c|c:c|c}
    y-\wp\wm & \wp & \wp\kpp & 0 \\
    -\wm & \id & \kpp & 0 \\
    -\kmm\wm & \kmm & y\id+\kmm\kpp + \idg^{-1}\wp^t\wm^t\idb & \idg^{-1} \wp^t \\
    0 & 0 & \wm^t\idb & 1
  \end{BMAT} \right)
\end{equation}
with
\begin{equation}\label{eq:wp}
  \wp=
  \left(\begin{array}{cccccc}
      \oad_{1,2} & \cdots & \oad_{1,n} & \ocd_{1,n+1} & \cdots & \ocd_{1,n+m}
  \end{array} \right) \,,
\end{equation}
\begin{equation}\label{eq:wm}
  \wm=
  \left(\begin{array}{cccccc}
    \oa_{2,1} & \ldots & \oa_{n,1} & \oc_{n+1,1} & \cdots & \oc_{n+m,1}
  \end{array}\right)^t \,,
\end{equation}
and the $(n+m-1)\times(n+m-1)$ matrices $\id$, $\idb$, $\idg$ defined via:
\begin{equation}\label{eq:no-indices}
  \id=\id_{n+m-1} \,,\qquad \idb=\idb_{n+m-1} \,,\qquad \idg=\idg_{n-1,m} \,.
\end{equation}

Similarly to~\eqref{eq:linear factorization1} and~\eqref{eq:linear factorization}, let us factorise the product of
the Lax matrices in \eqref{eq:lindec} and \eqref{eq:LaxZwei}. As before, we shall use the subscript $i=1,2$ to
distinguish between the superoscillators and the corresponding matrices (using $i=1$ in the context of $L(x)$,
$i=2$ in the context of $\hat L(y)$). One has the following factorisation:
\begin{equation}\label{eq:factorisatio-no3}
  L^{[1]}(x) \hat L^{[2]}(y) =
  \mathbf{L}'(x,y)
  \left(\begin{BMAT}[5pt]{c|c:c|c}{c|c:c|c}
    1 & 0 & 0 & 0 \\
    0 & \id & (\kp_{2}^\circ)' & 0 \\
    0 & 0 & \id & 0 \\
    0 & 0 & 0 & 1
  \end{BMAT}\right)
\end{equation}
where
\begin{equation}
  (\kp_{2}^\circ)'=\kp_{2}^\circ + \kp_{1}^\circ
  \,,\qquad (\km_{2}^\circ)'=\km_{2}^\circ \,,
\end{equation}
and $\mathbf{L}'(x,y)$ is further factorised as
\begin{equation}\label{eq:quadl}
{\small
  \mathbf{L}'(x,y)=
  \left(\begin{BMAT}[5pt]{c|c|c}{c|c|c}
    1 & \up' & \frac{1}{2}\up'M(\up')^t \\
    0 & \id_{N+2m-2} & M(\up')^t \\
    0 & 0 & 1
  \end{BMAT}\right)
  \left(\begin{BMAT}[5pt]{c|c|c}{c|c|c}
    x\cdot y & 0 & 0 \\
    0 & \mathcal{L}'(x,y) & 0 \\
    0 & 0 & 1
  \end{BMAT}\right)
  \left(\begin{BMAT}[5pt]{c|c|c}{c|c|c}
    1 & 0 & 0 \\
    -\um' & \id_{N+2m-2} & 0 \\
    \frac{1}{2}(\um')^t M\um' & -(\um')^t M & 1
  \end{BMAT}\right)
}
\end{equation}
with
\begin{equation}\label{eq:Mmatrix}
  M=
  \left(\begin{BMAT}[5pt]{c:c}{c:c}
    0 & -\idb \\
    \idg^{-1} & 0
  \end{BMAT}\right)  \,.
\end{equation}
Let us now describe $\up', \um'$, and $\mathcal{L}'(x,y)$ featuring in~\eqref{eq:quadl}.
The first two are defined via
\begin{equation}\label{eq:uprime}
  \up'=
  \left(\begin{array}{cc}
    \wp' & \vp'
  \end{array}\right)
    \,, \qquad
  \um'=\left(\begin{array}{c}
    \wm'\\
    \vm'
  \end{array}\right) \,,
\end{equation}
where the vectors $\vp',\vm'$ and $\wm',\wp'$ are given by (cf.~(\ref{eq:vp}, \ref{eq:vm}, \ref{eq:Koscillators2}, \ref{eq:wp}, \ref{eq:wm}))
\begin{equation}\label{eq:transvw1}
  \vp'=\vp -\wp\kpp_1 \,,\qquad \vm'=\vm \,,
\end{equation}
\begin{equation}\label{eq:transvw2}
  \wm'=\wm+\kpp_1\vm \,,\qquad \wp'=\wp \,.
\end{equation}
The matrix $\mathcal{L}'(x,y)$ in the middle factor of the factorisation \eqref{eq:quadl} reads (cf.~\eqref{eq:linlaxnorm})
\begin{equation}\label{eq:innerLax}
  \mathcal{L}'(x,y)=
  \left(\begin{BMAT}[5pt]{c:c}{c:c}
    x\ID-(\kpp_1)'(\kmm_1)' & \left((y-x)\ID+(\kpp_1)'(\kmm_1)'\right)(\kpp_1)' \\
    - (\kmm_1)' & y\ID_{}+(\kmm_1)'(\kpp_1)'
  \end{BMAT}\right) \,
\end{equation}
where
\begin{equation}\label{eq:transKm}
  (\kmm_1)'=\kmm_1-\kmm_2+\vm\wp-\idg^{-1}\wp^t\vm^t\idg^{-1}\,,\qquad (\kpp_1)'=\kpp_1  \,.
\end{equation}

\begin{Rem}
In the derivation of \eqref{eq:quadl}, we used the following two equalities:
\begin{equation}\label{eq:transpose-consistency-1}
  (\vp')^t=\vp^t - \idb\kpp_1\idg^{-1}\wp^t \,,\qquad
  (\wm')^t=\wm^t+\vm^t\idg^{-1} \kpp_1\idb \,,
\end{equation}
where $\vp'$ and $\wm'$ are given by~\eqref{eq:transvw1} and~\eqref{eq:transvw2}, respectively.
To verify \eqref{eq:transpose-consistency-1}, we use
\begin{equation}
  \wp\idb\vp^t = -\vp\idg^{-1}\wp^t \,,\qquad
  \wm^t\idb \vm=-\vm^t\idg^{-1}\wm \,,
\end{equation}
as well as
\begin{equation}\label{eq:vvww}
  \vm^t\idg^{-1}\kpp_1\vm = 0 = \wp\kpp_1\idg^{-1}\wp^t
    \qquad \mathrm{and} \qquad
  \wp\idb(\wp\kpp_1)^t=0=(\kpp_1\vm)^t\idb\vm  \,,
\end{equation}
where we take into account the $\BZ_2$-grading of the superoscillators.
It also follows from \eqref{eq:vvww} that
\begin{equation}
  \up' M(\up')^t = \up M(\up)^t \,,\qquad (\um')^t M\um'=(\um)^t M\um
\end{equation}
with
\begin{equation}
  \up=
  \left(\begin{array}{cc}
    \wp & \vp
  \end{array}\right)
    \,,\qquad
  \um=
  \left(\begin{array}{c}
    \wm\\
    \vm
  \end{array}\right) \,.
\end{equation}
Let us also note that the first and third terms in the right-hand side of the factorisation \eqref{eq:quadl}
can be written as exponential expansions that truncate after the second term, cf.~\cite[(3.2)]{f}.
\end{Rem}

Finally, we note that there is a similarity transformation in the superoscillator space such that
\begin{equation}\label{eq:transK}
\begin{split}
  & (\kmm_1)'=\mathbf{S}\kmm_1\mathbf{S}^{-1} \,,\quad
    (\kpp_2)'=\mathbf{S}\kpp_2\mathbf{S}^{-1} \,,\quad
    (\kpp_1)'=\mathbf{S}\kpp_1\mathbf{S}^{-1} \,,\quad
    (\kmm_2)'=\mathbf{S}\kmm_2\mathbf{S}^{-1} \,,  \\
  & \qquad \quad
    \wm'=\mathbf{S}\wm \mathbf{S}^{-1} \,,\qquad
    \vp'=\mathbf{S}\vp \mathbf{S}^{-1} \,,\qquad
    \wp'=\mathbf{S}\wp \mathbf{S}^{-1} \,,\qquad
    \vm'=\mathbf{S}\vm \mathbf{S}^{-1} \,.
\end{split}
\end{equation}
Here, $\mathbf{S}$ is given explicitly by
\begin{equation}\label{eq:linear+cubic-S}
  \mathbf{S}=\mathbf{S}_\circ \mathbf{S}_3
\end{equation}
with
\begin{equation}\label{eq:cubic-S}
  \mathbf{S}_3 = \exp \left(-\wp \kpp_1\vm \right)
\end{equation}
and
\begin{equation}\label{eq:Ksim2}
  \mathbf{S}_\circ =
  \exp\left[
    \sum_{i,j}\oad_{ji}^{[1]}\oa_{ij}^{[2]} + \sum_{i,j}\obd_{ji}^{[1]}\ob_{ij}^{[2]} + \sum_{i,j} \ocd_{ji}^{[1]}\oc_{ij}^{[2]}
  \right] \,,
\end{equation}
where the indices $i,j$ take all possible values as they appear in~\eqref{eq:Koscillators2}.
It is obvious that the summands in the exponents of \eqref{eq:cubic-S} and \eqref{eq:Ksim2},
respectively, are bosonic and pairwise commute (however, $\mathbf{S}_\circ$ and $\mathbf{S}_3$ do not commute!).

\begin{Rem}
The equalities in \eqref{eq:transK} are verified by direct computations. Let us only evaluate the transformation
of $\km^\circ_1$ under $\mathbf{S}_3$ (the rest of computations being much simpler), cf.~\eqref{eq:transKm}:
\begin{equation}
\begin{split}
  \mathbf{S}_3( \km^\circ_1)_{ij} \mathbf{S}_3^{-1}
  & = (\km^\circ_1)_{ij}-\wp \kpp_1\vm( \km^\circ_1)_{ij}+( \km^\circ_1)_{ij} \wp \kpp_1\vm \\
  & = (\km^\circ_1)_{ij}-\wp_k( \kpp_1)_{k\ell}\vm_\ell( \km^\circ_1)_{ij} + (\km^\circ_1)_{ij} \wp_k (\kpp_1)_{k\ell}\vm_\ell \\
  & = (\km^\circ_1)_{ij}-\wp_k( \kpp_1)_{k\ell}\vm_\ell( \km^\circ_1)_{ij}+(-1)^{(\gr{n+m-i+1}+\gr{j+1})\gr{k+1}}
      \wp_k(\km^\circ_1)_{ij}  (\kpp_1)_{k\ell}\vm_\ell \\
  & = ( \km^\circ_1)_{ij} +\vm_i\wp_j-(\idg^{-1}\wp^t)_i(\vm^t \idg^{-1})_j \,,
\end{split}
\end{equation}
where we summed over all possible $k,\ell$. The last equality above follows from \eqref{eq:comK} which yields
\begin{equation}
\begin{split}
  & (-1)^{(\gr{n+m-i+1}+\gr{j+1})\gr{k+1}}\wp_k( \km^\circ_1)_{ij} (\kpp_1)_{k\ell}\vm_\ell = \\
  &   \qquad \qquad \wp_k (\kp^\circ_1)_{k\ell} \vm_\ell (\km^\circ_1)_{ij}+\vm_i\wp_j
     - (-1)^{\gr{n+m-i+1}+\gr{j+1}} \wp_{n+m-i} \vm_{n+m-j}
\end{split}
\end{equation}
with
\begin{equation}
  (\idg^{-1}\wp^t)_i = -(-1)^{\gr{n+m-i+1}}\wp_{n+m-i} \,,\qquad  (\vm^t \idg^{-1})_j=-(-1)^\gr{j+1}\vm_{n+m-j} \,.
\end{equation}
\end{Rem}

We thus obtain the key result of this subsection:

\begin{Thm}\label{prop:quadL}
(a) For any $x_1,x_2\in\mathbb{C}$, the matrix
\begin{equation}\label{eq:quadl21}
\begin{split}
  &\mathbf{L}_{x_1,x_2}(x)=\hfill \\[0.3cm]
  &\left(\begin{BMAT}[5pt]{c|c|c}{c|c|c}
     1 & \up & \frac{1}{2}\up M\up^t \\
     0 & \id_{N+2m-2} & M\up^t \\
     0 & 0 & 1
  \end{BMAT}\right)
  \left(\begin{BMAT}[5pt]{c|c|c}{c|c|c}
   (x+x_1)(x+x_2) & 0 &  0 \\
   0 &\mathcal{L}_{x_1,x_2}(x) & 0 \\
   0 & 0 & 1
  \end{BMAT}\right)
  \left(\begin{BMAT}[5pt]{c|c|c}{c|c|c}
    1 & 0 & 0 \\
    -\um & \id_{N+2m-2} & 0 \\
    \frac{1}{2}\um^t M\um & -\um^t M & 1
  \end{BMAT}\right)
\end{split}
\end{equation}
with
\begin{equation}\label{eq:upum-defined}
  \up=
  \left(\begin{array}{cc}
    \wp \ \vp
  \end{array}\right)
    \qquad \mathrm{and} \qquad
  \um=
  \left(\begin{array}{c}
    \wm \\
    \vm
  \end{array}\right) \,,
\end{equation}
cf.~(\ref{eq:vp}, \ref{eq:vm}, \ref{eq:wp}, \ref{eq:wm}), and $\mathcal{L}_{x_1,x_2}(x)$ being the linear
$\fosp(2n-2|2m)$ type Lax matrix of \eqref{eq:linlaxnorm}:
\begin{equation}\label{eq:innerLax2}
  \mathcal{L}_{x_1,x_2}(x) =
  \left(\begin{BMAT}[5pt]{c:c}{c:c}
    (x+x_1)\ID-\kpp_1\kmm_1 & \left((x_2-x_1)\ID+\kpp_1\kmm_1\right)\kpp_1 \\
     -\kmm_1 & (x+x_2)\ID_{}+\kmm_1\kpp_1
  \end{BMAT}\right) \,,
\end{equation}
cf.~\eqref{eq:innerLax}, is a solution to the RTT-relation \eqref{eq:rttA} with the $R$-matrix~\eqref{eq:R-osp}, hence, is a Lax matrix.

\medskip
\noindent
(b) The Lax matrix from (a) is a fusion of two degenerate Lax matrices through:
\begin{equation}
  L^{[1]}(x+x_1)\hat L^{[2]}(x+x_2) =
  \mathbf{S} \, \mathbf{L}_{x_1,x_2}(x)
  \left(\begin{BMAT}[5pt]{c|c:c|c}{c|c:c|c}
    1 & 0 & 0 & 0 \\
    0 & \id & \kp_{2}^\circ & 0 \\
    0 & 0 & \id & 0 \\
    0 & 0 & 0 & 1
  \end{BMAT}\right) \,
  \mathbf{S}^{-1}
\end{equation}
with the Lax matrices $L(x),\hat L(x)$ of (\ref{eq:lindec},~\ref{eq:LaxZwei}) and the
similarity transformation $\mathbf{S}$ of~\eqref{eq:linear+cubic-S}--\eqref{eq:Ksim2}.
\end{Thm}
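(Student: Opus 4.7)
The plan is to derive both parts as consequences of the explicit factorisation computations performed in Section~\ref{sec:fl2q}: part~(b) is an algebraic identity obtained by specialising the spectral parameters, while part~(a) follows from~(b) by tracking how the RTT-relation propagates through the factorisation.

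For part~(b), I would substitute $x\mapsto x+x_1$ and $y\mapsto x+x_2$ into the chain of identities \eqref{eq:factorisatio-no3}--\eqref{eq:quadl} together with the inner Lax of~\eqref{eq:innerLax}. Under this substitution, the middle factor $\mathcal{L}'(x+x_1,x+x_2)$ becomes the linear $\fosp(2n-2|2m)$-type Lax matrix $\mathcal{L}_{x_1,x_2}(x)$ of~\eqref{eq:innerLax2} (note that $(x+x_2)-(x+x_1)=x_2-x_1$, matching the coefficient structure). Next, I would apply the similarity transformation $\mathbf{S}=\mathbf{S}_\circ\mathbf{S}_3$ of~\eqref{eq:linear+cubic-S}--\eqref{eq:Ksim2}: by~\eqref{eq:transK} the primed generators $(\kpp_1)',(\kmm_1)',(\kpp_2)',\wp',\wm',\vp',\vm'$ are conjugated back to their unprimed counterparts, so that the outer triangular factors in~\eqref{eq:quadl} turn precisely into those of~\eqref{eq:quadl21} (this also uses the identities~\eqref{eq:vvww} ensuring $\up'M(\up')^t=\up M\up^t$ and $(\um')^tM\um'=\um^tM\um$). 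Finally, the upper triangular right factor in~\eqref{eq:factorisatio-no3}, whose $(2,3)$-block is $(\kpp_2)'=\kp_2^\circ+\kp_1^\circ$, is conjugated by $\mathbf{S}^{-1}$ back to the one with block $\kp_2^\circ$ appearing in the stated formula.

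For part~(a), the LHS $L^{[1]}(x+x_1)\hat L^{[2]}(x+x_2)$ is Lax: $L^{[1]}$ is Lax by Theorem~\ref{thm:degenerate-linear-Lax}, while $\hat L^{[2]}$ is Lax by construction~\eqref{eq:conjL2} combined with the invariance $[R(x),\hat\JD_\theta\otimes\hat\JD_\theta]=0$ of Lemma~\ref{lem:hjt} and the RTT-preserving particle-hole transformations~\eqref{eq:rename-1}--\eqref{eq:rename-5}, and the two oscillator copies pairwise supercommute so that their product satisfies the RTT-relation~\eqref{eq:rttA} for the $R$-matrix~\eqref{eq:R-osp}. Conjugation by $\mathbf{S}$ acts trivially on the auxiliary space and therefore preserves RTT; applying $\mathbf{S}^{-1}(-)\mathbf{S}$ to the identity of part~(b) yields that $\mathbf{L}_{x_1,x_2}(x)\cdot C$ satisfies RTT, where $C$ denotes the block upper triangular matrix with $(2,3)$-block $\kp_2^\circ$ appearing on the right-hand side of~(b).

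The main obstacle will be to pass from the Lax property of $\mathbf{L}_{x_1,x_2}(x)\cdot C$ to that of $\mathbf{L}_{x_1,x_2}(x)$ itself, since $C$ is not auxiliary-space scalar and need not commute with the entries of $\mathbf{L}_{x_1,x_2}(x)$ across tensor factors. My plan is to exploit that $C$ is independent of the spectral parameter $x$ and triangular with identity diagonal blocks, while the only $x$-dependence of $\mathbf{L}_{x_1,x_2}(x)$ resides in its middle block $\mathcal{L}_{x_1,x_2}(x)$ (already known to be Lax by Proposition~\ref{prop:lin}), so matching coefficients of the leading power of $x$ in the RTT-relation isolates the relation for $\mathbf{L}_{x_1,x_2}(x)$ alone. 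A cleaner conceptual route, mirroring Remark~\ref{rem:Atype-normalized-limits} for $\gl(n|m)$ and its quadratic orthosymplectic analogue Remark~\ref{rem:normalized-limit-osp-quadratic}, is to realise $\mathbf{L}_{x_1,x_2}(x)$ directly as a normalised limit of the LHS (a limit which manifestly preserves RTT), thereby bypassing the need to strip $C$ off altogether.
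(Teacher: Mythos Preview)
Your treatment of part~(b) is correct and matches the paper exactly: the factorisation \eqref{eq:factorisatio-no3}--\eqref{eq:quadl} specialised at $x\mapsto x+x_1$, $y\mapsto x+x_2$, together with the similarity transformation \eqref{eq:transK} implemented by $\mathbf{S}=\mathbf{S}_\circ\mathbf{S}_3$, is precisely how the paper arrives at the statement.

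For part~(a), you correctly reduce to showing that $\mathbf{L}_{x_1,x_2}(x)\cdot C$ being Lax implies $\mathbf{L}_{x_1,x_2}(x)$ is Lax, and you are right to flag this as the nontrivial step. However, neither of your two proposed fixes quite closes the gap. Matching leading powers of $x$ does not isolate the full RTT for $\mathbf{L}_{x_1,x_2}$, since $C$ is $x$-independent and multiplies every coefficient of $\mathbf{L}_{x_1,x_2}(x)$ uniformly. The normalised-limit route you cite (Remarks~\ref{rem:Atype-normalized-limits} and~\ref{rem:normalized-limit-osp-quadratic}) produces other degenerate Lax matrices from already-established non-degenerate ones; it does not give a mechanism for removing the triangular factor $C$ from the product $L^{[1]}(x+x_1)\hat L^{[2]}(x+x_2)$, which has no extra parameter to send to infinity.

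The argument the paper leaves implicit is simpler and purely algebraic: after conjugation by $\mathbf{S}^{-1}$, the only oscillators appearing in $C$ are the creation operators $\kp_2^\circ$ of the second copy, and these do \emph{not} occur anywhere in $\mathbf{L}_{x_1,x_2}(x)$ (whose entries involve only $\kp_1^\circ,\km_1^\circ,\vp,\vm,\wp,\wm$). The entries of $\kp_2^\circ$ generate a free supercommutative subalgebra, so the augmentation sending all of them to $0$ is a well-defined superalgebra homomorphism on the subalgebra containing both sides of the RTT identity for $\mathbf{L}_{x_1,x_2}(x)\cdot C$. Under this map $C\mapsto\ID$ while $\mathbf{L}_{x_1,x_2}(x)$ is untouched, yielding the RTT relation for $\mathbf{L}_{x_1,x_2}(x)$ alone. (Equivalently: since the two families of oscillators supercommute, $C_1$ commutes with $(\mathbf{L}_{x_1,x_2})_2$ in the graded tensor product, and one may then cancel the invertible factor $C_1C_2$ after specialising the $\kp_2^\circ$-variables.)
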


\begin{Rem}
For $m=0$, we recover precisely the $D_n$-type Lax matrix of~\cite[(5.24, 5.25)]{f} when setting $x_1=s, x_2=-s-n+2$.
\end{Rem}


\subsection{Degenerate quadratic Lax matrices: even $N$ case}
\

Dropping the index 1 in the oscillators of (\ref{eq:vp}, \ref{eq:vm}, \ref{eq:wp}, \ref{eq:wm}),
we consider the vectors
\begin{equation}\label{eq:um}
  \um =
  \left(\begin{array}{ccccccccc}
    \oa_{2} \ \ldots \ \oa_{n} \ \oc_{n+1} \ \cdots \ \oc_{n+m} \ \oc_{n+m+1} \ \cdots \
    \oc_{n+2m} \ \oa_{n+2m+1} \ \cdots \ \oa_{2n+2m-1}
  \end{array}\right)^t
\end{equation}
and
\begin{equation}\label{eq:up}
  \up =
  \left(\begin{array}{ccccccccc}
    \oad_{2} \ \cdots \ \oad_{n} \ \ocd_{n+1} \ \cdots \ \ocd_{n+m} \ \ocd_{n+m+1} \ \cdots \ \ocd_{n+2m} \
    \oad_{n+2m+1} \ \cdots \ \oad_{2n+2m-1}
  \end{array}\right) \,,
\end{equation}
cf.~\eqref{eq:upum-defined}, so that $\oa_j=\oa_{j,1}$, $\oc_j=\oc_{j,1}$, $\oad_j=\oad_{1,j}$, and $\ocd_j=\ocd_{1,j}$.

As an immediate corollary of Theorem~\ref{prop:quadL}, we obtain the following result:

\begin{Prop}\label{prop:quadratic-degenerate-specialized}
The matrix
\begin{equation}\label{eq:quadl22}
{\small
  \mathbf{L}(x)=
  \left(\begin{BMAT}[5pt]{c|c|c}{c|c|c}
    1 & \up & \frac{1}{2}\up M\up^t \\
    0 & \id_{N+2m-2} & M\up^t \\
    0 & 0 & 1
  \end{BMAT}\right)
  \left(\begin{BMAT}[5pt]{c|c|c}{c|c|c}
    x(x-\kappa+1) & 0 & 0 \\
    0 & x\ID_{N+2m-2} & 0 \\
    0 & 0 & 1
  \end{BMAT}\right)
  \left(\begin{BMAT}[5pt]{c|c|c}{c|c|c}
    1 & 0 & 0 \\
    -\um & \id_{N+2m-2} & 0 \\
    \frac{1}{2}\um^t M\um & -\um^tM & 1
  \end{BMAT}\right)
}
\end{equation}
with $\um$ of~\eqref{eq:um}, $\up$ of~\eqref{eq:up}, $M$ of \eqref{eq:Mmatrix}, is a solution to the RTT-relation \eqref{eq:rttA} with
the $R$-matrix~\eqref{eq:R-osp}, hence, is a Lax matrix. It can be explicitly written as
\begin{equation}\label{eq:quadlaxl41}
  \mathbf{L}(x)=
  \left(\begin{BMAT}[5pt]{c|c|c}{c|c|c}
    x(x-\kappa+1)-x\up\um+\frac{1}{4}\up M\up^t\um^t M\um & x\up-\frac{1}{2}\up M\up^t\um^t M & \frac{1}{2}\up M\up^t \\
    -x\um+\frac{1}{2}M\up^t\um^t M\um & x\ID_{N+2m-2}-M\up^t\um^t M & M\up^t \\
    \frac{1}{2}\um^t M\um & -\um^t M & 1
  \end{BMAT}\right) \,.
\end{equation}
\end{Prop}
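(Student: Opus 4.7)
The plan is to derive Proposition~\ref{prop:quadratic-degenerate-specialized} as a direct specialization of Theorem~\ref{prop:quadL}. Comparing the triple-product representations~\eqref{eq:quadl21} and~\eqref{eq:quadl22}, the matrix $\mathbf{L}(x)$ differs from $\mathbf{L}_{x_1,x_2}(x)$ only by replacing the inner linear Lax matrix $\mathcal{L}_{x_1,x_2}(x)$ with the trivial Lax matrix $x\ID_{N+2m-2}$ and by specializing the top-left scalar $(x+x_1)(x+x_2)$ to $x(x-\kappa+1)$. The conceptual point is that $x\ID_{N+2m-2}$ is tautologically a Lax matrix for $\fosp(2n-2|2m)$ (any scalar multiple of the identity trivially satisfies the RTT-relation), so the substitution is \emph{a priori} sensible.

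My first step would be the routine verification that the explicit form~\eqref{eq:quadlaxl41} is simply the expansion of the triple matrix product in~\eqref{eq:quadl22}, via block matrix multiplication. The antidiagonal structure of $M$ from~\eqref{eq:Mmatrix} produces the quadratic corner entries $\up M\up^t = -\wp\,\idb\,\vp^t+\vp\,\idg^{-1}\wp^t$ and $\um^tM\um = -\wm^t\,\idb\,\vm+\vm^t\,\idg^{-1}\wm$, while $M\up^t$ and $\um^tM$ populate the boundary off-diagonals; only the $(1,1)$-entry carries a genuine $x^2$ contribution.

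For the RTT-relation, I would realize~\eqref{eq:quadl22} as the specialization of Theorem~\ref{prop:quadL} at $(x_1,x_2)=(0,1-\kappa)$ together with the inner oscillators $\kpp_1,\kmm_1$ set to their trivial (zero) representation. The resulting middle block becomes $\mathrm{diag}(x\ID_{n+m-1},(x+1-\kappa)\ID_{n+m-1})$, which differs from $x\ID_{N+2m-2}$ by the constant shift $\mathrm{diag}(0,(1-\kappa)\ID_{n+m-1})$. I would absorb this residue either via a renormalized-limit procedure in the spirit of Remark~\ref{rem:normalized-limit-osp}, or via a similarity transformation reabsorbing the shift into $U_\pm$. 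The occurrence of $\kappa=N/2-m-1$ in the top-left factor $x(x-\kappa+1)$ traces back to the corresponding shift in the orthosymplectic $R$-matrix~\eqref{eq:R-osp}, ensuring compatibility of the trivial inner Lax with the outer dressing.

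The main obstacle will be identifying the precise renormalization or similarity that cleanly reconciles the residual $\kappa$-shift. As a robust fallback, one can verify the RTT-relation~\eqref{eq:rttA} directly for the explicit matrix~\eqref{eq:quadlaxl41} coefficient-wise, analogously to the proof of Theorem~\ref{thm:degenerate-linear-Lax} and its auxiliary Lemma~\ref{lem:quadraticterm}. The block decomposition of~\eqref{eq:quadlaxl41}, together with the elementary symmetries $\wp\,\idb\,\vp^t=-\vp\,\idg^{-1}\wp^t$ and $\wm^t\,\idb\,\vm=-\vm^t\,\idg^{-1}\wm$ from~\eqref{eq:vvww}, keeps the resulting system of quadratic and cubic identities among the outer oscillators tractable.
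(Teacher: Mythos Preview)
Your overall strategy---specializing Theorem~\ref{prop:quadL}---matches the paper's, but the mechanism you propose for collapsing the inner block has a genuine gap. Setting $\kpp_1,\kmm_1$ literally to zero is not a valid specialization: it violates the superoscillator commutation relations, so the RTT-relation established in Theorem~\ref{prop:quadL} no longer transfers. This is why you end up with the mismatched middle block $\mathrm{diag}\big(x\ID_{n+m-1},(x+1-\kappa)\ID_{n+m-1}\big)$ and then search for an ad~hoc renormalization or similarity to repair it.

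The paper's argument avoids this obstacle entirely. Instead of zeroing the oscillators, it evaluates the inner Lax matrix $\mathcal{L}_{x_1,x_2}(x)$ of~\eqref{eq:innerLax2} on the Fock vacuum $|0\rangle$ for the inner oscillator algebra (so $\kmm|0\rangle=0$). The crucial point is the normal-ordering contribution
\[
  \kmm\kpp\,|0\rangle=(\kappa-1)\,\ID\,|0\rangle \,,\qquad
  \kpp\kmm\kpp\,|0\rangle=(\kappa-1)\,\kpp\,|0\rangle \,,
\]
which you lose by setting both $\kpp$ and $\kmm$ to zero. With these identities, the lower-right block becomes $(x+x_2+\kappa-1)\ID$ and the upper-right block becomes $(x_2-x_1+\kappa-1)\kpp$; choosing $x_2=x_1-\kappa+1$ makes the first equal to $(x+x_1)\ID$ and kills the second, so that $\mathcal{L}_{x_1,x_1-\kappa+1}(x)|0\rangle=(x+x_1)\ID_{N+2m-2}|0\rangle$. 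Setting $x_1=0$ then gives exactly the middle factor $x\ID_{N+2m-2}$ in~\eqref{eq:quadl22}, with no residual shift to absorb. In other words, the $\kappa$-dependence in the corner entry $x(x-\kappa+1)$ is not an artifact to be fixed but is forced by the condition that the Fock vacuum is a one-dimensional subrepresentation of the inner orthosymplectic subalgebra.
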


\begin{proof}
According to Theorem~\ref{prop:quadL}, the matrix $\mathbf{L}_{x_1,x_2}(x)$ of \eqref{eq:quadl21} is a solution to
the RTT-relation \eqref{eq:rttA} with the $R$-matrix~\eqref{eq:R-osp}. The superoscillators used in $\um$ and $\up$
pairwise supercommute with the ones used in the construction of $\mathcal{L}_{x_1,x_2}(x)$.
Furthermore, the coefficients of the latter generate an orthosymplectic subalgebra. To prove that the matrix $\mathbf{L}(x)$
in \eqref{eq:quadl22} is a solution to the same RTT-relation, we consider the trivial representation of the aforementioned
orthosymplectic subalgebra. It is obtained from the action of the Lax matrix $\mathcal{L}_{x_1,x_2}(x)$ on the Fock vacuum
$|0\rangle$ (in the Fock module for the superoscillator algebra generated by the entries of $\kmm,\kpp$ from~\eqref{eq:Koscillators2},
so that $\kmm|0\rangle$=0) and further fixing $x_2=x_1-\kappa+1$ with $\kappa=n-m-1$ as in~\eqref{eq:kappa}. Using the simple equalities
\begin{equation}
  \kmm\kpp |0\rangle = (\kappa-1)\, \id |0\rangle \,,\qquad
  \kpp\kmm\kpp |0\rangle = (\kappa-1)\, \kpp|0\rangle  \,,
\end{equation}
we find
\begin{equation}
  \mathcal{L}_{x_1,x_2=x_1-\kappa+1}(x )|0\rangle = (x+x_1)\id_{N+2m-2}|0\rangle \,.
\end{equation}
Specializing further $x_1=0$, we conclude that \eqref{eq:quadlaxl41} is indeed an orthosymplectic Lax matrix.
\end{proof}

\begin{Rem}
For $m=0$, we recover precisely the $D_n$-type Lax matrix of~\cite[(4.12)]{f}.
\end{Rem}


\subsection{Degenerate quadratic Lax matrices: odd $N$ case}\label{ssec:odd case}
\

In this subsection, we consider the case $N=2n+1$. In this setup, the operators $\Pop,\Qop$ are defined as
in~(\ref{eq:Pop},~\ref{eq:Qop}), where we choose the following specific $\BZ_2$-grading of the superspace $V$:
\begin{equation}\label{eq:our-osp-grading-odd}
  \gr{i}:=|v_i|=
  \begin{cases}
    \bar{0} & \quad \text{for} \ 1\leq i\leq n \\
    \bar{1} & \quad \text{for} \ n+1\leq i\leq n+m \\
    \bar{0} & \quad \text{for} \ i=n+m+1 \\
    \bar{1} & \quad \text{for} \ n+m+2\leq i\leq n+2m+1 \\
    \bar{0} & \quad \text{for} \ n+2m+2\leq i\leq 2n+2m+1
   \end{cases} \,.
\end{equation}
This $\BZ_2$-grading corresponds to the parity sequence
\begin{equation}\label{eq:our-parity-odd}
  \Parity =
  \Big(
    \underbrace{\bar{0},\ldots,\bar{0}}_{n}, \underbrace{\bar{1},\ldots,\bar{1}}_{m}, \bar{0},
    \underbrace{\bar{1},\ldots,\bar{1}}_{m}, \underbrace{\bar{0},\ldots,\bar{0}}_{n}
  \Big)
\end{equation}
and the following choice of $\theta_i$'s:
\begin{equation}\label{eq:theta-odd}
  \theta=\theta_V=\Big(\underbrace{1,\ldots,1}_{n+m+1},\underbrace{-1,\ldots,-1}_{m},\underbrace{1,\ldots,1}_{n}\Big) \,.
\end{equation}

In this case, we upgrade~(\ref{eq:um},~\ref{eq:up}) by adding extra bosonic superoscillators:
\begin{equation}\label{eq:um-odd}
  \um=
  \left(\begin{array}{ccccccccc}
    \oa_{2} \ \ldots \ \oa_{n} \ \oc_{n+1} \ \cdots \ \oc_{n+m} \ \oa_{n+m+1} \
    \oc_{n+m+2} \ \ldots \ \oc_{n+2m+1} \ \oa_{n+2m+2} \ \cdots \ \oa_{2n+2m}
  \end{array}\right)^t \,,
\end{equation}
\begin{equation}\label{eq:up-odd}
  \up =
  \left(\begin{array}{ccccccccc}
    \oad_{2} \ \cdots \ \oad_{n} \ \ocd_{n+1} \ \cdots \ \ocd_{n+m} \ \oad_{n+m+1} \
    \ocd_{n+m+2} \ \cdots \ \ocd_{n+2m+1} \ \oad_{n+2m+2} \ \cdots \ \oad_{2n+2m}
  \end{array}\right) \,.
\end{equation}
We also modify~\eqref{eq:Mmatrix} by introducing the following $(N+2m-2)\times (N+2m-2)$ matrix:
\begin{equation}\label{eq:Mmatrix-odd}
  M=
  \left(\begin{BMAT}[5pt]{c:c:c}{c:c:c}
    0 & 0 & -\idb \\
    0 & -1 & 0 \\
    \idg^{-1} & 0 & 0
  \end{BMAT}\right) \,.
\end{equation}

\begin{Conj}\label{conj:odd-N}
The matrix $\mathbf{L}(x)$ given by~(\ref{eq:quadl22},~\ref{eq:quadlaxl41}) with $\um$ of~\eqref{eq:um-odd},
$\up$ of~\eqref{eq:up-odd}, $M$ of~\eqref{eq:Mmatrix-odd} is a solution to the RTT-relation~\eqref{eq:rttA} with
the $R$-matrix~\eqref{eq:R-osp}, hence, is a Lax matrix.
\end{Conj}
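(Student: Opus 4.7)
The strategy is to follow the same blueprint as in Theorem~\ref{prop:quadL} and Proposition~\ref{prop:quadratic-degenerate-specialized}: construct a degenerate linear Lax matrix for the odd parity sequence~\eqref{eq:our-parity-odd}, obtain a non-degenerate quadratic one via fusion of two such matrices, and recover the conjectured formula~(\ref{eq:quadl22},~\ref{eq:quadlaxl41}) by specialization of the spectral parameters to the trivial Fock representation of an inner smaller-rank orthosymplectic subalgebra.

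The first step is to establish an odd-$N$ analog of Theorem~\ref{thm:degenerate-linear-Lax}. Under the parity~\eqref{eq:our-parity-odd} one enlarges the matrices $\kp,\km$ of~\eqref{eq:Koscillators} to size $(n+m+1)\times(n+m+1)$ by adjoining a row and column labelled by the fixed bosonic middle index $n+m+1$, together with a new bosonic oscillator pair $(\oa_{n+m+1,j},\oad_{j,n+m+1})$ for $1\leq j\leq n$. The RTT-relation for the resulting block Lax matrix
\[
  L^{\mathrm{odd}}(x)=\bigl(\begin{smallmatrix}x\ID-\kp\km & \kp \\ -\km & \ID\end{smallmatrix}\bigr)
\]
is then verified by a direct adaptation of Lemma~\ref{lem:quadraticterm}, where the middle bosonic index produces a single extra Kronecker-delta term in~(\ref{eq:L12relation},~\ref{eq:L11relation}) that is consistent with the modified $\theta$'s of~\eqref{eq:theta-odd}.

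The second step mirrors Theorem~\ref{prop:quadL}: one constructs the companion $\bL^{\mathrm{odd}}(x)$ via conjugation by an odd-$N$ analog of $\hat{\JD}_\theta$ (which in turn requires a version of Lemma~\ref{lem:hjt} adapted to fix the middle index $n+m+1$), writes it in the block form~\eqref{eq:lindec}, and fuses two such degenerate linear matrices following~(\ref{eq:factorisatio-no3}--\ref{eq:quadl}). The $-1$ entry in the middle block of $M$ in~\eqref{eq:Mmatrix-odd} arises naturally from the $\theta_{n+m+1}$-weighted invariant-form component at the fixed bosonic direction. The outcome is a non-degenerate quadratic Lax matrix $\mathbf{L}_{x_1,x_2}^{\mathrm{odd}}(x)$ whose inner block is a linear Lax matrix of $\fosp(2n-1|2m)$ type. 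Passing to the trivial Fock representation of this inner subalgebra and specializing $x_1=0$, $x_2=-\kappa+1$ with $\kappa=n-m-\tfrac{1}{2}$ collapses the inner matrix to $x\cdot\id_{N+2m-2}$ and reproduces precisely the formula of Conjecture~\ref{conj:odd-N}.

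The main obstacle is that the inner subalgebra appearing after fusion is again of odd orthosymplectic type, so the argument descends along a chain $\fosp(2n+1|2m)\to\fosp(2n-1|2m)\to\cdots$, eventually requiring an honest base case such as $\fosp(3|2m)$ or $\fosp(1|2m)$. One must therefore either organize the proof as an induction on $n$ with a direct verification of the base case (constructing the corresponding linear Lax matrix by hand and checking~\eqref{eq:yangianalg} coefficient-wise), or bypass the induction by realizing the odd-$N$ linear Lax matrix as a degeneration of the already established even-$N$ Lax matrix for $\fosp(2n+2|2m)$, obtained by sending an appropriately chosen pair of bosonic oscillators to zero. The latter route would be cleaner but demands verifying that the limit preserves the RTT property, which is itself a nontrivial consistency check given the $\theta$- and $\kappa$-dependent structure of the $R$-matrix~\eqref{eq:R-osp}.
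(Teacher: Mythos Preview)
The paper does not prove this statement: it is explicitly labeled a Conjecture, and the Remark immediately following it records only computer verification for $n,m\leq 2$, stating that ``a general proof is currently missing.'' So there is no paper proof to compare against; your proposal is an attempt to supply what the authors could not.

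The central gap in your strategy is the very first step. You propose to build a degenerate \emph{linear} Lax matrix for odd $N=2n+1$ by enlarging $\kp,\km$ to $(n+m+1)\times(n+m+1)$ blocks. But then the block matrix $\bigl(\begin{smallmatrix}x\ID-\kp\km & \kp\\ -\km & \ID\end{smallmatrix}\bigr)$ has size $(2n+2m+2)\times(2n+2m+2)$, whereas the ambient space $V$ has dimension $N+2m=2n+2m+1$. More structurally, the $2\times 2$ block form of Theorem~\ref{thm:degenerate-linear-Lax} rests on the involution $i\mapsto i'$ interchanging the index sets $\{1,\dots,n+m\}$ and $\{(n+m)',\dots,1'\}$; for odd $N$ the middle index $n+m+1$ is a fixed point of this involution and belongs to neither half, so the symmetry~\eqref{eq:ksym} and the identities of Lemma~\ref{lem:quadraticterm} have no direct analogue. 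This is exactly why the paper restricts all of Section~\ref{sec:linear-Lax} to even $N$ and why the odd case enters only at the quadratic level, where the fixed index can be absorbed into the inner $(N+2m-2)$-block rather than split between two equal halves.

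Your alternative route --- realizing the odd-$N$ Lax matrix as a degeneration of the established even $\fosp(2n+2|2m)$ one --- faces the obstacle you already flag: the constant $\kappa$ in~\eqref{eq:R-osp} differs by $\tfrac12$ between $\fosp(2n+2|2m)$ and $\fosp(2n+1|2m)$, so no limit in the oscillator variables alone can intertwine the two RTT-relations. The inductive chain you describe, $\fosp(2n+1|2m)\to\fosp(2n-1|2m)\to\cdots$, never reaches an even case and bottoms out at $\fosp(1|2m)$, where there is no room for the block structure at all. A genuine proof would more likely require a direct coefficient-wise verification of~\eqref{eq:yangianalg} for the explicit matrix~\eqref{eq:quadlaxl41} (which the authors evidently carried out only for small ranks), or an independent argument relating the odd and even orthosymplectic $R$-matrices themselves.
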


\begin{Rem}
(a) This has been confirmed for $n,m\leq 2$, but a general proof is currently missing.

\medskip
\noindent
(b) For $m=0$, we recover precisely the $B_n$-type Lax matrix of~\cite[(9.1)]{fkt}.
\end{Rem}


\subsection{Non-degenerate quadratic Lax matrix through the factorisation}\label{ss:facq}
\

In this subsection, we present a factorisation formula for quadratic Lax matrices that yields a non-degenerate
quadratic orthosymplectic Lax matrix of superoscillator type. We uniformly treat both cases of even
and odd $N$, assuming the validity of Conjecture~\ref{conj:odd-N}.

The factorisation is analogous to that for the linear case from Subsection~\ref{sec:Facllm}. As before,
we first introduce another solution to the same RTT-relation via a proper conjugation of the degenerate Lax matrix
$\mathbf{L}(x)$ from Proposition~\ref{prop:quadratic-degenerate-specialized} or Conjecture~\ref{conj:odd-N}.
To this end, recall the matrix $\tilde \idb$ of~\eqref{eq:tilde-idb-matrix}:
\begin{equation}
  \tilde \idb=
  \left(\begin{BMAT}[5pt]{c|c|c}{c|c|c}
    0 & 0 & 1 \\
    0 & \id_{N+2m-2} & 0 \\
    1 & 0 & 0
  \end{BMAT}\right) \,.
\end{equation}
The $R$-matrix~\eqref{eq:R-osp} commutes with $\tilde \idb \otimes \tilde \idb$:
for even $N$ this is proved in Lemma~\ref{lem:tj}, while for odd $N$ the argument is the same.
Thus, we get another solution to the same RTT-relation \eqref{eq:rttA} via
\begin{equation}
  \tilde {\mathbf{L}}(x) = \tilde \idb \mathbf{L}(x) \tilde \idb^{-1} = \tilde \idb \mathbf{L}(x) \tilde \idb \,.
\end{equation}
We further apply the particle-hole transformation
\begin{equation}
  \oa_i \mapsto \oad_{i'} \,,\qquad  \oad_i\mapsto -\oa_{i'} \,,
\end{equation}
\begin{equation}
  \oc_i\mapsto \theta_i\ocd_{i'} \,,\qquad  \ocd_i\mapsto \theta_i\oc_{i'} \,,
\end{equation}
to obtain the following Lax matrix:
\begin{multline}\label{eq:quadlaxl5}
  \hat{\tilde{\mathbf{L}}}(y) =
  \tilde{\mathbf{L}}(y)|_{p.h.} = \\
  \left(\begin{BMAT}[5pt]{c|c|c}{c|c|c}
    1 & \up & \frac{1}{2}\up M\up^t \\
    \um & y\id_{N+2m-2}+\um\up & y M\up^t+\frac{1}{2}\um\up M\up^t \\
    \frac{1}{2}\um^tM\um & y\um^tM+\frac{1}{2}\um^tM\um\up & y(y-\kappa+1)+y\um^tMM\up^t+\frac{1}{4}\um^tM\um\up M\up^t
  \end{BMAT} \right) \,.
\end{multline}

Similarly to~(\ref{eq:linear factorization1}),~(\ref{eq:linear factorization}), and (\ref{eq:factorisatio-no3}),
let us factorise the product of the Lax matrices ${\mathbf{L}}^{[1]}(x)$ and~$\widehat{\tilde{\mathbf{L}}}^{[2]}(y)$
from \eqref{eq:quadlaxl41} and~\eqref{eq:quadlaxl5}, with two families of superoscillators encoded by $(\um_1,\up_1)$
and $(\um_2,\up_2)$, respectively. Explicitly, one obtains the following factorisation:
\begin{equation}
  {\mathbf{L}}^{[1]}(x)\hat{\tilde{\mathbf{L}}}^{[2]}(y) = {\mathfrak{L}}'(x,y) H'
\end{equation}
with
\begin{equation}\label{Matrix Example D4 a-osc factroized}
  \mathfrak{L}'(x,y)=
    \left(\begin{BMAT}[5pt]{c|c|c}{c|c|c}
      1 & \up'_1 & \tfrac{1}{2}\up'_1 M (\up'_1)^t \\
      0 & \ID_{N+2m-2} &M(\up'_1)^t \\
      0 & 0 & 1
    \end{BMAT}\right)
      \cdot \, D'(x,y) \, \cdot
    \left(\begin{BMAT}[5pt]{c|c|c}{c|c|c}
      1 & -\up'_1 & \tfrac{1}{2}\up'_1 M(\up'_1)^t \\
      0 & \ID_{N+2m-2} & -M(\up'_1)^t \\
      0 & 0 & 1
    \end{BMAT}\right)
\end{equation}
where
\begin{equation*}
  D'(x,y)=
    \left(\begin{BMAT}[5pt]{c|c|c}{c|c|c}
    x(x-\kappa+1) & 0 & 0 \\
      -x\um_1' & x y\ID_{N+2m-2} & 0 \\
      \tfrac{1}{2}(\um'_1)^t M\um_1' &-y (\um'_1)^t M & y(y-\kappa+1)
    \end{BMAT}\right) \,,
\end{equation*}
$M$ is given by~\eqref{eq:Mmatrix} for $N=2n$ or by~\eqref{eq:Mmatrix-odd} for $N=2n+1$, and
\begin{equation}
  H'=
  \left(\begin{BMAT}[5pt]{c:c:c}{c:c:c}
    1 & \up_2' & \frac{1}{2}\up_2'M(\up'_2)^t \\
    0 & \id_{N+2m-2} & M (\up'_2)^t \\
    0 & 0 & 1
  \end{BMAT}\right) \,.
\end{equation}
Here, we introduced the following notation:
\begin{equation}
   \um_1'=\um_1-\um_2 \,,\qquad \up_1'=\up_1 \,,
\end{equation}
and
\begin{equation}
    \up_2'=\up_2+\up_1 \,,\qquad \um_2'=\um_2 \,,
\end{equation}
cf.~\eqref{eq:tra121}. Furthermore, we used the following simple properties:
\begin{equation}
 \up_2 M\up_1^t=\up_1 M \up_2^t \,,\qquad \um_1^tM\um_2=\um_2^t M \um_1 \,.
\end{equation}

Finally, we note that there is a similarity transformation in the superoscillator space such that
\begin{equation}
  \up_i' = \mathbf{S}\up_i  \mathbf{S}^{-1} \,, \qquad
  \um_i' = \mathbf{S}\um_i  \mathbf{S}^{-1} \qquad (i=1,2) \,,
\end{equation}
cf.~(\ref{eq:Atype-similarity},~\ref{eq:S-Atype}). For even $N=2n$, it reads
\begin{equation}\label{eq:Kma3-eve}
  \mathbf{S}=
  \exp\left[ \sum_{i=2}^n\oad_i^{[1]}\oa_i^{[2]} + \sum_{i=n+1}^{n+m}\ocd_i^{[1]}\oc_i^{[2]} +
  \sum_{i=n+m+1}^{n+2m}\ocd_i^{[1]}\oc_i^{[2]}+\sum_{i=n+2m+1}^{2n+2m-1}\oad_i^{[1]}\oa_i^{[2]}\right] \,,
\end{equation}
while for odd $N=2n+1$, we have
\begin{equation}\label{eq:Kma3}
  \mathbf{S}=
  \exp\left[ \sum_{i=2}^n\oad_i^{[1]}\oa_i^{[2]} + \sum_{i=n+1}^{n+m}\ocd_i^{[1]}\oc_i^{[2]} +
  \oad_{n+m+1}^{[1]}\oa_{n+m+1}^{[2]}+\sum_{i=n+m+2}^{n+2m+1}\ocd_i^{[1]}\oc_i^{[2]}+
  \sum_{i=n+2m+2}^{2n+2m}\oad_i^{[1]}\oa_i^{[2]}\right] \,.
\end{equation}

We thus obtain the key result of this subsection:

\begin{Prop}\label{prop:nondeg-quadratic-small}
For any $x_1,x_2\in \BC$, the matrix
\begin{equation}\label{Matrix Example D4 a-osc factroized5}
  \mathcal{L}_{x_1,x_2}(x)=
  \left(\begin{BMAT}[5pt]{c|c|c}{c|c|c}
    1 & \up_1 & \tfrac{1}{2}\up_1 M \up_1^t \\
    0 & \ID_{N+2m-2} &M\up_1^t \\
    0 & 0 & 1 \\
  \end{BMAT}\right)
    \cdot \, D_{x_1,x_2}(x) \, \cdot
  \left(\begin{BMAT}[5pt]{c|c|c}{c|c|c}
    1 & -\up_1 & \tfrac{1}{2}\up_1 M\up_1^t \\
    0 & \ID_{N+2m-2} & -M\up_1^t \\
    0 & 0 & 1 \\
  \end{BMAT}\right)
\end{equation}
with
\begin{equation*}
\small{
  D_{x_1,x_2}(x)=
  \left(\begin{BMAT}[5pt]{c|c|c}{c|c|c}
    (x+ x_1)(x+x_1-\kappa+1) & 0 & 0 \\
    -(x+x_1)\um_1 & (x+x_1)(x+x_2)\ID_{N+2m-2} & 0 \\
    \tfrac{1}{2}\um_1^tM\um_1 & -(x+x_2) \um_1^tM & (x+x_2)(x+x_2-\kappa+1)
  \end{BMAT}\right)
}
\end{equation*}
is a solution to the RTT-relation \eqref{eq:rttA} with the $R$-matrix~\eqref{eq:R-osp}, hence, is a Lax matrix.
Furthermore, it obeys the factorisation formula
\begin{equation}
  {\mathbf{L}}^{[1]}(x+x_1) \hat{\tilde{\mathbf{L}}}^{[2]}(x+x_2) =
  \mathbf{S} \mathcal{L}_{x_1,x_2}(x)H \mathbf{S}^{-1}
\end{equation}
with ${\mathbf{L}}(x)$ of \eqref{eq:quadlaxl41}, $\hat{\tilde{\mathbf{L}}}(y)$ of \eqref{eq:quadlaxl5},
$\mathbf{S}$ of (\ref{eq:Kma3-eve},~\ref{eq:Kma3}), and $H$ given by
\begin{equation}\label{eq:H-end}
  H=
  \left(\begin{BMAT}[5pt]{c:c:c}{c:c:c}
    1 & \up_2 & \frac{1}{2}\up_2M\up_2^t \\
    0 & \id_{N+2m-2} & M \up_2^t \\
    0 & 0 & 1
  \end{BMAT}\right) \,.
\end{equation}
\end{Prop}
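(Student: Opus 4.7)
My plan follows the template established in the $A$-type case~\eqref{eq:linear factorizationLax2} and in the linear orthosymplectic Proposition~\ref{prop:lin}: first establish part~(b), the factorisation formula, and then deduce part~(a) from it. Most of the work for~(b) has already been performed in the derivation preceding this proposition: the factorisation~\eqref{Matrix Example D4 a-osc factroized} gives ${\mathbf{L}}^{[1]}(x)\hat{\tilde{\mathbf{L}}}^{[2]}(y) = \mathfrak{L}'(x,y)\,H'$, where $\mathfrak{L}'$ and $H'$ depend only on the primed oscillators. Inspecting the explicit formulas, $\mathfrak{L}'(x+x_1, x+x_2)$ is obtained from $\mathcal{L}_{x_1,x_2}(x)$ of~\eqref{Matrix Example D4 a-osc factroized5} by the substitution $(\up_1, \um_1) \mapsto (\up_1', \um_1')$, and $H'$ is obtained from $H$ of~\eqref{eq:H-end} by $\up_2 \mapsto \up_2'$. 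I would therefore substitute the shifted spectral parameters into the prior identity and conjugate by $\mathbf{S}^{-1}$; the content is then reduced to verifying the similarity identities $\mathbf{S}\up_i\mathbf{S}^{-1} = \up_i'$ and $\mathbf{S}\um_i\mathbf{S}^{-1} = \um_i'$ for the explicit $\mathbf{S}$ of~(\ref{eq:Kma3-eve},~\ref{eq:Kma3}). These follow by direct computation with the canonical supercommutation relations, in the spirit of~(\ref{eq:Atype-similarity},~\ref{eq:S-Atype})---the summands in the exponent are bosonic and pairwise supercommute, so $\mathbf{S}$ acts as a linear substitution on each generator.

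For part~(a), the argument is standard. The matrix $\mathbf{L}^{[1]}(x+x_1)$ is Lax by Proposition~\ref{prop:quadratic-degenerate-specialized} (or Conjecture~\ref{conj:odd-N} in the odd-$N$ case), while $\hat{\tilde{\mathbf{L}}}^{[2]}(x+x_2)$ is Lax because Lemma~\ref{lem:tj} ensures that conjugation by $\tilde\idb$ preserves~\eqref{eq:rttA} and the subsequent particle-hole transformation is an automorphism of the oscillator algebra. Since the two Lax matrices involve mutually supercommuting families of oscillators, their product is itself a solution of~\eqref{eq:rttA}. Conjugation by $\mathbf{S}$ acts only in the quantum space and leaves the $R$-matrix untouched, so $\mathcal{L}_{x_1,x_2}(x)\,H$ satisfies RTT. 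To strip off $H$ and conclude that $\mathcal{L}_{x_1,x_2}(x)$ alone is Lax, I would exploit the fact that all entries of $\mathcal{L}_{x_1,x_2}(x)$ lie in the family-1 oscillator algebra while all off-diagonal entries of $H$ are built from the family-2 creation generators $\up_2$. Sandwiching the RTT-relation for $\mathcal{L}_{x_1,x_2}(x)\,H$ between the family-2 Fock bra $\langle 0|_2$ and ket $|0\rangle_2$, one has $\langle 0|_2\,H\,|0\rangle_2 = \ID$ because every off-diagonal entry of $H$ contains at least one $\up_2$-factor that annihilates $\langle 0|_2$; meanwhile $\mathcal{L}_{x_1,x_2}(x)$ is inert under the projection, yielding precisely the RTT-relation for $\mathcal{L}_{x_1,x_2}(x)$.

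The main obstacle I foresee is the bookkeeping in the similarity identities of part~(b). Beyond tracking supersigns from the $\BZ_2$-grading, one must treat the odd-$N$ case (with its extra central bosonic oscillator at index $n+m+1$ and the modified bilinear form~\eqref{eq:Mmatrix-odd}) separately. A subtle consistency check is that $H$ contains the ``cubic'' term $\tfrac{1}{2}\up_2 M \up_2^t$ in its upper-right corner: its transformation under the shift $\up_2 \mapsto \up_2 + \up_1$ must reproduce exactly the analogous term in $H'$, which relies on the symmetry $\up_1 M \up_2^t = \up_2 M \up_1^t$ already noted in the text---so this is a compatibility that needs to be verified rather than a new computation.
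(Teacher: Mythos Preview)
Your proposal is correct and follows precisely the route the paper takes: the paper presents the factorisation ${\mathbf{L}}^{[1]}(x)\hat{\tilde{\mathbf{L}}}^{[2]}(y)=\mathfrak{L}'(x,y)H'$, the change of variables to primed oscillators, and the similarity transformation $\mathbf{S}$ in the discussion immediately preceding the proposition, and then states the result without further proof. Your only addition is spelling out how to strip off $H$ via the family-2 vacuum projection, a step the paper leaves entirely implicit (as it already did for the analogous Proposition~\ref{prop:lin}); the cleaner way to phrase that step is as the superalgebra homomorphism $A_1\otimes\BC[\up_2]\to A_1$ sending $\up_2\mapsto 0$, which is well-defined precisely because the only elements of $\BC[\up_2]$ surviving are scalars and hence even.
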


\begin{Rem}
For $m=0$, we recover the $\mathfrak{so}_N$-type Lax matrix $\mathfrak{L}_{-x_1,-x_2}(x)$ of \cite[(7.3)]{fkt}.
\end{Rem}


\subsection{Full non-degenerate quadratic Lax matrix through the factorisation}
\

For even $N=2n$, the constructions from the previous subsection admit straightforward generalizations
when replacing $\mathbf{L}(x)$ with $\mathbf{L}_{x_1,x_2}(x)$ of \eqref{eq:quadl21}:

\begin{Thm}\label{prop:nondeg-quadratic-big}
For $N=2n$ and any $x_1,x_2,y_1,y_2\in\mathbb{C}$, the matrix
\begin{equation}\label{Matrix Example D4 a-osc factroizede5}
  \mathcal{L}_{x_1,x_2,y_1,y_2}(x)=
  \left(\begin{BMAT}[5pt]{c|c|c}{c|c|c}
    1 & \up_1 & \tfrac{1}{2}\up_1 M \up_1^t \\
    0 & \ID_{2n+2m-2} & M\up_1^t \\
    0 & 0 & 1
  \end{BMAT}\right)
    \cdot \, D_{x_1,x_2,y_1,y_2}(x) \, \cdot
  \left(\begin{BMAT}[5pt]{c|c|c}{c|c|c}
    1 & -\up_1 & \tfrac{1}{2}\up_1 M\up_1^t \\
    0 & \ID_{2n+2m-2} & -M\up_1^t \\
    0 & 0 & 1
  \end{BMAT}\right)
\end{equation}
with
\begin{equation*}
\small{
  D_{x_1,x_2,y_1,y_2}(x)=
  \left(\begin{BMAT}[5pt]{c|c|c}{c|c|c}
    (x+y_1)(x+y_1-\kappa+1) & 0 & 0 \\
    -\mathcal{L}_{x_1,x_2}(x+y_1)\um_{1} & (x+y_2)\mathcal{L}_{x_1,x_2}(x+y_1) & 0 \\
    \tfrac{1}{2}\um_1^tM\um_1 & -(x+y_2) \um_1^tM & (x+y_2)(x+y_2-\kappa+1) \\
  \end{BMAT}\right)
}
\end{equation*}
and $\mathcal{L}_{x_1,x_2}(x)$ being the linear $\fosp(2n-2|2m)$ type Lax matrix of \eqref{eq:innerLax2},
is a solution to the RTT-relation \eqref{eq:rttA} with the $R$-matrix~\eqref{eq:R-osp}, hence, is a Lax matrix.
Furthermore, it obeys the factorisation formula
\begin{equation}
  \mathbf{L}_{x_1,x_2}^{[1]}(x+y_1) \hat{\tilde{\mathbf{L}}}^{[2]}(x+y_2) =
  \mathbf{S} \mathcal{L}_{x_1,x_2,y_1,y_2}(x)H \mathbf{S}^{-1}
\end{equation}
with $\hat{\tilde{\mathbf{L}}}(x)$ of \eqref{eq:quadlaxl5}, $H$ of~\eqref{eq:H-end},
and $\mathbf{L}_{x_1,x_2}(x)$ of \eqref{eq:quadl21}.
\end{Thm}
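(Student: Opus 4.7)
The plan is to mirror the factorisation strategy from Subsection~\ref{ss:facq} that produced Proposition~\ref{prop:nondeg-quadratic-small}, upgrading the first building block from the degenerate $\mathbf{L}(x)$ of~\eqref{eq:quadlaxl41} to its non-degenerate parent $\mathbf{L}_{x_1,x_2}(x)$ of Theorem~\ref{prop:quadL}, while keeping the second block as the same $\hat{\tilde{\mathbf{L}}}^{[2]}(x)$ of~\eqref{eq:quadlaxl5}. The two factors $\mathbf{L}^{[1]}_{x_1,x_2}(x+y_1)$ and $\hat{\tilde{\mathbf{L}}}^{[2]}(x+y_2)$ are both Lax: the first by Theorem~\ref{prop:quadL}(a), the second because the conjugation by $\tilde{\JD}\otimes\tilde{\JD}$ (whose compatibility with the $R$-matrix is Lemma~\ref{lem:tj}) followed by the particle-hole transformation preserves the RTT-relation, just as in the derivation of~\eqref{eq:quadlaxl5}. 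Since the superoscillator families labelled by $[1]$ and $[2]$ mutually supercommute, their product is again a solution of the RTT-relation~\eqref{eq:rttA} with the $R$-matrix~\eqref{eq:R-osp}, so it suffices to conjugate it into the advertised form.

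First I would perform the block factorisation of $\mathbf{L}^{[1]}_{x_1,x_2}(x+y_1)\hat{\tilde{\mathbf{L}}}^{[2]}(x+y_2)$ in complete analogy with~\eqref{Matrix Example D4 a-osc factroized}. The outer upper- and lower-triangular factors only involve the ``outer'' vectors $\up_i,\um_i$ and the fixed matrix $M$ of~\eqref{eq:Mmatrix}; these occupy exactly the same block positions in $\mathbf{L}_{x_1,x_2}(x)$ as they did in $\mathbf{L}(x)$, so the peeling procedure is structurally unchanged. Introducing the shifted combinations $\um'_1=\um_1-\um_2$, $\up'_1=\up_1$, $\up'_2=\up_2+\up_1$, $\um'_2=\um_2$ of Subsection~\ref{ss:facq}, the sole alteration appears in the middle block of the central diagonal factor: instead of the scalar $(x+y_1)(x+y_2)\ID$, it now reads $(x+y_2)\mathcal{L}_{x_1,x_2}(x+y_1)$, because the diagonal of $\mathbf{L}_{x_1,x_2}(x+y_1)$ carries the linear $\fosp(2n-2|2m)$ Lax matrix $\mathcal{L}_{x_1,x_2}(x+y_1)$ in its middle slot. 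This identifies the central factor with the asserted matrix $D_{x_1,x_2,y_1,y_2}(x)$ surrounded by the triangular factors of~\eqref{Matrix Example D4 a-osc factroizede5}.

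Next I would apply the similarity transformation $\mathbf{S}$ of~\eqref{eq:Kma3-eve} to convert primed outer oscillators back into unprimed ones; because $\mathbf{S}$ only involves the outer generators, which supercommute with the inner generators $(\kpp_1,\kmm_1)$ of $\mathcal{L}_{x_1,x_2}$, the inner Lax block is untouched throughout. Peeling off the spectral-parameter-independent right factor $H$ of~\eqref{eq:H-end} then yields the factorisation identity claimed in the theorem, and since both $\mathbf{S}$ and $H$ are invertible and $x$-free, it simultaneously transfers the RTT-relation from the left-hand product to the middle factor $\mathcal{L}_{x_1,x_2,y_1,y_2}(x)$. The main obstacle is the bookkeeping of how the outer triangular factors commute past the non-trivial inner block in the central diagonal factor and how the bilinear identities $\up_2M\up_1^t=\up_1M\up_2^t$ and $\um_1^tM\um_2=\um_2^tM\um_1$ already used in Subsection~\ref{ss:facq} coexist with the additional inner generators; however, since the inner and outer superoscillators mutually supercommute, these checks reduce verbatim to the ones already performed for Proposition~\ref{prop:nondeg-quadratic-small}, and the restriction to $N=2n$ ensures that only the even-$N$ formulas~\eqref{eq:Mmatrix} and~\eqref{eq:Kma3-eve} are needed rather than the as-yet-conjectural odd-$N$ case.
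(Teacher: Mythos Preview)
Your proposal is correct and follows precisely the approach the paper intends: the authors present this theorem without proof, merely noting that ``the constructions from the previous subsection admit straightforward generalizations when replacing $\mathbf{L}(x)$ with $\mathbf{L}_{x_1,x_2}(x)$,'' and you have spelled out exactly that generalization. One small imprecision: you say ``the sole alteration appears in the middle block,'' but in fact the $(2,1)$ block of $D_{x_1,x_2,y_1,y_2}(x)$ also changes from $-(x+y_1)\um_1$ to $-\mathcal{L}_{x_1,x_2}(x+y_1)\um_1$ for the same reason (the inner Lax matrix replaces the scalar throughout the second block-row), though your argument about the inner and outer oscillators supercommuting already covers this.
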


\begin{Rem}
For $m=0$, we recover the $D_n$-type Lax matrix $\mathfrak{L}_{n,s}(x)$ of~\cite[(5.36, 5.37)]{f}, depending on the extra
parameters ${\textrm x}_1,{\textrm x}_2$, when setting $x_1=s, x_2=-s-n+2$, $y_1=-{\textrm x}_1, y_2=-{\textrm x}_2$.
\end{Rem}

\begin{Rem}\label{rem:quadratic-Lax-all-parities}
According to Remark~\ref{rem:different parities osp}, the Lax matrix $\mathbf{L}_{x_1,x_2}(x)$ from
Theorem~\ref{prop:quadL}(a), the Lax matrix $\mathbf{L}(x)$ from Proposition~\ref{prop:quadratic-degenerate-specialized},
and its generalization for odd $N$ from Conjecture~\ref{conj:odd-N} give rise to the corresponding degenerate quadratic
Lax matrices for any other $\BZ_2$-grading satisfying $|v_1|=\bar{0}$. Likewise, the Lax matrices
$\mathcal{L}_{x_1,x_2}(x)$ from Proposition~\ref{prop:nondeg-quadratic-small} and $\mathcal{L}_{x_1,x_2,y_1,y_2}(x)$
from Theorem~\ref{prop:nondeg-quadratic-big} give rise to the corresponding non-degenerate quadratic Lax matrices for
all $\BZ_2$-gradings of $V$.
\end{Rem}

\begin{Rem}\label{rem:normalized-limit-osp-quadratic}
Similarly to Remarks~\ref{rem:Atype-normalized-limits},~\ref{rem:normalized-limit-osp}, we can obtain
the degenerate Lax matrices \eqref{eq:quadl22} and \eqref{eq:quadlaxl5} from the non-degenerate Lax matrix
\eqref{Matrix Example D4 a-osc factroized5} through the \emph{renormalized limit} procedure:
\begin{equation}\label{eq:limit1}
\begin{split}
  & \mathbf{L}(x+x_1) =
    \lim_{x_2\to \infty}\ \Big\{\mathcal{L}_{x_1,x_2}(x)\cdot \mathrm{diag}
    \Big(1,\underbrace{x_2^{-1},\ldots,x_2^{-1}}_{N+2m-2},x_2^{-2}\Big)\Big\}\Big|_{\um_1 \mapsto \um \,,\, \up_1\mapsto \up} \,, \\
  & \hat{\tilde{\mathbf{L}}}(x+x_2) = \lim_{x_1\to \infty}\ \Big\{\mathrm{diag}
    \Big(x_1^{-2},\underbrace{x_1^{-1},\ldots ,x_1^{-1}}_{N+2m-2},1\Big)
    \cdot\mathcal{L}_{x_1,x_2}(x)\Big\}\Big|_{\um_1 \mapsto -\um \,,\, \up_1\mapsto -\up} \,.
\end{split}
\end{equation}
Analogously, for even $N=2n$, the degenerate Lax matrices \eqref{eq:quadl21} and \eqref{eq:quadlaxl5} can be obtained as the
\emph{renormalized limits} of the full non-degenerate quadratic Lax matrix $\mathcal{L}_{x_1,x_2,y_1,y_2}(x)$ from \eqref{Matrix Example D4 a-osc factroizede5}.
\end{Rem}

%

%

\appendix

\section{Twists for transfer matrices and Q-operators}\label{sec:twists}
In analogy to $BCD$-type spin chain transfer matrices, see~\cite[(8.69, 9.19)]{fkt}, the diagonal
twists $D$ for the finite-dimensional transfer matrices should be of the following form:
\begin{itemize}

\item
For $N=2n$
\begin{equation}\label{eq:T-twist-even}
  D=\diag \Big(\tau_1,\ldots,\tau_n,\tau_{n+1},\ldots,\tau_{n+m},\tau_{n+m}^{-1},\ldots,\tau_{n+1}^{-1},\tau_{n}^{-1},\ldots,\tau_{1}^{-1}\Big)
\end{equation}

\item
For $N=2n+1$
\begin{equation}\label{eq:T-twist-odd}
  D=\diag \Big(\tau_1,\ldots,\tau_n,\tau_{n+1},\ldots,\tau_{n+m},1,\tau_{n+m}^{-1},\ldots,\tau_{n+1}^{-1},\tau_{n}^{-1},\ldots,\tau_{1}^{-1}\Big)
\end{equation}

\end{itemize}

On the other hand, the twist $D_{\ossc}$ used to construct $Q$-operators from the monodromy matrices
\begin{equation*}
  M(x)=\underbrace{L(x)\otimes \cdots \otimes L(x)}_{N\ \mathrm{times}}
\end{equation*}
of degenerate Lax matrices $L(x)$ is determined through the following \emph{invariance condition}:
\begin{equation}
  D L(x) D^{-1} = D^{-1}_{\ossc} L(x) D_{\ossc} \,,
\end{equation}
cf.~\cite[(8.68, 9.18)]{fkt}. Here, the twist $D$ of~(\ref{eq:T-twist-even},~\ref{eq:T-twist-odd})
acts only on the matrix space, while the twist $D_{\ossc}$ acts only on the oscillator space.
We thus derive the following explicit formulas:
\begin{itemize}

\item
For $N=2n$ and the linear Lax matrix of \eqref{eq:linlaxdeg}, $D_{\ossc}$ is given by
\begin{equation}\label{eq:Q-twist-1}
  D_{\ossc}=
  \left(\prod_{1\leq i< j\leq n }(\tau_i\tau_j)^{-\oad_{ij'}\oa_{j'i}}\right)
  \left(\prod_{n+1\leq i\leq j \leq n+m }(\tau_i\tau_j)^{-\obd_{ij'}\ob_{j'i}}\right)
  \left(\prod_{i=1}^n\prod_{j=n+1}^{n+m}(\tau_i\tau_j)^{-\ocd_{ij'}\oc_{j'i}}\right)
\end{equation}
which is a mixture of the corresponding $D$-type and $C$-type formulas~\cite[(8.66, 8.90)]{fkt}

\medskip
\item
For $N=2n$ and the quadratic Lax matrix of~(\ref{eq:quadl22},~\ref{eq:quadlaxl41}), $D_{\ossc}$ is given by
\begin{equation}\label{eq:Q-twist-2}
  D_{\ossc}=
  \tau_1^{-\sum_{j=2}^{n}(\oad_j\oa_j+\oad_{j'}\oa_{j'})-\sum_{j=n+1}^{n+m}(\ocd_j\oc_j+\ocd_{j'}\oc_{j'})}
  \left(\prod_{j=2}^{n}\tau_j^{\oad_j\oa_j-\oad_{j'}\oa_{j'}}\right)\left(\prod_{j=n+1}^{n+m}\tau_j^{\ocd_j\oc_j-\ocd_{j'}\oc_{j'}}\right)
\end{equation}
which is an analogue of~\cite[(9.16)]{fkt}

\medskip
\item
For $N=2n+1$ and the quadratic Lax matrix of Conjecture~\ref{conj:odd-N}, $D_{\ossc}$ is alike given by
\begin{equation}\label{eq:Q-twist-3}
\begin{split}
  & D_{\ossc}=\\
  & \tau_1^{-\oad_{n+m+1}\oa_{n+m+1}-\sum_{j=2}^{n}(\oad_j\oa_j+\oad_{j'}\oa_{j'})-\sum_{j=n+1}^{n+m}(\ocd_j\oc_j+\ocd_{j'}\oc_{j'})}
  \left(\prod_{j=2}^{n}\tau_j^{\oad_j\oa_j-\oad_{j'}\oa_{j'}}\right)\left(\prod_{j=n+1}^{n+m}\tau_j^{\ocd_j\oc_j-\ocd_{j'}\oc_{j'}}\right)
\end{split}
\end{equation}

\end{itemize}


\end{document}